\newtheorem{thm}{Theorem}[section]
\newtheorem{prop}[thm]{Proposition}
\newtheorem{cor}[thm]{Corollary}
\newtheorem{lem}[thm]{Lemma}
\newtheorem{defn}[thm]{Definition}
\newtheorem{rem}[thm]{Remark}
\newtheorem{ex}[thm]{Example}
\newtheorem{prob}[thm]{Problem}
\numberwithin{equation}{section}
\def\bC{{\mathbb C}}
\def\bP{{\mathbb P}}
\def\bR{{\mathbb R}}
\def\bS{{\mathbb S}}
\def\C{{\mathbb C}}
\def\R{{\mathbb R}}
\def\Z{{\mathbb Z}}
\def\cB{{\mathcal B}}
\def\cE{{\mathcal E}}
\def\cF{{\mathcal F}}
\def\cG{{\mathcal G}}
\def\cI{{\mathcal I}}
\def\cJ{{\mathcal J}}
\def\cL{{\mathcal L}}
\def\cM{{\mathcal M}}
\def\cO{{\mathcal O}}
\def\cR{{\mathcal R}}
\def\cS{{\mathcal S}}
\def\cU{{\mathcal U}}
\def\cV{{\mathcal V}}
\def\Hom{{\rm Hom}}
\def\fR{{\mathfrak R}}
\def\fV{{\mathfrak V}}
\title[Gabor frames and higher boundaries]{Gabor frames and higher dimensional boundaries in signal analysis on manifolds}
\author{Vasiliki Liontou, Matilde Marcolli}
\address{Department of Mathematics, University of Toronto,  ON  M5S 2E4, Canada}
\email{vasiliki.liontou@mail.utoronto.ca}
\address{Department of Mathematics and Department of Computing and Mathematical Sciences, 
California Institute of Technology, Pasadena, CA 91125, USA}
\email{matilde@caltech.edu}
\begin{document}
\maketitle

\begin{abstract}
We provide a construction of Gabor frames that encode local linearizations of a signal detected on a curved smooth manifold of arbitrary dimension, with Gabor filters that can detect the presence of higher-dimensional boundaries in the manifold signal. We describe an application in configuration spaces in robotics 
with sharp constrains.
The construction is a higher-dimensional generalization of the geometric setting 
developed for the study of signal analysis in the visual cortex. 
\end{abstract}

\section{Introduction}
\label{intro}

Gabor frames are an important tool for signal analysis, that allows for efficient
encoding and decoding of signals using filters that do not form an orthonormal basis
but can still perform some of the most important functions of classical Fourier analysis
and at the same time can optimize localization in both the position and frequency 
variables, minimizing uncertainty (for a general review of Gabor frames for signal
analysis, see \cite{Groch}). The construction of Gabor frames on a vector space $\R^n$
relies on the choice of a window function $\phi \in L^2(\R^n)$, usually assumed to be
of rapid decay, such as a multivariate Gaussian, and a lattice $\Lambda\subset \R^{2n}$
that provides the translation and modulation operators (time-frequency shifts) $\pi_\lambda=M_b T_a$,
for $\lambda=(a,b)\in \R^n\times\R^n$, that act on the window function by
$$ \pi_\lambda \phi(t)=M_b T_a  \phi (t)=e^{-2\pi i t\cdot b} \phi(t-a) \, . $$
The Gabor system $\cG(\Lambda,\phi)$ consists of the collection of $L^2(\R^n)$-functions (Gabor filters)
$$ \cG(\Lambda,\phi)=\{ \pi_\lambda \phi \,|\, \lambda\in \Lambda \}\, . $$
A Gabor system constructed in this way satisfies the frame condition if there are constants $C,C'>0$ such
that, for all $f\in L^2(\R^n)$,
$$ C\, \| f \|^2_{L^2(\R^n)} \leq \sum_{\lambda\in \Lambda} | \langle f , 
\pi_\lambda \phi \rangle |^2 \leq C' \, \| f \|^2_{L^2(\R^n)} \, . $$
In general, whether a Gabor system satisfies the frame condition is a very subtle property
that depends crucially on the lattice, and also depends on the choice of the test function.
In the case of Gabor systems in one-dimension with a Gaussian window function and
lattices of the form $\Lambda= \alpha \Z +\beta \Z$ the frame condition 
can be completely characterized as in  \cite{Lyu}, \cite{Sei} as the set 
$\{ (\alpha,\beta)\in \R_+^2 \,|\, \alpha\beta<1\}$. This result, for the same class of lattices, 
was extended to a much larger class of window functions in  \cite{GroSto}. In general,
the fundamental question of Gabor analysis, to identify, for a given window function
$\phi$ the set of lattices $\Lambda \in \R^{2n}$ for which $\cG(\Lambda,\phi)$ is a frame
is widely open. Some results for higher dimensional Gabor systems were obtained in
\cite{Groch2}, \cite{GroLyu}. 

\smallskip

In two-dimensions with Gaussian window function, an especially useful aspect of Gabor 
filters is their ``sensitivity to direction", which makes them especially useful in boundary
detection in image analysis.  This property, together with the observation that Gabor
filters well match the shape of receptor profiles in the $V_1$ visual cortex lead to
mathematical models of the connectivity and the signal analysis of the visual cortex in
terms of Gabor frames, \cite{Pet}, \cite{SaCiPe}, \cite{LiMa}. Another interesting
aspect emerges from the use of Gabor analysis in models of the visual cortex,
namely the question of signal analysis through Gabor filters applied to signals that
naturally live on curved smooth manifolds rather than on flat linear spaces. In the case
of the curved retinal surface, this leads to an interesting interplay between the contact
geometry of the bundle of contact elements of a curved surface, which models the
hypercolumn geometry of the $V_1$ visual cortex, and an associated family of
Gabor frames that carry out signal analysis on a consistent choice of local linearizations. 

\smallskip

In this paper, we plan to generalize the geometric setup developed for
the study of the visual cortex and address a broader type of question.

\begin{prob}\label{Problem}{\rm  
Given a signal on a curved smooth manifold of dimension $n$, 
provide a construction of Gabor frames that encodes
consistent local linearizations
(local mapping to vector spaces) with Gabor filters that 
are adapted to the detection of higher-dimensional boundaries. }
\end{prob}

By higher dimensional boundaries we mean here $(n-1)$-dimensional
hypersurfaces inside the $n$-dimensional manifold where the signal
has either a jump discontinuity or undergoes very rapid change (a 
smooth approximation to a jump discontinuity). 

\smallskip

As a motivation for this problem, consider the following type of application,
that will be described more in detail in \S \ref{ApplSec} below. 

\smallskip

Consider a mechanism, such as a robot $R$, whose possible movements in the ambient $3$-dimensional
space are parameterized by a configuration space $\cM(R)$, which is usually describable
as a manifold of some higher dimension $N=\dim \cM(R)$. For a general account of geometric
and topological robotics see for instance \cite{Farb}, \cite{Selig}. The configuration space 
$\cM(R)$ describes the possibilities and the constraints on motion that are intrinsic to the
mechanism itself. In addition to that, one may need to consider further constraints that
come from the interaction with the environment, and which manifest themselves as 
exclusion-zones that cut off certain regions of the configuration space that are not achievable
under the given environmental constraints. Moreover, one can also consider soft-constraints 
given by probability distributions over the manifold $\cM(R)$, which instead of realizing rigid constraints
excluding parts of the configuration space, give a degree of preference for certain motions and
configurations over others, for example for the purpose of motion planning. 

\smallskip

If one considers probability distributions over the manifold $\cM(R)$ that are absolutely continuous
with respect to the volume form given by an assigned Riemannian metric on $\cM(R)$, one can
identify such measures with their density function $\mu: \cM(R)\to \R_+$. This can be thought
of as a higher dimensional ``signal" that lives on the non-flat manifold $\cM(R)$. The case of
a sharp constraint excluding parts of the configuration space corresponds to the case where
$\mu$ is a (normalized) characteristic function of a subregion. 

\smallskip

One can see how in such a setting it is useful to be able to efficiently encode and transmit
the datum $\mu: \cM(R)\to \R_+$ using appropriate signal-analysis filters, and how an
especially important part of the information contained in the density function $\mu$ is
the location of sharp codimension-one boundaries (walls delimiting the accessible
region in configuration space). 

\smallskip

This problem exhibits all the characteristics described in more general terms above:
the signal $\mu$ lives on a higher-dimensional space $\cM(R)$ that is not a flat linear
space but a curved manifold. The nontrivial topological and geometric structure of $\cM(R)$
is important, as it describes the a priori possibilities of motion, so it needs to be taken
into account in the signal analysis process. The filters for signal analysis, on the other
hand, necessarily live on a flat linear space (or more generally a locally compact 
abelian group), where the convolution products happen that generate
the coefficients encoding the signal. The configuration space $\cM(R)$ in general
does not carry such group or vector space structure. Thus, over the curved manifold $\cM(R)$ one needs
a consistent system of linearizations and filters that can be used to encode $\mu$, in a
way that also accuonts for the underlying geometry of $\cM(R)$. Moreover, one needs the
choice of filters to be suitable for detecting the presence of $(N-1)$-dimensional hypersurfaces
in $\cM(R)$ that constitute sharp boundaries in the signal $\mu$. In the following section we
describe a general procedure for approaching this problem.

\section{Geometry of higher-dimensional signal analysis}\label{HigherSignal}

In this section we construct a system of Gabor filters associated to a smooth Riemannian $n$-dimensional
manifold $B$, using the associated manifold $M$ of contact elements as the
parameterizing space for the Gabor filters, where the fibers of the bundle projection
from the manifold of contact elements to the base manifold $B$ encode
the possible choices of $(n-1)$-dimensional hyperplanes in the (co)tangent
spaces of $B$, which provide the boundary sensitivity property of the Gabor
frames. Signals on $B$ are lifted to signals on a 
bundle $\cE$ over $M$ given by the pullback of the tangent bundle $TB$ along
the projection $M\to B$. Passing from a signal on $B$ to a signal on $TB$ amounts 
to replacing $B$ by its local linearizations and storing the signal as a consistent collection of
lifts to these linearizations. Further passing from $TB$ to its pullback $\cE$ on $M$ maintains
the same fibers where the signal is stored, but allows for sections that have
a further dependence on the variables in the fiber $\bS^{n-1}$ of the
fibration $\bS^{n-1}\hookrightarrow M \to B$, that is, for filters with a boundary-detection
capacity. The window function for the filter construction incorporate this dependence 
on hyperplanes in the form of a twist of a multivariate Gaussian by a phase factor that depends
on a point in the fibers $\bS^{n-1}$, that is, on an oriented hyperplane.
The lattice that determines the Gabor
filters from the window function is obtained in the form of a bundle of lattices
in the bundle $\cE\oplus \cE^\vee$, which is constructed using the contact
geometry of the $(2n-1)$-dimensional manifold $M$ of contact elements, and
the $SO(n)$-action on the cosphere bundle of $B$.

\subsection{The manifold of contact elements}

We first discuss some general facts of Riemannian geometry that will be
useful for our main construction.

\smallskip

Given any smooth manifold $B$ of arbitrary dimension $n$, the cotangent bundle $T^*B$
is equipped with a canonical Liouville $1$-form $\lambda$,
which is intrinsically defined by  
$$\lambda_{(b,p)}(v)=p_b(d\pi(v))$$
for vectors $v\in T_{(b,p)}(T^*B)$, with $\pi: T^*B\rightarrow B$ the projection map.

\smallskip

Suppose $B$ is also equipped with a Riemannian metric $g$.  
Let $\mathbb{S}(T^*B)$ denote the cosphere bundle, that is, the 
unit sphere bundle of the cotangent bundle $T^*B$.
We write $\pi_\bS:\mathbb{S}(T^*B)\rightarrow B$ for the 
induced projection map. When no confusion arises, we will just use $\pi$ for both
the projection $\pi_\bS$ on $\mathbb{S}(T^*B)$ and the 
projection $T^*B\to B$. Each
fiber $\pi^{-1}(b)\subset \mathbb{S}(T^*B)$, for $b\in B$, is isomorphic to the unit 
sphere $\mathbb{S}^{n-1}$ 
through a (non canonical) isomorphism $j_b:\pi^{-1}(b)\rightarrow \mathbb{S}^{n-1}$. 

\smallskip

The 1-form $\alpha$ on $\mathbb{S}(T^*B)$, induced by the Liouville form $\lambda$, is a contact $1$-form. 

\begin{defn}\label{Mcontactelts}
The $(2n-1)$-dimensional manifold $M=\mathbb{S}(T^*B)$, endowed with the contact form $\alpha$,
is the {\rm manifold of contact elements} of the $n$-dimensional $B$.
\end{defn}

The manifold $M$ can be understood as describing, for each point in the base manifold $B$,
all the possible choices of a ``hypersurface direction" passing through that point
(that is, of an oriented tangent hyperplane). Indeed the collection of all hyperplanes
in $\R^n$ is parameterized by the real projective space $\bP^{n-1}(\R)=\R^n\smallsetminus \{0\}/\R^*$,
while the oriented hyperplanes in $\R^n$ are parameterized by the covering of $\bP^{n-1}(\R)$ given
by the sphere $\bS^{n-1}$. Since $B$ is a Riemannian manifold, the Riemannian metric $g_B$ provides
a (non-canonical) isomorphism between tangent and cotangent bundles 
$$ g_B : TB \stackrel{\simeq}{\to} T^* B\, , \ \ \ \ \  g_B: v \mapsto g_B(v,\cdot) \, . $$
Thus, we can interpret the fiber $\bS^{n-1}$ of $M=\bS(T^*B)$ at a point $b\in B$ 
as parameterizing  hyperplanes in either $T^* _b B$ or $T_bB$.

\subsubsection{Action by rotations}
The group $SO(n)$ of orientation preserving orthogonal transformations of $(\mathbb{R}^n, \langle - , - \rangle )$  acts transitively on $\mathbb{S}^{n-1}\subset \mathbb{R}^n$,  
$$ SO(n)\times \mathbb{S}^{n-1}\rightarrow \mathbb{S}^{n-1} \, ,  \ \ \ \ \ 
(A,p_1,...,p_n)\mapsto A\cdot (p_1,...,p_n) \, . $$
Suppose $B$ is oriented, then $F_{SO(n)}(T^*B)$ denotes the principal $SO(n)$-bundle of positively oriented orthonormal frames of $T^*B$ with respect to the bundle metric
\begin{equation*}
    g^*_b(p_1,p_1)=g_b(g_B^{-1}p_1,g_b^{-1}p_2)\, , \ \ \text{ for }p_1,p_2 \text{ in }T^*_bB \, .
\end{equation*}

The unit cotangent bundle $\mathbb{S}(T^*B)$ is the bundle associated to the action of $SO(n)$ on $\mathbb{S}^{n-1}$ 
\begin{equation*}
   F_{SO(n)}(T^*B)\times_{SO(n)} \mathbb{S}^{n-1} / \, SO(n) =\mathbb{S}(T^*B).
\end{equation*}
The bundle $F_{SO(n)}(T^*B)$ admits a left action of the group $SO(n)$, 
$$SO(n)\rightarrow {\rm Aut}(F_{SO(n)}(T^*B))\, ,\ \ \ \ \ A\mapsto \nu_A$$
Then, each map $\nu_A\times id:F_{SO(n)}(T^*B)\times \mathbb{S}^{n-1}\rightarrow F_{SO(n)}(T^*B)\times\mathbb{S}^{n-1}$, with $A\in  SO(n)$, induces a map on the quotients,
$$F_A:\mathbb{S}(T^*B)\rightarrow\mathbb{S}(T^*B) \, , \ \ \ \ \ 
(b,p)\mapsto (b,A^{-1}(p))\, , $$
therefore one can consider the corresponding action of $SO(n)$ on the sphere bundle $\mathbb{S}(T^*B)$, 
given by 
$$SO(n)\times \mathbb{S}(T^*B)\rightarrow \mathbb{S}(T^*B)\, , \ \ \ \ \ \ 
(A,(b,p))\mapsto F_A(b,p)=(b, A^{-1}(p))\, . $$

\begin{lem}
The map $$SO(n)\rightarrow {\rm End}(\mathbb{S}(T^*B))\, , \ \ \ \ \ 
A\mapsto F_A$$ is a smooth right group action of $SO(n)$ on $\mathbb{S}(T^*B)$. 
Additionally, the orbits of $SO(n)$ are exactly the fibers of $\pi_{\mathbb{S}}:\mathbb{S}(T^*B)\rightarrow B$.
\end{lem}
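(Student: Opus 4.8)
The plan is to read off all three assertions from the explicit description $F_A(b,p)=(b,A^{-1}(p))$ obtained just above, invoking the presentation $\bS(T^*B)=F_{SO(n)}(T^*B)\times_{SO(n)}\bS^{n-1}$ only to upgrade set-theoretic statements to smooth ones, and the transitivity of $SO(n)$ on $\bS^{n-1}$ (recalled in the excerpt) for the orbit claim. First I would record the algebra. From $F_A(b,p)=(b,A^{-1}p)$ one computes $(F_{A'}\circ F_A)(b,p)=(b,(A')^{-1}A^{-1}p)=(b,(AA')^{-1}p)=F_{AA'}(b,p)$, together with $F_I=\id_{\bS(T^*B)}$ for the identity $I\in SO(n)$. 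Writing $x\cdot A:=F_A(x)$, this says $(x\cdot A)\cdot A'=F_{A'}(F_A(x))=F_{AA'}(x)=x\cdot(AA')$, which is precisely the axiom of a \emph{right} action; the inversion $A\mapsto A^{-1}$ built into the formula is exactly what turns the left $SO(n)$-action $\nu$ on $F_{SO(n)}(T^*B)$ into a right action on $\bS(T^*B)$. In particular $F_{A^{-1}}$ is a two-sided inverse of $F_A$, so, granting smoothness, each $F_A$ is a diffeomorphism.

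Next I would establish smoothness. Let $q\colon F_{SO(n)}(T^*B)\times\bS^{n-1}\to\bS(T^*B)$ denote the quotient projection; it is a surjective submersion. By construction $F_A\circ q=q\circ(\nu_A\times\id)$, so $q\circ(\nu_A\times\id)$ is smooth and constant along the fibres of $q$, hence descends to the smooth map $F_A$. For joint smoothness one runs the same argument one level up: the map $SO(n)\times F_{SO(n)}(T^*B)\times\bS^{n-1}\to\bS(T^*B)$, $(A,e,v)\mapsto q(\nu_A(e),v)$, is smooth — the $SO(n)$-action $\nu$ on the frame bundle being smooth — and is constant along the fibres of the surjective submersion $\id_{SO(n)}\times q$ (this is the same compatibility of $\nu_A\times\id$ with the defining $SO(n)$-quotient that made $F_A$ well defined in the first place), so it descends to the smooth action map $SO(n)\times\bS(T^*B)\to\bS(T^*B)$.

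Finally I would identify the orbits. Since the second coordinate alone is altered, $\pi_\bS\circ F_A=\pi_\bS$, so the $SO(n)$-orbit of any $x\in\bS(T^*B)$ is contained in the single fibre $\pi_\bS^{-1}(\pi_\bS(x))$. Conversely fix $b\in B$ and use the isomorphism $j_b\colon\pi_\bS^{-1}(b)\to\bS^{n-1}$ to transport the action to $\bS^{n-1}$: the formula $(b,p)\mapsto(b,A^{-1}p)$ shows the transported action is the standard linear $SO(n)$-action on $\bS^{n-1}$, which is transitive. Hence each fibre of $\pi_\bS$ is a single orbit, and the orbits are exactly the fibres, as claimed.

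The group-action identities and the orbit computation are formal; the one step requiring genuine care is the smoothness, where the actual content is the routine but not-quite-automatic fact that a smooth map out of the total space of a surjective submersion which is constant along its fibres descends to a smooth map — applied once to produce $F_A$ and once for joint smoothness — all the while keeping the two roles of $SO(n)$ (structure group of $F_{SO(n)}(T^*B)$ versus the group that acts through $\nu$) and the inversion $A\mapsto A^{-1}$ consistently bookkept.
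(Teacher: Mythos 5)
Your proposal is correct and follows essentially the same route as the paper: the antihomomorphism/right-action identity by direct computation from $F_A(b,p)=(b,A^{-1}p)$, smoothness by descending the smooth action on $F_{SO(n)}(T^*B)\times\bS^{n-1}$ through the quotient map $q$, and the orbit identification from transitivity of $SO(n)$ on $\bS^{n-1}$. If anything, your version is slightly more careful than the paper's, spelling out the submersion-descent step for joint smoothness and the containment of orbits in fibres via $\pi_\bS\circ F_A=\pi_\bS$.
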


\begin{proof}
 Firstly, $SO(n)$ acts smoothly on $F_{SO(n)}(T^*B)\times \mathbb{S}^{n-1}$ by $A\mapsto \nu_A\times id$ and the quotient map $$q:F_{SO(n)}(T^*B)\times \mathbb{S}^{n-1}\rightarrow \mathbb{S}^{n-1}(T^*B)\, ,\ (f,p)\mapsto [(f,p)]$$ is smooth. Therefore, the action map $$SO(n)\times\mathbb{S}(T^*B)\rightarrow \mathbb{S}(T^*B)\, ,\  (A, [(f,p)])\mapsto F_A([f,p])=q\circ (\nu_A\times id)(f,p)$$ 
 is smooth.
 
 To prove that the map $A\mapsto F_A$ is an antihomomorphism we consider the following: 
 for $A_1, A_2\in SO(n)$ and $(b,p)\in \mathbb{S}(T^*B)$,
$$F_{A_1A_2}(b,p)=(b,(A_1A_2)^{-1}(p))=(b,A_2^{-1}A_1^{-1}(p))=F_{A_2}F_{A_1}(b,p)\, .$$ 

 Finally, since the action of $SO(n)$ on $F_{SO(n)}\times \bS^{n-1}$ is transitive and $q$ is surjective, the orbits of $SO(n)$ on $\bS(T^*B)$ are exactly the fibers $\pi^{-1}(b)$ for all $b\in B$. 
\end{proof}

\subsection{Multiple contact forms}
The above action of $SO(n)$ induces an action on the sections of the cotangent bundle of $\mathbb{S}(T^*B)$ 
$$SO(n)\times \Gamma( T^*(\mathbb{S}(T^*B)))\rightarrow \Gamma(T^*(\mathbb{S}(T^*B)))$$
$$(A,\alpha)\mapsto F_A^* \alpha \, ,$$
where by $\Gamma( T^*(\mathbb{S}(T^*B)))$ we denote the space of smooth co-vector fields. The maps $F_A$ have the property  $\pi \circ F_A= \pi$. Abusing the notation slightly, we are going to consider $F_A$ to be $F_A=id\times A$ for the proof of the following lemma.  

\begin{lem}\label{FAalpha}
For any  map $F_A:\mathbb{S}(T^*B)\rightarrow \mathbb{S}(T^*B)$, with $A\in SO(n)$, 
the standard contact form $\alpha$ on $\mathbb{S}(T^*B)$ has the property that 
$$(F_A^* \alpha)_{(b,p)}(X_{(b,p)})=\alpha_{(b,Ap)}(X_{(b,Ap)}), ~X\in \mathfrak{X}(\bS(T^*B))$$  and $F_A^* \alpha$ is a contact $1$-form on 
$\mathbb{S}(T^*B)$.
\end{lem}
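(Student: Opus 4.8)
The plan is to handle the two assertions in turn, the pointwise formula for $F_A^*\alpha$ and then the fact that this $1$-form is again contact. For the formula I would simply unwind the definition of the pullback together with the intrinsic description of $\alpha$ as the restriction to $M=\bS(T^*B)$ of the Liouville $1$-form, so that $\alpha_{(b,p)}(Y)=p_b\big(d\pi(Y)\big)$ for $Y\in T_{(b,p)}M$, with $\pi=\pi_{\bS}$. Using the convention $F_A=\mathrm{id}\times A$ fixed just before the lemma, the definition of the pullback gives
\[
(F_A^*\alpha)_{(b,p)}(X_{(b,p)})=\alpha_{(b,Ap)}\big(dF_A\cdot X_{(b,p)}\big)=(Ap)_b\big(d\pi(dF_A\cdot X_{(b,p)})\big),
\]
and the one geometric input is the relation $\pi\circ F_A=\pi$ recorded above, whose differential gives $d\pi\circ dF_A=d\pi$. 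Hence the right-hand side equals $(Ap)_b\big(d\pi(X_{(b,p)})\big)$, which is precisely $\alpha_{(b,Ap)}(X_{(b,Ap)})$ once the symbol $X_{(b,Ap)}$ is read as $dF_A\cdot X_{(b,p)}$; equivalently, in a local trivialization $U\times\bS^{n-1}$ the map $F_A$ is $\mathrm{id}\times A$, the form $\alpha$ sees a tangent vector only through its $U$-component, and $dF_A$ leaves that component unchanged.

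For the second assertion, the first step is to note that each $F_A$ is a diffeomorphism of $M$: it is smooth by the preceding lemma, and the antihomomorphism identity $F_{A_1A_2}=F_{A_2}F_{A_1}$ gives $F_A\circ F_{A^{-1}}=F_{A^{-1}}\circ F_A=\mathrm{id}_M$, so that $F_{A^{-1}}$ is a smooth inverse. Since $\dim M=2n-1$ and $\alpha$ is contact, $\alpha\wedge(d\alpha)^{n-1}$ is a nowhere-vanishing $(2n-1)$-form. Because pullback commutes with $d$ and with $\wedge$,
\[
F_A^*\big(\alpha\wedge(d\alpha)^{n-1}\big)=(F_A^*\alpha)\wedge\big(d(F_A^*\alpha)\big)^{n-1},
\]
and a diffeomorphism pulls a nowhere-vanishing top form back to a nowhere-vanishing top form; hence the right-hand side is a volume form on $M$ and $F_A^*\alpha$ is a contact $1$-form. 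This is just the standard invariance of the contact condition under diffeomorphisms, applied to the maps $F_A$.

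I do not expect a real obstacle. The only point that calls for care is the bookkeeping in the first identity: one has to be explicit that $\alpha_{(b,p)}$ depends on a tangent vector only through $d\pi$ of it, so that the informal notation $X_{(b,Ap)}$ in the statement is read componentwise in a local trivialization, or equivalently as the pushforward $dF_A\cdot X_{(b,p)}$. After that is pinned down, the first identity is immediate from $\pi\circ F_A=\pi$ and the contact property follows from diffeomorphism-invariance.
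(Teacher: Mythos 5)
Your proposal is correct and follows essentially the same route as the paper: both unwind the pullback using the intrinsic description of $\alpha$ as the restriction of the Liouville form (so that $\alpha_{(b,p)}$ sees a tangent vector only through $d\pi$ of it), both reduce the first identity to $\pi\circ F_A=\pi$ and hence $d\pi\circ dF_A=d\pi$, and both deduce the contact property from the fact that $F_A$ is a diffeomorphism. Your handling of the notational point --- reading $X_{(b,Ap)}$ as the pushforward $dF_A\cdot X_{(b,p)}$ --- is in fact more careful than the paper's, which leaves that identification implicit.
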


\begin{proof}
 The contact 1-form $\alpha$ on $\mathbb{S}(T^*B)$ is induced by the Liouville form $\lambda$ on $T^*B$ as 
 $\iota^*(\lambda_{(b,1,p)})=\alpha_{(b,p)}$, through the inclusion $$\iota: \bS(T^*B)\rightarrow T^*B,  
 ~(b_1,...,b_n,p_2,...,p_n)\mapsto (b_1,...,b_n,1,p_2,...,p_n).$$ Then, if $A$ is any $A\in SO(n)$ and 
  $F_A: \mathbb{S}(T^*B)\rightarrow  \mathbb{S}(T^*B)$ its induced map, then  
 for $X\in T_{(b,p)}\mathbb{S}(T^*B)$ we have
\begin{equation}\label{pullback}
(F_A^* \alpha)_{(b,p)}(X_{(b,p)})=\alpha_{F(b,p)}(dF_A X_{\mid_{(b,p)}})=\iota^*(\lambda_{(b,A(\iota(p)))})(dF_A X_{\mid_{(b,p)}})\end{equation}
Additionally, the diagram 
\begin{center}
\begin{tikzcd}
\mathbb{S}(T^*B) \arrow[r, "\iota"] \arrow[d, "\pi_{\bS}"] 
& T^*B \arrow[d, "\pi"] \\ 
B \arrow[r, "="]
&  B
\end{tikzcd} 
\end{center}
commutes and therefore $\iota^*\circ\pi^*=\pi_\bS^*$ and \eqref{pullback} becomes
$$\iota^*(\pi^*(A(\iota(p))_b))(dF_A X_{\mid_{(b,p)}})=\pi_\bS^*(A(\iota(p)_b)( dF_A X_{\mid_{(b,p)}})
=(A(\iota(p))_b)(d\pi_\bS dF_A X_{\mid_{(b,p)}}) $$ $$ =(A(\iota(p))_b)(d\pi_\bS X_{\mid_{(b,Ap)}})
=\pi_\bS^*(A(\iota(p))_b)(X_{(b,Ap)})
=\iota^*\pi^*(A(\iota(p))_b))(X_{(b,Ap)}) $$ 
$$ =\iota^*(\lambda_{(b,A(i(p)))})(X_{(b,Ap)})
=\alpha_{F_A(b,p)}(X_{(b,Ap)})\, .$$
Moreover, since $F_A$ is a diffeomorphism for every $A \in SO(n)$ it follows that $F^*_A a$ is a contact 1-form. 
\end{proof}

 \smallskip

\begin{prop}\label{Fialpha}
On the co-sphere bundle $M=\mathbb{S}(T^*B)$ of an $n$-dimensional manifold $B$ 
there is a collection $\{ F^*_i \alpha  \}_{i=1}^{n-1}$  of $1$-forms 
which satisfy the conditions:
\begin{enumerate}
    \item All the $F^*_i\alpha$, $i=1,\ldots, n-1$, are contact $1$-forms on $M$, with $F^*_0\alpha=\alpha$.
    \item At each point $(b,p)\in M$, the co-vectors 
    $$\alpha_{(b,p)} ,F^*_1\alpha_{(b,p)},F_2^*\alpha_{(b,p)}, \ldots ,F^*_{n-1}\alpha_{(b,p)}$$ 
    are orthogonal with respect to the inverse Riemannian metric tensor on $B$ (the Riemannian
    metric on $T^*_{(b,p)}B$).
    \item The fibers $\{\pi^{-1}(b), b\in B\}$ of the fiber bundle $\pi:\mathbb{S}(T^*B)\rightarrow B$ are 
    Legendrian submanifolds in each of the contact distributions $\xi_i$ induced by the contact $1$-form $F^*_i\alpha$, for each $i=0,\ldots, n-1$.
    \item\label{linind} If $V\rightarrow M$ is the vertical tangent bundle of $
    \pi:M\rightarrow B$, $\Gamma(V)$ denotes the vertical vector fields of $M$ and $\mathfrak{R}_{F^*_i a}$ is the Reeb field of the contact 1-form $F^*_ia$ for $i=0,...,n$, then $span\{\mathfrak{R}_{F^*_ia}\}\cap \Gamma(V)=
    \{0\}$. Moreover,  for every
$m\in M$ the vectors $\{\mathfrak{R}_{F^*_i a}(m): i=0,...,n-1\}$ are linearly independent.  
     
\end{enumerate} 
  \end{prop}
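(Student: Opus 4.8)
The plan is to realize the forms $F^*_i\alpha$ as pullbacks of the standard contact form $\alpha$ under maps $F_{R_i}$ coming from the $SO(n)$-action of the previous lemma, for a well-chosen tuple of rotations $R_1,\dots,R_{n-1}\in SO(n)$ with $R_0=\mathrm{id}$, and then to verify the four properties in turn; properties (1) and (3) will be essentially formal consequences of Lemma~\ref{FAalpha} and of the identity $\pi\circ F_A=\pi$, while (2) and (4) carry the geometric content and are linked. Concretely, over a trivializing open set $U\subseteq B$ I would fix a positively oriented orthonormal coframe $(\theta^1,\dots,\theta^n)$ of $T^*B$, use it to identify the fiber with the standard $\mathbb{S}^{n-1}$, and take $R_i$ to be the rotation by $\pi/2$ in the $2$-plane $\langle e_1,e_{i+1}\rangle$, setting $F_i:=F_{R_i}$. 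The guiding elementary fact, generalizing the $n=2$ picture from the visual-cortex models where a single quarter-turn sends a unit covector to an orthogonal one, is that if a unit covector is $e_1$ in a suitable orthonormal coframe then $e_1,R_1^{-1}e_1,\dots,R_{n-1}^{-1}e_1$ is an orthonormal basis of that coframe's dual.

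Property (1) is immediate from Lemma~\ref{FAalpha}: each $F_i=F_{R_i}$ is a diffeomorphism of $M$, so $F_i^*\alpha$ is again a contact $1$-form, and $F_0^*\alpha=\alpha$. For (3), note that for the inclusion $\iota_b:\pi^{-1}(b)\hookrightarrow M$ of a fiber one has $\iota_b^*\alpha=0$, since $\alpha_{(b,p)}(X)=p(d\pi X)$ and $d\pi$ annihilates vertical vectors; hence $\iota_b^*d\alpha=d(\iota_b^*\alpha)=0$, the tangent spaces $T_p\pi^{-1}(b)$ lie in $\ker\alpha_{(b,p)}$, and since $\dim\pi^{-1}(b)=n-1=\tfrac12\dim\ker\alpha_{(b,p)}$ the fiber is an integral submanifold of maximal dimension, i.e.\ Legendrian for $\xi_0$. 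Because $\pi\circ F_i=\pi$, each $F_i$ maps $\pi^{-1}(b)$ to itself, so $\iota_b^*(F_i^*\alpha)=(F_i|_{\pi^{-1}(b)})^*\iota_b^*\alpha=0$ and the identical argument gives that $\pi^{-1}(b)$ is Legendrian for every $\xi_i$.

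For (2) and (4) I would first record, exactly as in the proof of Lemma~\ref{FAalpha} and using $\pi\circ F_i=\pi$, that $(F_i^*\alpha)_{(b,p)}(X)=(R_i^{-1}p)(d\pi X)$, i.e.\ $(F_i^*\alpha)_{(b,p)}$ is the image under $d\pi^*$ of the covector $R_i^{-1}p\in T^*_bB$; thus the orthogonality claim (2) is equivalent to $\{p,R_1^{-1}p,\dots,R_{n-1}^{-1}p\}$ being orthogonal in $(T^*_bB,g^*_b)$, which for the rotations above is the linear-algebra fact isolated in the first paragraph once everything is read in a coframe adapted to $p$. For (4), the Reeb field of $F_i^*\alpha$ is the pushforward $(F_i^{-1})_*\mathfrak{R}_\alpha$ of the Reeb field of $\alpha$, which on $M=\mathbb{S}(T^*B)$ is the standard (co)geodesic vector field; using $\pi\circ F_i=\pi$ once more, $d\pi\big(\mathfrak{R}_{F_i^*\alpha}(b,p)\big)=d\pi\big(\mathfrak{R}_\alpha(F_i(b,p))\big)=g_B^{-1}(R_i^{-1}p)$. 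Hence $d\pi$ sends $\mathfrak{R}_{F_0^*\alpha},\dots,\mathfrak{R}_{F_{n-1}^*\alpha}$ to $g_B^{-1}(p),g_B^{-1}(R_1^{-1}p),\dots,g_B^{-1}(R_{n-1}^{-1}p)$, which are linearly independent by (2); this gives pointwise linear independence of the Reeb fields, and since any nonzero combination of them has nonzero $d\pi$-image it cannot be vertical, which is precisely $\mathrm{span}\{\mathfrak{R}_{F_i^*\alpha}\}\cap\Gamma(V)=\{0\}$.

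The step I expect to be the real obstacle is property (2), or rather the scope of the construction behind it: a single fixed tuple $R_1,\dots,R_{n-1}\in SO(n)$ cannot make $\{p,R_1^{-1}p,\dots,R_{n-1}^{-1}p\}$ orthonormal for every $p\in\mathbb{S}^{n-1}$ simultaneously, since that would furnish a global orthonormal frame of $T\mathbb{S}^{n-1}$, which fails for general $n$. I would therefore carry out the construction over trivializing neighborhoods, choosing on each a local orthonormal coframe of $T^*B$ and, along the sphere fibers, a local orthonormal frame of the orthogonal complement of the tautological covector; the maps $F_i$ are then defined only relative to these auxiliary choices, and (2) is checked pointwise in the adapted coframe. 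The genuinely invariant content — that each $F_i^*\alpha$ is contact, that the fibers are Legendrian for all the induced distributions, and that the Reeb fields are pointwise independent and transverse to the vertical bundle — does not depend on these choices and is what remains to be verified carefully.
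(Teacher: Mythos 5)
Your proposal is correct and follows the paper's strategy for items (1)--(3): the paper likewise sets $F_i=F_{R_i}$ for rotations $R_i$ chosen, at a point $(b,p)$ and using transitivity of the $SO(n)$-action on the fiber, so that $p,R_1p,\dots,R_{n-1}p$ are orthogonal in $T_b^*B$; it gets contactness from Lemma~\ref{FAalpha} and the Legendrian property from the computation $(F_i^*\alpha)_{(b,p)}(v)=(R_ip)(d\pi\, dF_i v)=0$ for vertical $v$, exactly as you do. Where you genuinely diverge is item (4): the paper writes $F_i^*\alpha=\sum_j a_{i,j}(q)\,db_j$ in canonical coordinates and uses $\iota_{\fR_{F_i^*\alpha}}dF_i^*\alpha=0$ to show the $\partial_q$-components of each Reeb field vanish, which gives transversality to the vertical bundle but leaves the claimed pointwise linear independence of the $\fR_{F_i^*\alpha}$ essentially unargued. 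Your route, via the naturality $\fR_{F^*\alpha}=(F^{-1})_*\fR_\alpha$ together with $d\pi(\fR_\alpha(b,p))=g_B^{-1}(p)$ for the cogeodesic field, yields both conclusions at once, since the $d\pi$-images are the mutually orthogonal vectors $g_B^{-1}(R_ip)$; this is cleaner and actually supplies the missing independence argument. You are also right that (2) is the delicate point: the paper's proof chooses the $R_i$ depending on the point $(b,p)$ (not just on a chart), so the $F_i^*\alpha$ are only locally defined, and a choice continuous in $p$ along a whole fiber would trivialize $T\bS^{n-1}$; your explicit restriction to trivializing neighborhoods, with the auxiliary local coframe made part of the data, is the careful reading of what the proposition can assert, whereas the paper leaves the smoothness and globality of its pointwise choice unaddressed.
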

  
\begin{proof} Let $(b,p)$ a point in $M$ such that $p$ is in the fiber $
\pi^{-1}(b)$ over $b\in B$ and $(U,
\Phi)$ a coordinate chart around $(b,p)$. Since the action of $SO(n)$ on the fiber $\pi^{-1}(b)$ is transitive, there exist $R_1,...,R_{n-1}\in SO(n)$ such that $\{
\Phi(p),\Phi(F_{R_1}(p)),...,\Phi(F_{R_{n-1}}(p))\}$ is an orthogonal basis of $\bR^n$  with respect to the metric induced by the Riemannian metric on $T^*_{(b,p)}B$ and $\Phi^{-1}:\bR^n\rightarrow U$. The $1$-forms $F_i^*\alpha=F_{R_i}^*\alpha$, for $i=0,\ldots ,n-1$,  that are induced by these rotations are mutually orthogonal contact forms, since by Lemma~\ref{FAalpha} we have that  
\begin{equation}\label{Fialphas}
F_i^* \alpha_{(b,p)} :=\alpha_{(b,R_i(p))}=\pi^*_{(b,p)}(R_i(p)).
\end{equation}
Now, we need to show that the fibers $\{\pi^{-1}(b),~b\in B\}$ are Legendrian submanifolds in all the contact distributions $\xi_i$ associated to the contact forms $F_i^* \alpha$. Consider a vector
$v_p\in T_p \pi^{-1}(b)$. Written in canonical local coordinates $(U,\Phi(b,p)=(b_1,...,b_n, q_1,...,q_{n-1}))$, $v_p$ takes the form
$$ v_p=f_1(b,p)\partial_{q_1}+...+f_{n-1}(b,p)\partial_{q_{n-1}}.$$
Thus, $$F_i^*\alpha_{(b,p)}(v)=\alpha_{(b,R_ip)}(dR_i v)=\pi^*(R_i p)(dR_i v)=R_i p(d\pi (dR_i v))
=R_i p(0)=0\, . $$
for all $i=0,...,n-1$ and therefore $T_p \pi^{-1}(b)\subset {\rm Ker}(F_i^*\alpha)$ for all $i=0,\ldots,n-1$.

Finally, it is left to prove part \ref{linind}. On the coordinate chart $(U, (b_1,...,b_n, q_1,...,q_{n-1}))$ 
any contact $1$-form $F_i^* \alpha$ can be written in local coordinates as
$$F_i^*\alpha= a_{i,1}(q_1,...,q_{n-1})\, db_1+ \cdots + a_{i,n}(q_1,...,q_{n-1})\, db_n\, ,$$
since $$\alpha=q_1\,db_1+...+\sqrt{1-q_1^2- \cdots -q_{n-1}^2}\,db_n \ \ \ \text{ and } \ \ \ $$ $$ (a_{i,1},\ldots a_{i,n})=R_i(q_1,\ldots, q_{n-1}, \sqrt{1-q_2^2- \cdots -q_{n-1}^2}). $$ Thus, from the defining property 
of the Reeb field,  it follows that 
$$ \iota_{\fR_{F^*_i \alpha}} dF^*_i \alpha(X)=0\ \ \text{for all vector fields } X \text{ in } TM\, .$$
Writing the Reeb fields $\mathfrak{R}_{F^*_i a}$ in coordinates $\fR_{F^*_i \alpha}=R^{b_1}_i\, \partial b_1+ \cdots +R^{b_{n-1}}_i\, \partial b_{n-1}+R^{q_1}_i \,\partial q_1+ \cdots +R^{q_n}_i\, \partial q_n$
it follows that 
$$\begin{pmatrix}\frac{\partial a_{i,1}}{\partial q_1}&...& \frac{\partial a_{i,1}}{\partial q_{n-1}}\\
\vdots &\ddots &\vdots\\
\frac{\partial a_{i,n}}{\partial q_1}&...&\frac{\partial a_{i,n}}{\partial q_{n-1}}
\end{pmatrix}\begin{pmatrix}R_{a_i}^{q_1}\\
\vdots\\
R_{a_i}^{q_{n-1}}\end{pmatrix}=0,$$
 and therefore 
$$ \fR_{F^*_i \alpha} =R^{b_1}_i \, \partial b_1+ \cdots +R^{b_n}_i\, \partial b_n. $$
\end{proof}

\subsection{The bundle of signal spaces}

A signal defined over $\mathbb{R}^n$ is a function $f:\mathbb{R}^n\rightarrow \mathbb{R}$ which belongs to some function space, for example the Hilbert space $L^2(\R^n)$ with the Lebesgue measure. 
Signal analysis is performed through a family of filters $\{ \phi_\alpha \}_{\alpha\in A}$ which are integrated against the signal to obtain corresponding coefficients
$$ c_{\alpha}=\int \phi_\alpha f \, . $$
When the family of filters satisfies certain conditions, such as the frame condition
of Gabor filters or the case of orthonormal bases, the signal can be fully reconstructed 
from this set of coefficients $c_\alpha$.

\smallskip

The setting for signal analysis through Gabor filters requires the underlying
linear space structure, as the Gabor system of filters is constructed via the
linear operations of translation and modulation. When the signal itself
is defined on an $n$-dimensional manifold $B$ rather than on a vector space $\R^n$,
applying Gabor signal analysis requires the use of local linerizations of the
manifold $B$, over which one can construct Gabor filters. These linearizations
are provided by the tangent spaces (translation coordinates) and their duals
(modulation coordinates). A special case of this kind of Gabor signal analysis on
manifolds, motivated by models of the visual cortex in neuroscience, 
was introduced in \cite{LiMa}, in the case where $B$ is a 
$2$-dimensional Riemann surface. We generalize here the construction to
arbitrary dimensions. The key observation of \cite{LiMa} is that it is important
to maintain the distinction between the  coordinates of the curved manifolds $B$
and $M=\bS(T^*B)$ and the linear coordinates in the fibers of the bundles
$TB$ and $T^*B$. Making this distinction precise geometrically requires introducing
a suitable {\em bundle of signal spaces}. This generalizes the bundle of
signal planes introduced in \cite{LiMa}.

\smallskip

\begin{defn}\label{defEEvee}
The bundle of signal spaces $\mathcal{E}$ is the real $n$-space bundle on the contact $(2n-1)$-dimensional
manifold $M=\mathbb{S}(T^*B)$ obtained by pulling back the tangent bundle $TB$ 
of the $n$-dimensional manifold $B$ to $M$ along the projection 
$\pi_\bS :\mathbb{S}(T^*B)\rightarrow B$ of the unit sphere bundle of $T^*B$,
\begin{equation}\label{Ebundle}
    \mathcal{E}: =\pi_\bS^* \, TB \, .
\end{equation}
Let $\mathcal{E}^\vee$ denote the dual bundle of $\mathcal{E}$
\begin{equation}
    \mathcal{E}^\vee=\Hom(\mathcal{E}, \mathbb{R})
\end{equation}
\end{defn}

\smallskip

The real vector bundle $\cE$ of rank $n$ and its dual determine a rank $2n$ vector bundle
over the $2n-1$ dimensional manifold $M$, given by their direct sum
\begin{equation}\label{EoplusEvee}
\cE \oplus \cE^\vee \, . 
\end{equation}
The spatial frequencies (modulation operators) are represented by the fiber coordinates of the 
dual bundle $\mathcal{E}^\vee$, while the translation operators are provided by
the fiber coordinates of the bundle $\cE$, see \S \ref{WindowSec} below. 

\smallskip

The $L^2$ space $L^2(\mathcal{E},\mathbb{R})$ is 
determined by the condition
\begin{equation}\label{signalE}
\Big(\int_M \int_{\mathcal{E}_{(b,p)}} \mathcal{I}^2(u,b,p) \, dvol_{\mathcal{E}_{(b,p)}}(u) \, 
dvol_M(b,p)\Big)^{1/2}< \infty, 
\end{equation}
where $(b,p)$ are the local coordinates of $M$, and $u=(u_1,...,u_n)$ are the coordinates of the fibers 
$\mathcal{E}_{(b,p)}$. The norm on the fibers $dvol_{\mathcal{E}_{(b,p)}}$ is induced by the inner product on $TB$ through the pullback map and $dvol_M$ is the measure induced by the Riemannian volume form on $M$
determined by the Riemannian metric on $B$ and the round metric on the fibers $\bS^{n-1}$
of $M=\bS(T^* B)$.

\smallskip

\begin{defn}\label{defsignal}
A signal is a function $$\mathcal{I}:\mathcal{E}\rightarrow \mathbb{R},$$
on the bundle of signal spaces, 
with $\mathcal{I}\in L^2(\mathcal{E},\mathbb{R})$.
\end{defn}

\smallskip

Given a Riemannian manifold $B$, the exponential map is a {\em locally defined} map
$\exp : TB \to B$ from the tangent bundle of $B$ to the manifold $B$, where fiberwise
$\exp_b: T_b B \to B$ is obtained by considering, for a vector $v\in T_b B$ the
unique geodesic $\gamma_v$ in a neighborhood of $b$ in $B$ starting at $b$ with tangent vector $v$
and setting $\exp_b(v)=\gamma_v(1)$. The domain of definition of $\exp_b$ in $T_b B$
is a sufficiently small ball $B(0,R)$ around $0\in T_b B$ such that for all $v\in B(0,R)$ the
point $\gamma_v(1)\in B$ is uniquely determined by the existence and uniqueness theorem
applied to solutions of the geodesic equation for the Riemannian metric in $B$. The exponential
map is defined on all of $T_b B$, for all $b\in B$, iff the manifold $B$ is geodesically complete. 

\smallskip

At a given point $b\in B$ let $R_{inj}(b) >0$ denote the supremum of all the radii $R>0$ such that the  
exponential map $\exp_b$ is a diffeomorphism on the ball $B(0,R)$ of radius $R$ in 
$T_b B$ to its image in $B$. For a compact manifold $B$, this determines a continuous 
{\em injectivity radius function} $R_{inj}: B \to \R_+^*$. We denote by $\cB(T_b B)$
the ball $B(0,R_{inj}(b))$ in the tangent space $T_b B$. Under the pullback from $B$
to $M$ we obtain a collection of balls of radius $R_{inj}(b)$ in each fiber $\cE_{(b,p)}$
with $(b,p)\in M$. We denote these balls by $\cB(\cE_{(b,p)})$. 

\smallskip

\begin{lem}\label{MsignalEsignal}
Let $B$ be a compact smooth manifold.
Let $f: B \to \R$ be a smooth function (or more generally a function in
$L^\infty(B)$ with the measure given by the volume form
of the Riemannian metric). Then $f$ determines a signal $\cI(f)\in L^2(\mathcal{E},\mathbb{R})$,
with the property that $f$ can be recovered from the restrictions $\cI(f) |_{\cB(\cE_{(b,p)})}$.
\end{lem}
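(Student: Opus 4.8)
The plan is to lift $f$ to $\cE$ by post-composing with the fiberwise Riemannian exponential map of $B$ and truncating inside the injectivity-radius balls. Concretely: since $B$ is compact, the injectivity radius $R_{inj}:B\to\R_+^*$ is continuous and bounded, say $R_{inj}\le R_1$. Fixing a smooth cutoff $\chi:[0,\infty)\to[0,1]$ with $\chi\equiv 1$ on $[0,1/2]$ and $\chi\equiv 0$ on $[1,\infty)$, I would define, for $(b,p)\in M$ and $u$ in the fiber $\cE_{(b,p)}\cong T_bB$,
$$\cI(f)(u,b,p):=\chi\!\left(\tfrac{|u|_b}{R_{inj}(b)}\right)\,f\big(\exp_b(u)\big),$$
understood to be $0$ whenever $|u|_b\ge R_{inj}(b)$; note this does not depend on $p$ (dependence on the hyperplane variable will enter later through the choice of Gabor window, not through the lift of $f$).

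First I would check that this is a well-defined function on the total space $\cE$. The set $\{(b,p,u)\in\cE:|u|_b<R_{inj}(b)\}$ is open, because $R_{inj}$ and the fiber norm are continuous; on it $u\mapsto\exp_b(u)$ is a diffeomorphism of $\cB(T_bB)$ onto an open neighborhood of $b$ depending smoothly on $b$, so the expression is intrinsic, independent of local trivializations of $\cE$, and it extends by $0$ across the closed complement. Hence $\cI(f)$ is Borel measurable (well defined up to null sets when $f\in L^\infty(B)$), and if $f$ is smooth then $\cI(f)$ is smooth, since $\chi$ makes it vanish before the boundary of $\cB(T_bB)$, where $\exp_b$ could fail to be a diffeomorphism. (If one does not care about smoothness of the lift one may instead cut off sharply at $R_{inj}(b)$, so that $\cI(f)|_{\cB(\cE_{(b,p)})}$ is literally $f\circ\exp_b$.)

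Next I would verify $\cI(f)\in L^2(\cE,\R)$ in the sense of \eqref{signalE}. By construction $\cI(f)(\cdot,b,p)$ is supported in the ball of radius $R_{inj}(b)\le R_1$ in $\cE_{(b,p)}$ and bounded there by $\|f\|_{L^\infty(B)}$, whence
$$\int_M\!\int_{\cE_{(b,p)}}\cI(f)(u,b,p)^2\,dvol_{\cE_{(b,p)}}(u)\,dvol_M(b,p)\ \le\ \|f\|_{L^\infty(B)}^2\,\omega_n R_1^n\,\mathrm{vol}_M(M),$$
where $\omega_n R_1^n$ is the volume of a radius-$R_1$ ball in an $n$-dimensional inner-product space and $\mathrm{vol}_M(M)<\infty$ because $M=\bS(T^*B)$, a sphere bundle over the compact manifold $B$, is itself compact. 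So $\cI(f)$ satisfies \eqref{signalE}.

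Finally, for recoverability, fix $b\in B$ and any $p\in\pi^{-1}(b)$. On $\cB(\cE_{(b,p)})$ one has $\cI(f)|_{\cB(\cE_{(b,p)})}(u)=\chi(|u|_b/R_{inj}(b))\,f(\exp_b(u))$, which equals $f(\exp_b(u))$ for $|u|_b\le R_{inj}(b)/2$; since $\exp_b$ restricts to a diffeomorphism of the radius-$R_{inj}(b)/2$ ball onto an open neighborhood $U_b\ni b$, one recovers $f|_{U_b}=\big(\cI(f)|_{\cB(\cE_{(b,p)})}\big)\circ\exp_b^{-1}$ (a pointwise identity when $f$ is continuous, in particular $f(b)=\cI(f)(0,b,p)$; an a.e.\ identity when $f\in L^\infty$, since $\exp_b$, being a diffeomorphism, maps null sets to null sets). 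As $b$ ranges over $B$ the $U_b$ cover $B$, so $f$ is uniquely determined by the family $\{\cI(f)|_{\cB(\cE_{(b,p)})}\}_{(b,p)\in M}$. I do not expect a genuine obstacle here; the one point to get right is the calibration of the truncation, which must simultaneously leave $f\circ\exp_b$ untouched on a neighborhood of each $b$ (for recoverability) and confine the fiberwise support to a ball of uniformly bounded volume (so that $L^2$-integrability follows from $f\in L^\infty$ together with compactness of $B$ and of $M$); rescaling a fixed bump by the continuous positive function $R_{inj}$ accomplishes both.
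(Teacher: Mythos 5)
Your proof is correct and follows essentially the same route as the paper: lift $f$ to $\cE$ via $f\circ\exp\circ h$, multiply by a fiberwise cutoff, bound the $L^2$ norm using compactness of $B$ and $M$, and recover $f$ on the region where the cutoff is identically $1$, using that $\exp_b$ is a diffeomorphism there. The only difference is the calibration of the cutoff: the paper takes $\chi\equiv 1$ on all of $\cB(\cE_{(b,p)})$ and rapidly decaying outside (hence it invokes geodesic completeness of the compact $B$ and the Schwartz decay to get integrability), whereas you support $\chi$ inside the injectivity ball with $\chi\equiv 1$ only on the half-radius ball, which reduces the $L^2$ estimate to a trivial volume bound at the harmless cost of recovering $f$ from slightly smaller neighborhoods $\exp_b(B(0,R_{inj}(b)/2))$, which still cover $B$.
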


\proof Consider then a smooth function $\chi: \cE \to \R$, such that the
restriction $\chi_{(b,p)}:=\chi|_{\cE_{(b,p)}}$ to the fiber $\cE_{(b,p)}$
is a rapidly decaying Schwartz function $\chi_{(b,p)}: \cE_{(b,p)} \to \R$,
which satisfies $\chi_{(b,p)}\equiv 1$ inside the ball $\cB(\cE_{(b,p)})$. 

Compact manifolds are geodesically complete, hence the exponential
map of $B$ is defined on the full tangent spaces, not just on a neighborhood
of the origin, hence the function $f: B \to \R$ determines by precomposition
a function $f\circ \exp: TB \to R$. Since $f$ is bounded (respectively,
essentially bounded) on $B$, the pullback is bounded (respectively,
essentially bounded) on $TB$.  

Consider then the pullback diagram
$$ \xymatrix{ \cE \ar[r]^h \ar[d]^{\pi_\cE} & TB \ar[d]^{\pi_{TB}} \\
M \ar[r]^{\pi_\cS} & B } $$
with $\pi_\bS: M \to B$, $\pi_{TB}: TB \to B$, and $\pi_\cE: \cE \to M$ the projections,
and define the function $\cI(f): \cE \to \R$ as 
$$ \cI(f) := \chi \cdot f \circ \exp \circ h \, . $$
The function $f \circ \exp \circ h$ is bounded (or essentially bounded) on the fibers 
$\cE_{(b,p)}$ and $\chi$ is rapidly decreasing, hence their product is in $L^2(\cE_{(b,p)})$.
This fiberwise $L^2$ norm varies smoothly with the point $(b,p)\in M$, and is bounded
on the compact manifold $M$, so that the integration along $M$ in \eqref{signalE} 
is also finite. Thus $\cI(f)\in L^2(\mathcal{E},\mathbb{R})$. 

Moreover, since we have $\chi_{(b,p)}\equiv 1$ inside the ball $\cB(\cE_{(b,p)})$,
the restrictions satisfy $\cI(f) |_{\cB(\cE_{(b,p)})}=f \circ \exp \circ h$. Since the
map $h: \cE\to TB$ is the fiberwise identification $\cE_{(b,p)}\simeq T_b B$, for
all $(b,p)\in \pi_\bS^{-1}(b)$, and the map $\exp\circ h$ is a diffeomorphism
when restricted to $\cB(\cE_{(b,p)})$ since $\exp$ is a diffeomorphism on 
$B(0,R_{inj}(b)) \subset T_b B$. Thus, the restriction of $f$ to $\exp(B(0,R_{inj}(b)))$ 
can be fully reconstructed from $\cI(f) |_{\cB(\cE_{(b,p)})}$ and the collection of
the functions $\cI(f) |_{\cB(\cE_{(b,p)})}$ fully determine $f$. 
\endproof

\smallskip

Lemma~\ref{MsignalEsignal} shows that it is equivalent to think of a signal
as a function $f: B \to \R$ defined over the $n$-dimensional manifold $B$,
or as the corresponding $\cI(f): \cE \to \R$, that is, as a function on the
bundle of signal spaces as in Definition~\ref{defsignal}. Passing from
$f: B \to \R$ to $\cI(f): \cE \to \R$ corresponds to replacing a signal on a
manifold with a consistent collection of signals on its local linearizations.

\begin{figure*}[btp]
\centering
\includegraphics[scale=0.4]{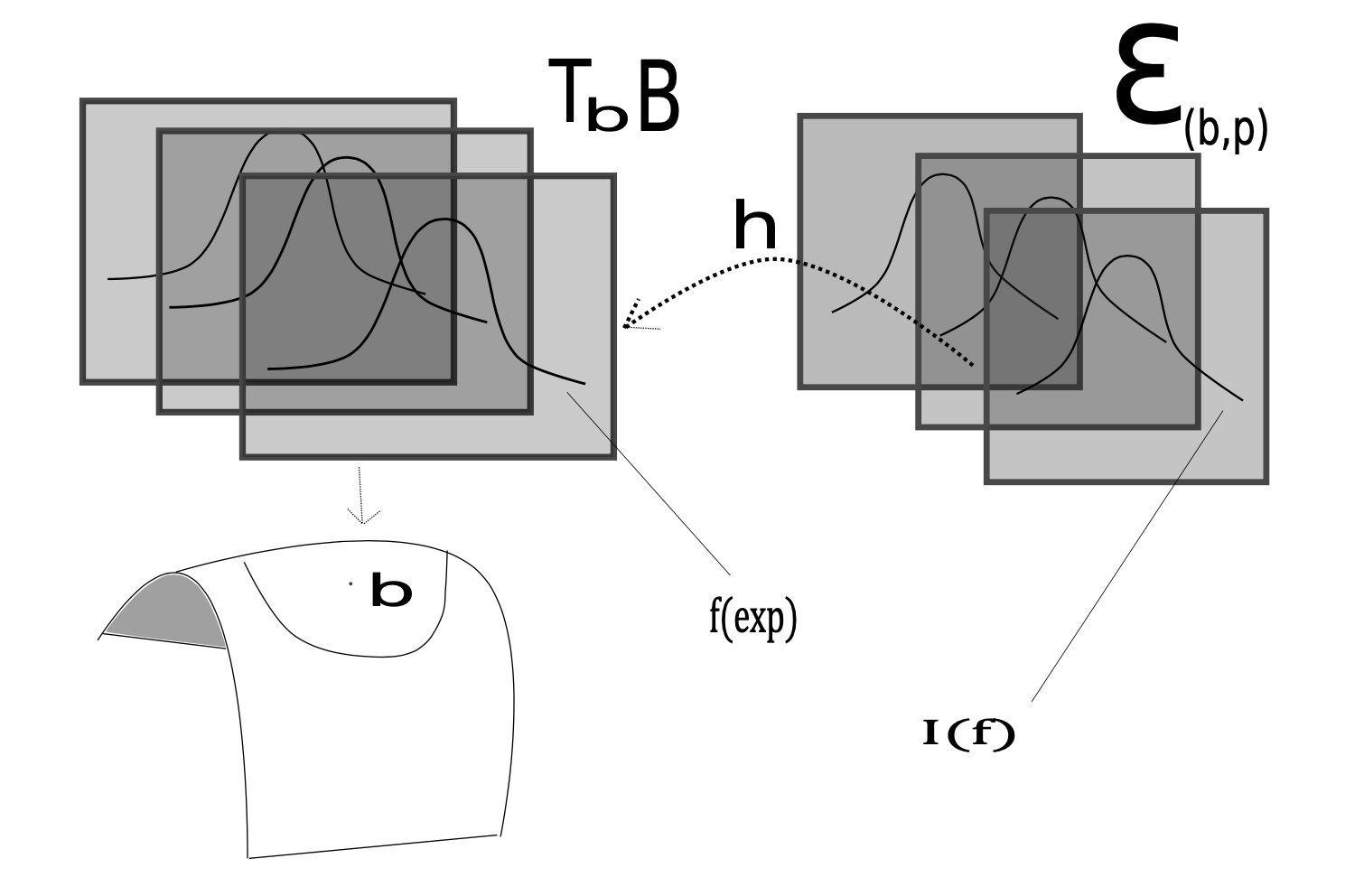}
\caption{\small Passing from
$f: B \to \R$ to $\cI(f): \cE \to \R$ corresponds to replacing a signal on a
manifold with a consistent collection of signals on its local linearizations.
\label{LiftedSignal}}
\end{figure*}

\newpage

\subsection{The lattice bundle}\label{LattSec}

We use the data of the signal bundle $\cE$ and its dual $\cE^\vee$, along with
the contact $1$-forms $F_i^* \alpha$ and their Reeb vector fields $\fR_{F_i^* \alpha}$
discussed above, to obtain an additional structure on the bundle $\cE\oplus \cE^\vee$
consisting of a ``lattice bundle", in the sense of the following definition.

\smallskip

\begin{defn}\label{latticebdle}
Given a vector bundle $\cF$ over a smooth manifold $M$, a framed lattice bundle
structure on $\cF$ is a subbundle $\Lambda \subset \cF$ with discrete fibers 
$\Lambda_m=\pi_\Lambda^{-1}(m)$, for $m\in M$, where $\pi_\Lambda
=\pi_\cF |_\Lambda$ is the restriction of the projection $\pi_\cF: \cF \to M$, with
the property that there is a dense open set $\cU \subset M$ such that, for all
$m\in \cU$ the fiber $\Lambda_m\subset \cF_m$ is a lattice in the vector space
$\cF_m=\pi_\cF^{-1}(m)$, endowed with a generating set (frame). 
\end{defn}

\smallskip

\begin{lem}\label{lemLambda}
The contact $1$-forms $\{ F_i^* \alpha \}_{i=0}^{n-1}$ on the manifold $M=\bS(T^*B)$ of an
$n$-dimensional smooth manifold $B$ determine a framed lattice bundle
structure on the rank $2n$ vector bundle $\cE \oplus \cE^\vee$ of \eqref{EoplusEvee}. 
\end{lem}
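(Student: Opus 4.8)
The plan is to build, from the data assembled in Proposition~\ref{Fialpha}, two local frames --- one of $\cE^\vee$ coming from the contact $1$-forms $F_i^*\alpha$ and one of $\cE$ coming from their Reeb fields $\fR_{F_i^*\alpha}$ --- and to let $\Lambda$ be the $\Z$-span of their union inside $\cE\oplus\cE^\vee$.

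First I would record the two natural bundle maps that turn objects living on $M$ into sections of $\cE$ and $\cE^\vee$. Since $\pi_\bS$ is a fiber bundle projection, $d\pi_\bS$ is a fiberwise surjection $TM\twoheadrightarrow\pi_\bS^*TB=\cE$ with kernel the vertical bundle $V$ (of rank $n-1$); dualizing gives a fiberwise injection $\cE^\vee=\pi_\bS^*T^*B\hookrightarrow T^*M$. By \eqref{Fialphas}, the value $F_i^*\alpha_{(b,p)}=\pi^*_{(b,p)}(R_i(p))$ is precisely the image under this injection of the covector $R_i(p)\in T^*_bB=\cE^\vee_{(b,p)}$, so each $1$-form $F_i^*\alpha$ is a genuine local section $\epsilon^i$ of $\cE^\vee$. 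By Proposition~\ref{Fialpha}(2) the $\epsilon^0,\dots,\epsilon^{n-1}$ are mutually orthogonal for the cometric, hence linearly independent in the rank-$n$ fibers of $\cE^\vee$, so $\{\epsilon^0,\dots,\epsilon^{n-1}\}$ is a local frame of $\cE^\vee$.

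Next I would obtain a local frame of $\cE$ from the Reeb fields by setting $e_i:=d\pi_\bS(\fR_{F_i^*\alpha})$ for $i=0,\dots,n-1$. By Proposition~\ref{Fialpha}(\ref{linind}) the $n$ vectors $\fR_{F_i^*\alpha}(m)$ are linearly independent and their span meets $V_m=\ker(d\pi_\bS)_m$ only in $0$, so $d\pi_\bS$ maps that $n$-dimensional span isomorphically onto the rank-$n$ fiber $\cE_m$; hence $\{e_0(m),\dots,e_{n-1}(m)\}$ is a basis of $\cE_m$ and $\{e_i\}$ is a local frame of $\cE$. Over any open $\cU\subset M$ carrying a single smooth choice of the rotations $R_0=\id,R_1,\dots,R_{n-1}$ the $2n$ sections $e_0,\dots,e_{n-1},\epsilon^0,\dots,\epsilon^{n-1}$ then form a smooth frame of the rank-$2n$ bundle $\cE\oplus\cE^\vee$ of \eqref{EoplusEvee}, and I would define
$$\Lambda_m:=\bigoplus_{i=0}^{n-1}\Z\,e_i(m)\ \oplus\ \bigoplus_{i=0}^{n-1}\Z\,\epsilon^i(m)\ \subset\ \cE_m\oplus\cE^\vee_m ,$$
a full-rank lattice with distinguished generating set for each $m\in\cU$, varying smoothly with $m$; over $M\smallsetminus\cU$ one extends $\Lambda$ by a discrete fiber (e.g.\ the zero subgroup) so that $\Lambda\subset\cE\oplus\cE^\vee$ is a subbundle with discrete fibers on all of $M$, verifying Definition~\ref{latticebdle}.

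The step I expect to be the actual obstacle is the global one: the frames $\{\epsilon^i\}$ and $\{e_i\}$ depend on the auxiliary rotations $R_i$, which are produced chart by chart in the proof of Proposition~\ref{Fialpha} and in general cannot be matched on overlaps --- $\cE$ need not be a trivial bundle --- so there is a genuine topological obstruction to a single global choice. What must be checked with care is therefore (i) that such rotations, equivalently a cometric-orthonormal frame extending the tautological unit covector $p$, can be chosen to depend smoothly on $(b,p)$ \emph{locally} (a Gram--Schmidt argument in a local trivialization of $\pi_\bS^*T^*B$), and (ii) that the set $\cU$ of points admitting a consistent smooth choice is open and dense in $M$ --- which is exactly the flexibility that Definition~\ref{latticebdle} is set up to allow. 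Granting $\cU$, the lattice and frame conditions over it follow at once from parts (2) and (\ref{linind}) of Proposition~\ref{Fialpha} as above.
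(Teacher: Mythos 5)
Your proposal is correct and follows essentially the same route as the paper: the paper's proof also takes $\Lambda$ to be the $\Z$-span of the Reeb fields $\fR_{F_i^*\alpha}$ viewed as sections of $\cE$ together with the forms $F_i^*\alpha$ viewed as sections of $\cE^\vee$, with full rank $2n$ only over a dense open set $\cU=\pi_\bS^{-1}(\cV)$ as permitted by Definition~\ref{latticebdle}. Your extra care in transporting the Reeb fields via $d\pi_\bS$ and in flagging that the rotations $R_i$ are only chosen chart by chart addresses points the paper passes over silently, but it does not change the argument.
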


\proof Consider the set of $1$-forms $\{ F_i^* \alpha \}_{i=0}^{n-1}$ as sections of $T^* B$,
and the set of associated Reeb vector fields $\fR_{F_i^* \alpha}$, as sections of $TB$.
Under the identification of the fibers 
$\cE_{(b,p)}\simeq T_b B$, for all $(b,p)\in \pi_\bS^{-1}(b)$, induced by the pullback map
$h: \cE \to TB$, we can identify the $\fR_{F_i^* \alpha}$ as sections of $\cE$ by
precomposition with the projection $\pi_\bS: M \to B$ and postcomposition with
the identification $T_b B \stackrel{\simeq}{\to} \cE_{(b,p)}$. Similarly, we can
identify the $F_i^* \alpha$ with sections of $\cE^\vee$. For simplicity, we maintain the same
notation $\fR_{F_i^* \alpha}$ and $F_i^* \alpha$ for these resulting sections without
writing the pullback maps explicitly. We can then consider the spans
\begin{equation}\label{LambdaEdef}
\begin{array}{l}
 \Lambda_\cE:= {\rm span}_\Z \{ \fR_{F_i^* \alpha} \} \subset \cE \, , \\[3mm]
 \Lambda_{\cE^\vee} := {\rm span}_\Z \{ F_i^* \alpha \} \subset \cE^\vee \, , \\[3mm]
 \Lambda:= \Lambda_\cE \oplus \Lambda_{\cE^\vee} \subset \cE \oplus \cE^\vee \, . 
\end{array}
\end{equation} 
The rank of the fibers $\Lambda_m$ is $2n$ on a dense open set $\cU=\pi_\bS^{-1}(\cV)$
where $\cV\subset B$ is the dense open set where none of the $1$-forms $F_i^* \alpha$ and
of the vector fields $\fR_{F_i^* \alpha}$ vanish. Thus, it is a framed lattice bundle
structure on $\cE \oplus \cE^\vee$ with frame given by the set 
$\{ \fR_{F_i^* \alpha}, F_i^* \alpha \}_{i=0}^{n-1}$.
\endproof

\smallskip

\begin{rem}\label{nodual}{\rm
Unlike the case where $B$ is a $2$-dimensional Riemann surface, considered 
in \cite{LiMa}, in this more general case the Reeb fields $\fR_{F_i^* \alpha}$
are in general not in the kernel of the forms $F_j^* \alpha$ with $j\neq i$, that is,
one does not have the ``dual-basis relation" 
$$  \langle F_j^* \alpha, \fR_{F_i^* \alpha} \rangle =\delta_{ij}. $$
}\end{rem}

\smallskip

It is possible to consider another framed lattice bundle
structure on $\cE \oplus \cE^\vee$, where the frame 
$\{ \fR_{F_i^* \alpha}, F_i^* \alpha \}_{i=0}^{n-1}$ of Lemma~\ref{lemLambda}
is replaced by another frame $\{ \fV_i , F_i^* \alpha \}_{i=0}^{n-1}$, 
where the sections $\fV_i$ of $\cE$ are taken to be the dual basis of the 
$F_i^* \alpha$ (seen as sections of $\cE^\vee$), so that the relation
$$  \langle F_j^* \alpha, \fV_i \rangle =\delta_{ij} $$
holds on the dense open set $\cU=\pi_\bS^{-1}(\cV)$, 
with $\cV\subset B$ the dense open set where all the $1$-forms $F_i^* \alpha$
are non-trivial.  In this case the lattice bundle structure on $\cE \oplus \cE^\vee$
is given by 
\begin{equation}\label{LambdaEdef2}
\begin{array}{l}
 \Lambda_\cE:= {\rm span}_\Z \{ \fV_i \} \subset \cE \, , \\[3mm]
 \Lambda_{\cE^\vee} := {\rm span}_\Z \{ F_i^* \alpha \} \subset \cE^\vee \, , \\[3mm]
 \Lambda:= \Lambda_\cE \oplus \Lambda_{\cE^\vee} \subset \cE \oplus \cE^\vee \, . 
\end{array}
\end{equation} 
with the frame $\{ \fV_i , F_i^* \alpha \}_{i=0}^{n-1}$.
We refer to the lattice bundle structure of \eqref{LambdaEdef} as the {\em Reeb
lattice bundle structure} and to \eqref{LambdaEdef2} as the {\em dual-basis 
lattice bundle structure}. 

\smallskip

We use the following notation for sections of lattice bundles.

\begin{defn}\label{notationLambda}
For a framed lattice bundle structure $\Lambda =\Lambda \oplus \Lambda^\vee$ on
a vector bundle $\cE\oplus \cE^\vee$ over $M$, we write $(\lambda,\lambda')$ for sections
$$ (\lambda,\lambda') \in \Gamma(M, \Lambda) \, , $$
that is, for sections of $\cE\oplus \cE^\vee$ with values in the discrete subbundle $\Lambda$.
\end{defn}

\subsection{The choice of window function}\label{WindowSec}

In the usual setting of time-frequency analysis of signals $\cI \in L^2(\R)$, the signal
analysis is performed through linear transform using a family of wavelets $\{\phi_\alpha\}_{\alpha\in A}$. 
These wavelets are commonly referred to as ``time-frequency atoms", because they play the role 
of building blocks for the signal. In the discrete case, the family of atoms is obtained through 
a discrete family $\{ M_w T_x \}_{\{(x,w)\}\subset \Lambda}$ of time-frequency shift operations 
(translations $T_x$ and modulations $M_w$)
$$ M_w T_x:  L^2(\mathbb{R}^d)\rightarrow L^2(\mathbb{R}^d)$$
$$\phi(t)\mapsto e^{-2\pi i t\cdot w}\phi(t-x) \, , $$
where $\Lambda\subset \R^{2n}$ is a discrete set (such as a lattice or a periodic set).
The wavelets are well concentrated in time and frequency, so they are often referred to as ``window functions".
A Gabor system $\cG(\phi,\Lambda)$, with a given choice of window function $\phi \in L^2(\R^n)$ and
lattice $\Lambda\subset \R^{2n}$, is the resulting collection of wavelets of the form
$$  \cG(\phi,\Lambda) := \{ M_w T_x \phi \,|\, (x,w)\in \Lambda \} \, . $$

\smallskip

In our setting, the construction of wavelets takes place on the $n$-dimensional linear spaces $\cE_{(b,p)}$
given by the fibers of the bundle $\cE$ of signal spaces introduced in Definition~\ref{defEEvee}, with
translation and modulation operators associated to the bundle of lattices introduced in \S \ref{LattSec}
above.  We describe here the appropriate choice of window function and the construction of the resulting
Gabor systems on the bundle $\cE$.

\smallskip

Let $V$ and $\eta$ denote the variables in the fibers $T_b B$ and $T^*_b B$ respectively,
for $b\in B$, with $\langle \eta,V \rangle$ denoting the duality pairing of $TB$ and $T^*B$. 

\begin{defn}\label{TBwindow}
A window function on the bundle $TB \oplus T^*B$ is a smooth real-valued 
function $\Phi : TB \oplus T^*B \to \R$ from the total space of the vector bundle, 
defined fiberwise as
\begin{equation}\label{PhiTB}
    \Phi_b (V,\eta)=\exp\big(-V^t A_b V-i \langle \eta,V \rangle_b \big) \, , 
\end{equation}
where $A$ is a symmetric, positive definite tensor field $A: B \rightarrow T^*B \otimes T^*B$, 
such that, for all points $b \in B$ the eigenvalues are uniformly bounded away from zero. 
\end{defn}

\begin{lem}\label{Ewindow}
The restriction of the function $\Phi$ of \eqref{PhiTB} to the bundle $TB\times \mathbb{S}(T^*B)$ 
induces a smooth function $\Psi: \cE \to \R$ from the total space of the bundle $\cE$ of signal spaces,
\begin{equation}\label{WindowFunction}
    \Psi_{(b,p)}(V)=exp\big(-V^tA_b V-i \langle \eta_p,V \rangle_b \big) \, ,
\end{equation}
where $\eta_p$ is just the point $p\in \bS^{n-1}\simeq \bS(T^*_b B)$ seen as a cotangent vector. 
\end{lem}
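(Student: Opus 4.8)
The plan is to exhibit $\Psi$ as the composite of $\Phi$ with a canonical smooth embedding $\cE\hookrightarrow TB\oplus T^*B$, so that smoothness of $\Psi$ is inherited from smoothness of $\Phi$ and of the embedding, and the fiberwise formula \eqref{WindowFunction} simply drops out.

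First I would make precise the identification of total spaces. By Definition~\ref{defEEvee} the bundle $\cE=\pi_\bS^*TB$ is the pullback of $TB$ along $\pi_\bS:M=\bS(T^*B)\to B$, so by the universal property of pullback bundles its total space is canonically the fiber product
$$ \cE \;=\; TB\times_B\bS(T^*B) \;=\; \{\,((b,p),V)\,:\,(b,p)\in M,\ V\in T_bB\,\}, $$
with $\pi_\cE$ the projection to $M$ and $h:\cE\to TB$ (the map in the pullback square of Lemma~\ref{MsignalEsignal}) the projection to $TB$, which fiberwise is the identification $\cE_{(b,p)}\simeq T_bB$. Next I would use that the cosphere bundle $\bS(T^*B)$ sits inside $T^*B$ as a smooth (closed) submanifold, cut out by $g^*(p,p)=1$, via the inclusion $\iota$ already used in the proof of Lemma~\ref{FAalpha}, and take fiber products over $B$ with $TB$ to obtain a smooth embedding
$$ \jmath\colon \cE=TB\times_B\bS(T^*B)\;\hookrightarrow\; TB\times_B T^*B = TB\oplus T^*B,\qquad ((b,p),V)\mapsto (V,\eta_p), $$
where $\eta_p:=\iota(p)\in T^*_bB$ is the unit covector underlying the point $p$; in the local coordinates $(b_1,\dots,b_n,q_1,\dots,q_{n-1})$ on $M$ of Proposition~\ref{Fialpha} one has $\eta_p=(q_1,\dots,q_{n-1},\sqrt{1-q_1^2-\cdots-q_{n-1}^2})$ in the cotangent fiber, so $\jmath$ is visibly smooth in base and fiber coordinates.

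Then I would set $\Psi:=\Phi\circ\jmath$. Because $A$ is a smooth symmetric positive-definite tensor field on $B$ (with eigenvalues bounded away from $0$), the map $(b,V)\mapsto V^tA_bV$ is smooth on $TB$, and the duality pairing $TB\oplus T^*B\to\R$, $(V,\eta)\mapsto\langle\eta,V\rangle_b$, is a smooth bundle map; hence $\Phi$ is smooth on $TB\oplus T^*B$ and, precomposed with the smooth embedding $\jmath$, gives a smooth function $\Psi:\cE\to\R$. Evaluating at $((b,p),V)$ yields exactly $\Psi_{(b,p)}(V)=\Phi_b(V,\eta_p)=\exp(-V^tA_bV-i\langle\eta_p,V\rangle_b)$, which is \eqref{WindowFunction}. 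I would also remark that no choice of the non-canonical trivialization $j_b:\pi^{-1}(b)\to\bS^{n-1}$ enters, since $\eta_p$ is intrinsically the covector $p$ itself, so $\Psi$ is canonically determined by $\Phi$.

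The content here is mostly bookkeeping, and the only place demanding a little care is the identification in the first two steps: one must check that $\cE=\pi_\bS^*TB$ is literally $TB\times_B\bS(T^*B)$ and that, under the smooth embedding $\bS(T^*B)\hookrightarrow T^*B$, this fiber product is a smooth submanifold of $TB\oplus T^*B$ to which $\Phi$ may be restricted. Once this is set up, smoothness of $\Psi$ is automatic. If useful for later sections I would also record, though it is not part of the present statement, that each restriction $\Psi_{(b,p)}$ is a Schwartz function on $\cE_{(b,p)}\simeq T_bB$, being a Gaussian modulated by a unimodular phase, with decay uniform in $(b,p)$ thanks to the uniform lower bound on the eigenvalues of $A$.
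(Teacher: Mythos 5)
Your proof is correct and follows essentially the same route as the paper's: both identify $\cE$ fiberwise with $TB$ via the pullback map $h$ and read the point $p\in\bS(T^*_bB)$ as the unit covector $\eta_p$, so that $\Psi$ is just $\Phi$ restricted along this identification. The paper states this in two sentences; you have merely made the fiber-product description and the smoothness of the resulting composite explicit, which is a faithful elaboration rather than a different argument.
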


\proof
As discussed above, the map $h: \cE \to TB$ gives an identification of the fibers 
$\cE_{(b,p)}\simeq T_b B$, for all $(b,p)\in \pi_\bS^{-1}(b)$. Thus, we can identify tha
variables $V$ as variables in the fibers $\cE_{(b,p)}$. The choice of a point $m=(b,p)\in M$
corresponds to the choice of a point $b\in B$ and a unit vector $\eta_p$ in the unit
cotangent sphere $\bS(T^*_b B)$.
\endproof

\smallskip

\begin{defn}\label{defGaborE}
A Gabor system $\cG(\Psi,\Lambda)$ on an $n$-dimensional smooth compact manifold $B$ is 
determined by the data of a smooth window function $\Psi: \cE \to \R$ that is of rapid decay
in the fiber directions $V\in \cE_{(b,p)}$, for all $(b,p)\in M$, and a lattice bundle $\Lambda$
over $M$. The Gabor system $\cG(\Psi,\Lambda)$ with these data consists of the collection
\begin{equation}\label{GaborE}
\cG(\Psi,\Lambda)=\{ M_{\lambda'} T_{\lambda} \Psi\,|\, (\lambda, \lambda') \in \Gamma(M,\Lambda) \} \, ,
\end{equation} 
where $(\lambda, \lambda')$ are sections as in Definition~\ref{notationLambda}. 
\end{defn}

\smallskip

Lemma~\ref{lemLambda} and Lemma~\ref{Ewindow} immediately imply the following
statement.

\begin{prop}\label{GaborsysELambda}
Let $B$ be an $n$-dimensional smooth compact manifold. The collection
$\{ F_i^* \alpha \}_{i=0}^{n-1}$ of contact $1$-forms on $M=\bS(T^*B)$ 
and the function $\Phi$ of \eqref{PhiTB} determine a Gabor system 
$\cG(\Psi,\Lambda)$  on $B$, with window function $\Psi$ as in \eqref{WindowFunction}
and with lattice bundle structure $\Lambda$ on $\cE\oplus \cE^\vee$ 
as in Lemma~\ref{lemLambda} (with either the Reeb
lattice bundle structure of \eqref{LambdaEdef} or the dual-basis 
lattice bundle structure of \eqref{LambdaEdef2}).
\end{prop}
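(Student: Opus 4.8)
The plan is to assemble Proposition~\ref{GaborsysELambda} purely as a bookkeeping statement: it asserts that the ingredients produced earlier in the section fit together to give a Gabor system in the sense of Definition~\ref{defGaborE}, so the proof is a matter of checking that each piece of data required by that definition has already been constructed, and with the stated properties. Concretely, Definition~\ref{defGaborE} asks for two things over the compact manifold $B$: (i) a lattice bundle $\Lambda$ on $\cE\oplus\cE^\vee$ over $M=\bS(T^*B)$, and (ii) a smooth window function $\Psi:\cE\to\R$ that is of rapid decay in the fiber directions $V\in\cE_{(b,p)}$.

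First I would dispatch (i) by invoking Lemma~\ref{lemLambda}: the collection $\{F_i^*\alpha\}_{i=0}^{n-1}$ of contact $1$-forms on $M$, which exists by Proposition~\ref{Fialpha}, determines via \eqref{LambdaEdef} a framed lattice bundle structure $\Lambda=\Lambda_\cE\oplus\Lambda_{\cE^\vee}\subset\cE\oplus\cE^\vee$, with frame $\{\fR_{F_i^*\alpha},F_i^*\alpha\}_{i=0}^{n-1}$ and with fibers of full rank $2n$ on the dense open set $\cU=\pi_\bS^{-1}(\cV)$. Alternatively, the same $1$-forms yield the dual-basis lattice bundle structure \eqref{LambdaEdef2} with frame $\{\fV_i,F_i^*\alpha\}_{i=0}^{n-1}$; both satisfy Definition~\ref{latticebdle}, so either choice provides the datum (i). I would simply cite these two constructions.

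Next I would dispatch (ii) by invoking Lemma~\ref{Ewindow}: the function $\Phi$ of \eqref{PhiTB}, built from a symmetric positive-definite tensor field $A$ with eigenvalues uniformly bounded away from zero, restricts to a smooth function $\Psi:\cE\to\R$ given fiberwise by \eqref{WindowFunction}. The one small point worth spelling out is the rapid-decay requirement: on each fiber $\cE_{(b,p)}\simeq T_bB$ one has $\Psi_{(b,p)}(V)=\exp(-V^tA_bV-i\langle\eta_p,V\rangle_b)$, whose modulus is $\exp(-V^tA_bV)$, a Gaussian; since the eigenvalues of $A_b$ are bounded below uniformly in $b$ by hypothesis, $\Psi_{(b,p)}$ is a Schwartz function on the fiber for every $(b,p)\in M$, with decay estimates uniform over the compact $M$. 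Hence $\Psi$ meets the rapid-decay condition of Definition~\ref{defGaborE}.

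Having verified (i) and (ii), the Gabor system $\cG(\Psi,\Lambda)=\{M_{\lambda'}T_\lambda\Psi\mid(\lambda,\lambda')\in\Gamma(M,\Lambda)\}$ of \eqref{GaborE} is then defined, using the sections of $\Lambda$ as in Definition~\ref{notationLambda} to supply the fiberwise translation and modulation operators on $\cE_{(b,p)}$. There is essentially no obstacle here beyond being careful that the two auxiliary constructions are genuinely independent of the extra choices they involve (the rotations $R_i$ in Proposition~\ref{Fialpha} and, in the dual-basis case, the non-vanishing locus $\cV$), which is already addressed in the cited lemmas; so the proof is the one-line observation that the statement is the conjunction of Lemma~\ref{lemLambda} and Lemma~\ref{Ewindow}, as the sentence preceding the proposition already indicates. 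If anything requires a word of care, it is only the uniform-in-$(b,p)$ Schwartz bound on $\Psi$, which I would note follows from compactness of $M$ together with the uniform ellipticity of $A$.
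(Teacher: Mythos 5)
Your proof matches the paper's: the proposition is stated there as an immediate consequence of Lemma~\ref{lemLambda} (supplying the lattice bundle) and Lemma~\ref{Ewindow} (supplying the window function), which is exactly the decomposition you use. Your added remark that the rapid-decay condition follows from the uniform lower bound on the eigenvalues of $A$ together with compactness of $M$ is a correct and worthwhile detail that the paper leaves implicit.
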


\smallskip

We can regard the Gabor system $\cG(\Psi,\Lambda)$ as a consistent
collection of Gabor systems $\cG(\Psi_m, \Lambda_m)$ in $L^2(\cE_m)$
for $m\in M$. In each fiber $\cE_m$ the wavelets in $\cG(\Psi_m, \Lambda_m)$
can be used to analyze the restriction $\cI|_{\cE_m}$ of a signal
$\cI: \cE \to \R$. 

\smallskip
\subsection{Boundary detection property}

The main reason why the construction of Gabor filters described in Proposition~\ref{GaborsysELambda}
uses the bundle $\cE$ over the manifold $M=\bS(T^*B)$ rather than the tangent bundle $TB$
over the manifold $B$ is in order to obtain filters that are especially suitable to detect 
$(n-1)$-dimensional boundaries
in a signal $f: B \to \R$ (lifted to a signal $\cI(f): \cE \to \R$ as in Lemma~\ref{MsignalEsignal}).

\smallskip

To see this property, it suffices to focus on a single fiber $\cE_m \simeq \R^n$. The
restriction of the window function $\Psi$ to this fiber is a rapid decay function of
the form \eqref{WindowFunction}, for $m=(b,p)$, where the unit cotangent vector $\eta_p$
parameterizes a choice of an oriented hyperplane in $\cE_m \simeq \R^n$. 

\smallskip

Consider a signal $f: B \to \R$ that is a characteristic function $f=\chi_U$ 
of a bounded open set $U \subset B$ with smooth boundary $\Sigma=\partial U$
given by an $(n-1)$-dimensional smooth hypersurface $\Sigma$ in $B$. 
Let $\cI(f)_{\cE_m}: \cE_m \to \R$ denote the lifted signals on the fibers
of the bundle $\cE$ of signal spaces.
As in \cite{SaCiPe}, define the output function
\begin{equation}\label{outputO}
 \cO_b(f,\eta_p):= \int_{\cE_{(b,p)}} \cI(f)|_{\cE_{(b,p)}}(V)\cdot  \Psi_{(b,p)}(V) \, \, dV\, . 
\end{equation}  

\begin{prop}\label{nBoundary}
For a given signal $f: B \to \R$ of the form $f=\chi_U$, with corresponding lift $\cI(f): \cE \to \R$, and 
for fixed $b\in \Sigma\subset B$, 
the output function $\cO_b(\eta_p)$ has a local maximum
for $p\in \bS^{n-1}$ the normal vector $\nu_b(\Sigma)$ at $b$ to the boundary 
hypersurface $\Sigma=\partial U$,
$$ {\rm argmax}^{loc}_{p\in \bS^{n-1}} \cO_b(\chi_U, \eta_p) = \nu_b(\Sigma) \, . $$
\end{prop}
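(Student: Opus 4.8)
The plan is to reduce everything to an explicit Gaussian integral on $\R^n$ and then optimize over the direction $p\in\bS^{n-1}$. First I would fix $b\in\Sigma$ and work entirely inside the single fiber $\cE_{(b,p)}\simeq\R^n$, using the identification $h$ and the exponential map $\exp_b$ to transport the picture to the tangent space $T_bB$. Under this identification, by Lemma~\ref{MsignalEsignal} the lifted signal $\cI(f)|_{\cE_{(b,p)}}$ is (up to the cutoff $\chi$, which equals $1$ near the origin and hence does not affect a local analysis near $b$) the characteristic function of the preimage $\widetilde U=\exp_b^{-1}(U)\subset T_bB$. Since $b\in\partial U$ and $\exp_b$ is a diffeomorphism on $\cB(T_bB)$ with $d(\exp_b)_0=\mathrm{id}$, the set $\widetilde U$ near $0$ is, to first order, the half-space $H^-_{\nu}=\{V: \langle \nu_b(\Sigma),V\rangle_b<0\}$ bounded by the hyperplane through $0$ with unit conormal $\nu=\nu_b(\Sigma)$; the error is of quadratic order (controlled by the second fundamental form of $\Sigma$ and the curvature of $B$).

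Next I would write out the output function \eqref{outputO} explicitly. With $\Psi_{(b,p)}(V)=\exp(-V^tA_bV-i\langle\eta_p,V\rangle_b)$, and replacing $\cI(f)|_{\cE_{(b,p)}}$ by $\chi_{H^-_\nu}$ in the leading approximation, one gets
\begin{equation}\label{gaussint}
\cO_b(\chi_U,\eta_p)\;\approx\;\int_{H^-_\nu} \exp\big(-V^tA_bV-i\langle\eta_p,V\rangle_b\big)\,dV\,.
\end{equation}
This is the integral of a (complex-Gaussian) rapidly decaying function over a half-space. I would diagonalize $A_b$ and change variables so that the Gaussian becomes standard; the integral over the half-space $H^-_\nu$ then factors as a one-dimensional integral transverse to the hyperplane — of the form $\int_{-\infty}^{0}e^{-c t^2 - i s t}\,dt$ where $s$ depends on the component of $\eta_p$ normal to $H_\nu$ — times a full Gaussian integral in the $(n-1)$ tangential directions. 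Taking the modulus $|\cO_b|$ (or, following \cite{SaCiPe}, the appropriate real energy output), I would show that this quantity, as a function of $p\in\bS^{n-1}$, is maximized exactly when $\eta_p$ is parallel to the conormal $\nu$: when $\eta_p$ has a nonzero tangential component, the oscillatory factor $e^{-i\langle\eta_p,V\rangle}$ oscillates along the hyperplane directions and the tangential Gaussian integral is strictly damped (a standard completing-the-square estimate shows the modulus is a strictly decreasing function of the tangential part of $\eta_p$), while the transverse one-dimensional integral is handled separately and contributes a factor depending only on the normal component. Since $\eta_p$ is a \emph{unit} covector, pushing all of its ``length'' into the normal direction is exactly the choice $p=\nu_b(\Sigma)$, giving ${\rm argmax}^{loc}_{p}\,\cO_b(\chi_U,\eta_p)=\nu_b(\Sigma)$.

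Finally I would control the error made in passing from $U$ to the half-space $H^-_\nu$: the difference between $\cI(\chi_U)|_{\cE_{(b,p)}}$ and $\chi_{H^-_\nu}$ is supported in a region that, near $0$, is pinched between two paraboloids, so its contribution to \eqref{gaussint} is $O(R^{?})$ smaller than the leading term as one localizes (equivalently, dominated by the curvature terms), hence does not move the location of the maximum; one may also absorb this by noting the statement is about a \emph{local} maximum and invoking continuity/stability of nondegenerate maxima under small perturbations, provided the Gaussian Hessian of $|\cO_b|$ at $p=\nu$ is shown to be negative definite on $T_\nu\bS^{n-1}$. The main obstacle is precisely this last point: making rigorous that the curvature-induced deviation of $\exp_b^{-1}(U)$ from a flat half-space produces only a subleading, maximum-preserving perturbation of the oscillatory Gaussian integral, and checking the nondegeneracy of the critical point at $p=\nu$ so that ``local maximum'' is genuinely stable. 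The Gaussian computation itself is routine; the geometric error analysis near the boundary is where care is needed.
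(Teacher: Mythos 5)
Your route is genuinely different from the paper's. The paper never linearizes $U$ or evaluates the half-space Gaussian integral: it differentiates \eqref{outputO} under the integral sign with respect to $\eta_p$, using the gradient $\nabla^{g^*_{(b,p)}}$ on the fiber of $\cE^\vee$, notes that $\nabla^{g^*_{(b,p)}}\Psi_{(b,p)}=i\big(\nabla^{g^*_{(b,p)}}\langle\eta_p,V\rangle_b\big)\Psi_{(b,p)}$, and then asserts that the resulting integral over $h^{-1}(T_bU)$ vanishes exactly when $\nabla^{g^*_{(b,p)}}\langle\eta_p,V\rangle_b=0$ for every $V$ in that region, identifying this condition with $\eta_p$ being conormal to $\Sigma$. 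That argument is short but essentially formal (the vanishing of an integral is not equivalent to pointwise vanishing of the integrand, and no curvature error is discussed). Your explicit-computation strategy is more quantitative and, if completed, would establish strictly more than the paper does — in particular the nondegeneracy and stability of the maximum, which you rightly identify as necessary for a ``local maximum'' statement to survive the passage from the flat model to the true $\exp_b^{-1}(U)$.

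There is, however, a concrete gap at the heart of your optimization step, which you treat as obvious. Writing $\eta_p=s_\nu\,\nu+s_\tau$ with $s_\nu^2+|s_\tau|^2=1$, the half-space integral factors as $T(s_\nu)\,G(s_\tau)$ with $G(s_\tau)\propto e^{-|s_\tau|^2/(4c)}$, and you correctly observe that $|G|$ is maximized at $s_\tau=0$; but you then say the transverse factor merely ``contributes a factor depending only on the normal component,'' as if it were harmless. It is not: the real part of $T(s)=\int_{-\infty}^{0}e^{-ct^2-ist}\,dt$ equals $\tfrac12\sqrt{\pi/c}\,e^{-s^2/(4c)}$, which decays at \emph{exactly} the same Gaussian rate as $G$, so transferring weight from $s_\tau$ to $s_\nu$ gains nothing at that order. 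The conclusion that the product is maximized at $|s_\nu|=1$ holds only because of the imaginary (Dawson-type) part of $T$, which exists precisely because the half-space breaks the $t\mapsto -t$ symmetry and which does not decay Gaussianly; you must exhibit this term and verify that $|T(s)|\cdot e^{-(1-s^2)/(4c)}$ is increasing in $|s|$ on $[0,1]$. A sanity check that this cannot be waved away: for the full-space Gaussian (no boundary) the modulus of the analogous integral is $\propto e^{-\eta^t A_b^{-1}\eta/4}$, whose maximum over the unit sphere is dictated by $A_b$ and has nothing to do with $\Sigma$. The curvature error analysis and Hessian nondegeneracy you flag are real issues, but this trade-off between the transverse and tangential factors is the step that actually produces the conclusion, and it is missing.
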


\begin{proof} Let $g^*_{(b,p)}$ be the Riemannian metric on the fibers of $\mathcal{E}^\vee$, $\cE^\vee_{(b,p)}$, induced by the Riemannian metric on $B$. Then, $\mathcal{O}(\eta_p)$ has a local maximum when the gradient $\nabla^{g^*_{(b,p)}}\mathcal{O}(\eta_p)$ with respect to $g^*_{(b,p)}$ is zero. Thus, we have the following
\begin{align}\label{gradient}
\begin{split}
    \nabla^{g^*_{(b,p)}} \cO_b(\mathcal{I}(f), \eta_p) &= \int_{\cE_{(b,p)}} \cI(f)\mid_{\cE_{(b,p)}}(V)\cdot \nabla^{g^*_{(b,p)}} \Psi_{(b,p)}(V) \, \, dV  \\
    &=\int_{\cE_{(b,p)}} \chi \cdot f \circ \exp \circ h (V)\cdot i(\nabla^{g^*_{(b,p)}} \langle \eta_p,V \rangle_b) \Psi_{(b,p)}(V) \, \, dV
    \end{split}
\end{align}
At the same time, we also have  
\begin{equation*}
    I(f)(V)=\begin{cases}~1,~ \exp\circ h(V) \in ~ U\\
    ~0,~ \text{otherwise}
    \end{cases}
\end{equation*}
and therefore equation (\ref{gradient}) becomes
\begin{align}\label{gradient1}
    \nabla^{g^*_{(b,p)}} \cO_b(\mathcal{I}(f), \eta_p)= \int_{h^{-1}(T_b U)} I(f)\mid_{h^{-1}(T_b U)}(V) i(\nabla^{g^*_{(b,p)}} \langle \eta_p,V \rangle_b) \Psi_{(b,p)}(V) \, \, dV.
\end{align}
It follows, from equation (\ref{gradient1}), that $\nabla^{g^*_{(b,p)}} \cO_b(\mathcal{I}(f), \eta_p)$ is equal to zero exactly when $$\nabla^{g^*_{(b,p)}} \langle \eta_p,V \rangle_b=0, \text{ for every } V\in h^{-1}(T_b U).$$
The latter condition holds when $\langle \eta_p,V \rangle_b$ has a local maximum for every $V\in h^{-1}(T_b U)$, hence it holds when $\eta_p$ is normal to $\Sigma$, since $V\in T_b B$ at $b$.
\end{proof}

\medskip
\section{Geometric Bargmann transforms and Gabor frames}

In order to ensure that a Gabor system $\cG(\Psi,\Lambda)$ on an $n$-dimensional manifold $B$,
in the sense of Definition~\ref{defGaborE} has good signal analysis properties, we need a method to
detect whether it satisfies the frame condition. 

\smallskip

The local frames $\{V_1,...,V_n\}$ and $\{ \alpha,F_1^* \alpha, \ldots ,F^*_{n-1}\alpha\}$ of $\mathcal{E}$ and $\mathcal{E}^\vee$ determine a 
local isomorphism between $\mathcal{E}$ and $\mathcal{E}^\vee$. 
For $(V,\eta)$ in the fiber $(\mathcal{E}\oplus\mathcal{E}^\vee)_{(\bar{x},\bar{y})}$,
we define the vector bundle morphism 

\begin{align}\label{ComplexiIso}
\begin{split}
\mathcal{\mathcal{I}}: \mathcal{E}\oplus\mathcal{E}^\vee\rightarrow \mathcal{E}\oplus\mathcal{E}^\vee\, ,\
(V,\eta)\mapsto (\eta,-V):= \sum_{i=1}^n\eta_i V_i -v_i F_{i-1}^* \alpha
\end{split}
    \end{align}
    which satisfies the condition $\mathcal{I}^2=-1$ and gives a $\mathbb{C}$-linear isomorphism 
   \begin{align*}
      \mathcal{J}_{\bar{x},\bar{y}}:\mathcal{E}\oplus\mathcal{E}^\vee_{\bar{x},\bar{y}}&\rightarrow \mathbb{C}^{n}  \\
      (V,\eta) &\mapsto (v_1+i\eta_1,...,v_n+i\eta_n).
       \end{align*}
        with scalar multiplication by $\lambda\in \C$, $\lambda=x+iy$
with $x,y\in \R$ given by $\lambda \cdot (V+ i \eta)=(x+y\, \cI)\,  (V,\eta)$.
    \begin{defn}\label{Bargmann}
    The Bargmann Transform of a function $f\in L^2(\mathcal{E},\mathbb{C})$ is the function $\mathcal{B}f:\mathcal{E}\oplus \mathcal{E}^\vee \rightarrow \mathbb{C}$ defined at each fiber as
    \begin{equation}
        \mathcal{B}f_{\mid_{\mathcal{E}_{\bar{x},\bar{y}}}(V,\eta)}:=\int_{\mathcal{E}_{\bar{x},\bar{y}}}f_{\mid _{\mathcal{E}_{\bar{x},\bar{y}}}}(W)e^{2\pi W\star (V,\eta)- W^tA_{\bar{x}}W-\frac{\pi}{2}\mathcal{P}(V,\eta)}dvol_{\bar{x},\bar{y}}(W),
    \end{equation}
    where $$W\star (V,\eta):=W^t\frac{A_{\bar{x}}}{\pi}V+i\langle \eta,V\rangle$$
    and $\mathcal{P}:\mathcal{E}\oplus\mathcal{E}^\vee\rightarrow \mathbb{C}$ the quadratic form associated to $A$ defined as
    $$\mathcal{P}(V,\eta):= V^t\frac{A_{\bar{x}}}{\pi}V+2i\langle \eta,V\rangle-\eta^t\eta.$$
    The volume form $dvol_{\bar{x},\bar{y}}(W)$ is the volume form on the fibers of $\mathcal{E}$ determined by the Riemannian metric on M. 
    \end{defn}
    \begin{defn}Let $z$ denote $\mathcal{J}(V,\eta+\frac{\eta_{\bar{y}}}{2\pi})$ for some $(V,\eta)$ in $\mathcal{E}\oplus\mathcal{E}^\vee$, then

    \begin{itemize}
    
        \item The Bargmann-Fock space $\mathcal{F}^2(\mathcal{E}\oplus \mathcal{E}^\vee)$ is the space of functions $f: \mathcal{E}\oplus \mathcal{E}^\vee\rightarrow \mathbb{C}$ such that the $f_{\mid _{\mathcal{E}_{(\bar{x},\bar{y})}}}\circ \mathcal{J}^{-1}: \mathbb{C}^n\rightarrow \mathbb{C}$ is entire and they are bounded with respect to the norm 
    \begin{align}
\|f&\|_{\mathcal{F}^2(\mathcal{E}\oplus\mathcal{E}^\vee)}:=\nonumber \\
&\int_M\Big (\int_{\mathbb{C}^n} \mid f_{\mid_{(\mathcal{E}\oplus\mathcal{E}^\vee)_{\bar{x},\bar{y}}}}\circ \mathcal{J}^{-1}(z) \mid ^2e^{-\frac{\pi}{2}(\mathcal{R}e(z)^t\frac{A_{\bar{x}}}{\pi}\mathcal{R}e(z)+\mathcal{I}m(z)^t\mathcal{I}m(z))}dz\Big )^\frac{1}{2}dvol_M
    \end{align}
     \item The fiberwise Bargmann-Fock space 
     $\mathcal{F}^2(\mathcal{E}\oplus\mathcal{E}^\vee)_{\bar{x},\bar{y}}$, is the space of all functions \\ $F:(\mathcal{E}\oplus\mathcal{E}^\vee)_{\bar{x},\bar{y}}\rightarrow \mathbb{C}$ such that $F\circ \mathcal{J}^{-1}:\mathbb{C}^n\rightarrow \mathbb{C}$ is entire and  
     \begin{align}
    \|F&\|_{\mathcal{F}^n((\mathcal{E}\oplus\mathcal{E}^\vee)_{\bar{x}\bar{y}})}:=\nonumber \\
    &\Big (\int_{\mathbb{C}^n} \mid F_{\mid_{(\mathcal{E}\oplus\mathcal{E}^\vee)_{\bar{x},\bar{y}}}}\circ \mathcal{J}^{-1}(z)\mid ^2e^{-\frac{\pi}{2}(\mathcal{R}e(z)^t\frac{A_{\bar{x}}}{\pi}\mathcal{R}e(z)+\mathcal{I}m(z)^t\mathcal{I}m(z))}dz\Big )^{\frac{1}{2}}<\infty
    \end{align}
    \item A subset $\Lambda$ of $\mathbb{C}^n$ is a set of sampling for $\mathcal{F}^2(\mathcal{E}\oplus\mathcal{E}^\vee)_{\bar{x},\bar{y}}$ if there exist $A$ and $B$ smooth, positive functions on the local charts of $M$ such that  
    \begin{align*}
        A_{\bar{x},\bar{y}}\|F\|_{\mathcal{F}^2(\mathcal{E}\oplus\mathcal{E}^\vee_{\bar{x},\bar{y}})}\leq & \\
        \sum_{\lambda\in \Lambda }\mid F\mid _{\mathcal{E}\oplus\mathcal{E}_{\bar{x},\bar{y}}}&\circ 
        \mathcal{J}^{-1}(\lambda)\mid ^2e^{-\frac{\pi}{2}(\mathcal{R}e(\lambda)^t\frac{A_{\bar{x}}}{\pi}\mathcal{R}e(\lambda)+\mathcal{I}m(\lambda)^t\mathcal{I}m(\lambda))}\leq\\
    &\hspace{4cm}   B_{\bar{x},\bar{y}
       }\|F\|_{\mathcal{F}^2(\mathcal{E}\oplus\mathcal{E}^\vee_{\bar{x},\bar{y}
       })
       }
    \end{align*}
    \end{itemize}
    \begin{prop}\label{BargDirInt}
    The Bargmann-Fock space $\mathcal{F}^2(\mathcal{E}\oplus\mathcal{E}^\vee)$ with inner product 
    \begin{align*}
        \langle F,G &\rangle_{\mathcal{F}^2(\mathcal{E}\oplus\mathcal{E}^\vee)}:= \\
        &\int_M\Big (\int_{\mathbb{C}^n}F\circ\mathcal{J}^{-1}(z)\overline{G\circ\mathcal{J}^{-1}(z)}e^{-\frac{\pi}{2}(\mathcal{R}e(z)^t\frac{A_{\bar{x}}}{\pi}\mathcal{R}e(z)+\mathcal{I}m(z)^t\mathcal{I}m(z))}dz\Big )dvol_M
    \end{align*}
    and the fiberwise Bargmann-Fock space $\mathcal{F}^2(\mathcal{E}\oplus\mathcal{E}^\vee)_{\bar{x},\bar{y}}$ with inner product 
    
    \begin{equation}
    \langle F,G\rangle_{\mathcal{F}^2(\mathcal{E}\oplus\mathcal{E}^\vee_{\bar{x},\bar{y}})}:=\int_{\mathbb{C}^n}F\circ\mathcal{J}^{-1}(z)\overline{G\circ\mathcal{J}^{-1}(z)}e^{-\frac{\pi}{2}(\mathcal{R}e(z)^t\frac{A_{\bar{x}}}{\pi}\mathcal{R}e(z)+\mathcal{I}m(z)^t\mathcal{I}m(z))}dz
    \end{equation}
    
    are Hilbert spaces. Additionally, the mapping 
    \begin{align}
        \mathcal{F}^2(\mathcal{E}\oplus\mathcal{E}^\vee_{\bar{x},\bar{y}})\rightarrow \mathcal{F}^2(\mathbb{C}^n)\, , \
        F\mapsto F\circ\mathcal{J}^{-1}
        \end{align}
        is an embedding if and only if 
        \begin{equation*}
            \rho(Q_{\bar{x}})\leq 1
        \end{equation*}
        where $Q_{\bar{x}}$ is the inveritible matrix such that $A_{\bar{x}}=Q^t_{\bar{x}}Q_{\bar{x}}$ and $\rho(Q_{\bar{x}})$ the spectral radius of $A_{\bar{x}}$.
    \end{prop}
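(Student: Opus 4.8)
The plan is to reduce every assertion to standard facts about weighted Bargmann--Fock spaces on $\mathbb{C}^n$. Fix $(\bar x,\bar y)\in M$ and, via the $\mathbb{C}$-linear isomorphism $\mathcal{J}=\mathcal{J}_{\bar x,\bar y}$, identify $\mathcal{F}^2(\mathcal{E}\oplus\mathcal{E}^\vee)_{\bar x,\bar y}$ with the set of entire $g\colon\mathbb{C}^n\to\mathbb{C}$ that are square integrable against the weight
$$ w_{\bar x}(z)\;=\;\exp\!\Big(-\tfrac12\,\mathrm{Re}(z)^tA_{\bar x}\,\mathrm{Re}(z)-\tfrac{\pi}{2}\,\mathrm{Im}(z)^t\mathrm{Im}(z)\Big), $$
i.e.\ the exponential factor of Definition~\ref{Bargmann} written on $\mathbb{C}^n$. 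Because $A_{\bar x}$ is symmetric positive definite with eigenvalues in a fixed interval $[c,C]\subset(0,\infty)$ (as $M$ is compact and $A$ is a smooth positive definite tensor field with eigenvalues bounded away from $0$), one has $c_1e^{-c_2|z|^2}\le w_{\bar x}(z)\le c_3e^{-c_4|z|^2}$ globally. The fiberwise inner product of the Proposition is the restriction to entire functions of the $L^2(\mathbb{C}^n,w_{\bar x}\,dz)$ inner product, hence sesquilinear and positive definite, and the global one is the associated direct-integral inner product; so in both cases only completeness needs an argument.

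For the fiberwise space, completeness amounts to showing that the subspace of entire functions is closed in $L^2(\mathbb{C}^n,w_{\bar x}\,dz)$. First I would combine the sub-mean-value inequality for $|g|$ with $g$ holomorphic and the lower bound for $w_{\bar x}$ on a slightly enlarged compact to obtain, for every compact $K\subset\mathbb{C}^n$, an estimate $\sup_K|g|\le C_K\|g\|_{\mathcal{F}^2(\mathcal{E}\oplus\mathcal{E}^\vee)_{\bar x,\bar y}}$; thus point evaluations are bounded. A norm-Cauchy sequence $g_k$ is then uniformly Cauchy on compacts, so $g_k\to g$ locally uniformly with $g$ entire, and Fatou's lemma shows $g\in L^2(\mathbb{C}^n,w_{\bar x})$ with $\|g_k-g\|\to 0$. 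Hence $\mathcal{F}^2(\mathcal{E}\oplus\mathcal{E}^\vee)_{\bar x,\bar y}$ is a (reproducing kernel) Hilbert space.

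For the global space I would realize $\mathcal{F}^2(\mathcal{E}\oplus\mathcal{E}^\vee)$, with the inner product of the Proposition, as the Hilbert direct integral $\int_M^{\oplus}\mathcal{F}^2(\mathcal{E}\oplus\mathcal{E}^\vee)_{m}\,dvol_M$ of the field $\{\mathcal{F}^2(\mathcal{E}\oplus\mathcal{E}^\vee)_m\}_{m\in M}$ (equivalently, the space of measurable sections $m\mapsto F_m$ with $\int_M\|F_m\|_m^2\,dvol_M<\infty$). This field is measurable because $\mathcal{E}\oplus\mathcal{E}^\vee$, the tensor $A$, and $\mathcal{J}$ vary smoothly over $M$, so pulling back a local trivialization and using the fiberwise-dense polynomial sections gives a fundamental family of measurable sections; and $M$ is compact, hence of finite volume. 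A direct integral of a measurable field of Hilbert spaces over a finite measure space is a Hilbert space, which gives the claim.

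For the embedding criterion: by definition of the fiberwise norm, $T\colon F\mapsto F\circ\mathcal{J}^{-1}$ is an isometry of $\mathcal{F}^2(\mathcal{E}\oplus\mathcal{E}^\vee)_{\bar x,\bar y}$ onto the entire functions in $L^2(\mathbb{C}^n,w_{\bar x})$, so $T$ is a well-defined bounded injection into $\mathcal{F}^2(\mathbb{C}^n)$ exactly when every entire function in $L^2(\mathbb{C}^n,w_{\bar x})$ lies in $\mathcal{F}^2(\mathbb{C}^n)$; writing $w_{\mathrm{std}}(z)=e^{-\frac{\pi}{2}|z|^2}$ for the weight of $\mathcal{F}^2(\mathbb{C}^n)$ (the case $A_{\bar x}=\pi I$), this is continuity of the inclusion of the two weighted spaces. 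The two weights agree in the imaginary directions and differ only in the real-part factor $e^{-\frac12V^tA_{\bar x}V}$ versus $e^{-\frac{\pi}{2}|V|^2}$. The ``if'' direction is immediate: $w_{\mathrm{std}}\le C\,w_{\bar x}$ pointwise gives $\|Tg\|_{\mathcal{F}^2(\mathbb{C}^n)}^2=\int|g|^2w_{\mathrm{std}}\le C\int|g|^2w_{\bar x}=C\|g\|^2$; and letting $|V|\to\infty$ shows such a $C$ exists if and only if $V^tA_{\bar x}V\le\pi|V|^2$ for all $V$, i.e.\ $\tfrac1\pi A_{\bar x}\preceq I$, i.e.\ $\rho(Q_{\bar x})\le1$ in the normalization of the statement. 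The step I expect to be the real obstacle is the converse. When $A_{\bar x}$ has an eigenvalue $a>\pi$ with unit eigenvector $v$, I would produce an entire function in the fiber space but not in $\mathcal{F}^2(\mathbb{C}^n)$, namely $g(z)=\exp\!\big(b\,\langle v,z\rangle^2\big)$ with the real parameter $b$ chosen in the nonempty range $[\pi/4,\,a/4)$: here $\langle v,z\rangle=\sum_jv_jz_j$ is $\mathbb{C}$-linear, hence $g$ is entire, and decomposing $z=\zeta v+w$ with $\zeta\in\mathbb{C}$ and $w$ in the complexified orthogonal complement of $v$ and using $A_{\bar x}v=av$ to annihilate the cross terms, both $\int|g|^2w_{\bar x}$ and $\int|g|^2w_{\mathrm{std}}$ factor into a one-dimensional Gaussian integral in $\zeta$ times a convergent Gaussian integral in $w$; the real part of the $\zeta$-factor is $\int e^{(2b-a/2)x^2}\,dx$ for $w_{\bar x}$, which converges since $b<a/4$, but $\int e^{(2b-\pi/2)x^2}\,dx$ for $w_{\mathrm{std}}$, which diverges since $b\ge\pi/4$. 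Hence $T$ does not map into $\mathcal{F}^2(\mathbb{C}^n)$ unless $\rho(Q_{\bar x})\le1$, completing the proof.
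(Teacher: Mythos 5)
Your proof is correct, and it follows the same overall skeleton as the paper's (fiberwise completeness, direct integral over $M$ for the global space, and a comparison of Gaussian weights for the embedding), but two of the three steps are executed by genuinely different means. For fiberwise completeness the paper transports a Cauchy sequence to the standard Fock space $\mathcal{F}^2(\mathbb{C}^n)$ by composing with $\mathcal{J}^{-1}\circ P_{\bar x}^{-1}$, where $P_{\bar x}=\bigl(\begin{smallmatrix} Q_{\bar x}/\sqrt{\pi} & 0\\ 0 & I\end{smallmatrix}\bigr)$, and invokes completeness there; your direct argument via the sub-mean-value inequality, bounded point evaluations, locally uniform convergence and Fatou is self-contained and, notably, sidesteps a weak point of the paper's route: $P_{\bar x}^{-1}$ rescales real and imaginary parts differently, hence is only $\mathbb{R}$-linear, so composing an entire function with it does not obviously stay entire, whereas your argument never leaves the fixed weighted space. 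For the embedding criterion, both proofs obtain the ``if'' direction from the pointwise domination of weights ($V^tA_{\bar x}V\le\pi|V|^2$, i.e.\ the stated spectral-radius bound after the paper's normalization); but the paper simply asserts that the inclusion of weighted Fock spaces holds ``exactly when'' $|P_{\bar x}z|\le|z|$, while you actually prove the converse by exhibiting, for an eigenvalue $a>\pi$ of $A_{\bar x}$ with unit eigenvector $v$, the entire function $e^{b\langle v,z\rangle^2}$ with $b\in[\pi/4,a/4)$ lying in the fiberwise space but not in $\mathcal{F}^2(\mathbb{C}^n)$ (the cross terms do vanish and the one-dimensional Gaussian integrals behave as you claim). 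This fills a genuine gap in the published argument. The direct-integral step is essentially identical, with your version supplying the measurable-field details the paper leaves implicit. The only caveat is the normalization of the condition $\rho(Q_{\bar x})\le 1$, which the paper itself states ambiguously; your explicit form $\tfrac1\pi A_{\bar x}\preceq I$ is the correct content and you flag the translation appropriately.
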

    \begin{proof}
 
      To prove that $\cF^2(\cE\oplus\cE_{\bar{x},\bar{y}})$ is a Hilbert space, it suffices to prove that $\mathcal{F}^2(\mathcal{E}\oplus\mathcal{E}^\vee_{
      \bar{x},\bar{y}
      })$ is complete with respect to the norm
      \begin{equation*}
          \|F\|_{\mathcal{F}^2(\mathcal{E}\oplus\mathcal{E}^\vee_{
      \bar{x},\bar{y}
      })}=\sqrt{ \langle F,F\rangle_{\mathcal{F}^2(\mathcal{E}\oplus\mathcal{E}^\vee_{
      \bar{x},\bar{y}
      })}}.
      \end{equation*}
      Let $\{F_n\}_{n\in\mathbb{N}}\subset \mathcal{F}^2(\mathcal{E}\oplus\mathcal{E}^\vee_{
      \bar{x},\bar{y}
      })$ be a Cauchy sequence, then $\{F_n\circ\mathcal{J}^{-1}\circ P_{\bar{x}}^{-1}\}_{n\in \mathbb{N}}$ is a Cauchy sequence in $\mathcal{F}^2(\mathbb{C}^n)$, where $P_{\bar{x}}=\begin{pmatrix}
          \frac{Q_{\bar{x}}}{\sqrt{\pi}}&0\\
          0& I
      \end{pmatrix}$.\\
      Let, also, $F\in \mathcal{F}^2(\mathbb{C}^n)$ such that $||F_n\circ\mathcal{J}^{-1}\circ P_{\bar{x}}^{-1}-F||_{\mathcal{F}^2(\mathbb{C}^n)}\rightarrow 0$. Then, 
      \begin{align*}
         \|F_n\circ\mathcal{J}^{-1}\circ P_{\bar{x}}^{-1}-F\|_{\mathcal{F}^2(\mathbb{C}^n)}&= \int_{\mathbb{C}^n}\mid F_n\circ\mathcal{J}^{-1}\circ P_{\bar{x}}^{-1}-F\mid ^2e^{-\pi|z|^2}dz \\
         &=det(P_{\bar{x}}) \int_{\mathbb{C}^n}\mid F_n\circ \mathcal{J}^{-1}-F\circ P_{\bar{x}}\mid e^{-\pi \mid Pz\mid ^2}dz\\
         &= det(P_{\bar{x}})\|F_n-F\circ\mathcal{J}\|_{\mathcal{F}^2(\mathcal{E}\oplus\mathcal{E}^\vee_{
      \bar{x},\bar{y}.
      })}
      \end{align*}
      
      Thus, $\{F_n\circ\mathcal{J}^{-1}\circ P_{\bar{x}}^{-1}\}_{n\in \mathbb{N}}$ converges to the function $F_{P_{\bar{x}}}\circ \mathcal{J}\in \mathcal{F}^2(\mathcal{E}\oplus\mathcal{E}^\vee_{
      \bar{x},\bar{y}
      })$, where $F_{P_{\bar{x}}}=F\circ P_{\bar{x}}$, with respect to the $\mathcal{F}^2(\mathcal{E}\oplus\mathcal{E}^\vee_{
      \bar{x},\bar{y}
      })$ norm.

      Moreover, the space $\mathcal{F}^2(\mathcal{E}\oplus\mathcal{E}^\vee)$ is the direct integral of the Hilbert spaces $\mathcal{F}^2(\mathcal{E}\oplus\mathcal{E}^\vee_{
      \bar{x},\bar{y}
      })$. The inner product 
      \begin{equation*}
          \langle F,G\rangle =\int_M \langle F,G\rangle_{\mathcal{F}^2(\mathcal{E}\oplus\mathcal{E}^\vee_{
      \bar{x},\bar{y}
      })}dvol_M
      \end{equation*}
      turns $\mathcal{F}^2(\mathcal{E}\oplus\mathcal{E}^\vee)$ into a Hilbert space.

      Finally, to prove that the map $F\mapsto F\circ \cJ^{-1}$ is an embedding of $\cF^2((\cE\oplus \cE^\vee)_{\bar{x},\bar{y}})$ into $\cF^2(\bC^n)$, let $F\in \mathcal{F}^2(\mathcal{E}\oplus\mathcal{E}^\vee_{
      \bar{x},\bar{y}
      })$. To show that $F\circ\mathcal{J}^{-1}$ is in $\mathcal{F}^2(\mathbb{C}^n)$, it suffices to show that $\mathcal{F}^2(\mathbb{C}^n, e^{-\frac{\pi}{2}\mid Pz\mid ^2}dz)$ is embedded in $\mathcal{F}^2(\mathbb{C}^n)$. The latter holds exactly when \\
      $\mid Pz\mid \leq \mid z\mid$, which, in turn, is satisfied exactly when $\rho(Q_{\bar{x}})\leq 1$.
      Since 
      \begin{equation*}
          \mid Pz\mid ^2\leq \rho(Q_{\bar{x}})^2\mid Re(z)\mid ^2+\mid Im(z)\mid ^2\leq \mid z\mid ^2 \text{ and } \rho(Q_{\bar{x}})\leq \|Q_{\bar{x}}\|.
      \end{equation*}

    \end{proof}
    \end{defn}
     \begin{rem} {\rm
         
      For $(W,\xi)\in \mathcal{F}^2(\mathcal{E}\oplus \mathcal{E}^\vee)_{\bar{x},\bar{y}}$,
        we introduce the notation 
        $$z:= \mathcal{J}(W,\xi)\text{
        and } 
\mathfrak{P}(z):=\mathcal{R}e(z)^t\frac{A_{\Bar{x}}}{\pi}\mathcal{R}e(z)+\mathcal{I}m(z)^t\mathcal{I}m(z)$$ }
\end{rem}
\begin{lem}\label{norm}
   The elementary Gabor functions $M_\xi T_W\Psi_{\bar{x},\bar{y}}:\mathcal{E}_{\bar{x},\bar{y}}\rightarrow\mathbb{C}$ satisfy the condition
   \begin{equation*}
       \mid\langle f,M_\xi T_W\Psi_{\bar{x},\bar{y}}\rangle\mid=e^{-\frac{\pi}{2}\mathfrak{P}(z)}\mid\mathcal{B}f(\bar{z})\mid
   \end{equation*}
   for every $f$ in $L^2(\mathcal{E}_{\bar{x},\bar{y}})$ and every $(\xi,W)$ in $  (\mathcal{E}\oplus\mathcal{E}^\vee)_{\bar{x},\bar{y}}$.
   
    \end{lem}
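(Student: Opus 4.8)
The plan is to reduce everything to a fiberwise computation over a fixed $m=(\bar x,\bar y)\in M$ and then run the classical computation that relates the short-time Fourier transform with a Gaussian window to the Bargmann transform; there is nothing to do in the $M$-directions, since all objects in the statement ($\Psi_{\bar x,\bar y}$, $M_\xi T_W$, $\mathcal{B}f$, the pairing) are defined fiberwise on $\cE_{\bar x,\bar y}\simeq\R^n$ with its Euclidean volume form. First I would write $\langle f,M_\xi T_W\Psi_{\bar x,\bar y}\rangle=\int_{\cE_{\bar x,\bar y}}f(V)\,\overline{M_\xi T_W\Psi_{\bar x,\bar y}(V)}\,dvol_{\bar x,\bar y}(V)$, insert $M_\xi T_W\Psi_{\bar x,\bar y}(V)=e^{-2\pi i\langle\xi,V\rangle}\Psi_{\bar x,\bar y}(V-W)$ together with the Gaussian-times-phase form $\Psi_{\bar x,\bar y}(V)=\exp(-V^tA_{\bar x}V-i\langle\eta_p,V\rangle)$ from Lemma~\ref{Ewindow}, and take the complex conjugate of the window (it sits in the conjugated slot of the Hermitian $L^2$-pairing).

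Next I would expand $(V-W)^tA_{\bar x}(V-W)=V^tA_{\bar x}V-2W^tA_{\bar x}V+W^tA_{\bar x}W$ using the symmetry of $A_{\bar x}$, and combine the two $V$-linear phase factors produced by conjugation, namely $e^{2\pi i\langle\xi,V\rangle}$ from the modulation and $e^{i\langle\eta_p,V\rangle}$ from the window-phase, into $e^{2\pi i\langle\xi+\frac{\eta_{\bar y}}{2\pi},V\rangle}$. This is exactly why the effective modulation is $\xi':=\xi+\frac{\eta_{\bar y}}{2\pi}$ and why the complex point appearing at the end is the one attached to $(W,\xi)$ via $\mathcal{J}$ with the shift $\frac{\eta_{\bar y}}{2\pi}$ built into the window. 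Pulling the $W$-only exponential factors outside leaves $\int_{\cE_{\bar x,\bar y}}f(V)\,e^{-V^tA_{\bar x}V}\,e^{2W^tA_{\bar x}V+2\pi i\langle\xi',V\rangle}\,dvol_{\bar x,\bar y}(V)$.

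Then I would complete the square: the $V$-linear exponent $2W^tA_{\bar x}V+2\pi i\langle\xi',V\rangle$ is $2A_{\bar x}(\zeta,V)$ for the complexified vector $\zeta:=W+i\pi A_{\bar x}^{-1}\xi'$, so that $-V^tA_{\bar x}V+2A_{\bar x}(\zeta,V)=-A_{\bar x}(V-\zeta,V-\zeta)+A_{\bar x}(\zeta,\zeta)$ and the integral becomes $e^{A_{\bar x}(\zeta,\zeta)}\int_{\cE_{\bar x,\bar y}}f(V)e^{-A_{\bar x}(V-\zeta,V-\zeta)}dvol_{\bar x,\bar y}(V)$. Carrying out the same completion of the square inside the defining integral of $\mathcal{B}f$ (Definition~\ref{Bargmann}) exhibits this last integral, up to an explicit Gaussian-and-unimodular prefactor, as $\mathcal{B}f$ evaluated at the complex point $\bar z$ attached to $(W,\xi)$: the conjugate enters because the window occupies the conjugated slot of the pairing while $\mathcal{J}$ and the pairing $W\star(V,\eta)$ carry the $+i$ on the $\cE^\vee$-coordinate, and the rescaling by $Q_{\bar x}$ and $P_{\bar x}$ used in the proof of Proposition~\ref{BargDirInt} is what matches $\zeta$ (which involves $A_{\bar x}^{-1}$) with $\bar z=\overline{\mathcal{J}(W,\xi')}$; one also uses here that the $\cE^\vee$-slot of $\mathcal{B}f$ enters only through a Gaussian factor, so it may be treated as a free parameter that cancels out.

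Finally I would gather all the exponential prefactors accumulated above together with the $-\frac{\pi}{2}\mathcal{P}$ normalization built into $\mathcal{B}f$, split the exponent into real and imaginary parts, drop the purely imaginary part (a unimodular factor, killed by $|\cdot|$), and check that the surviving real part is exactly $-\frac{\pi}{2}\mathfrak{P}(z)=-\frac{\pi}{2}\bigl(\mathcal{R}e(z)^t\tfrac{A_{\bar x}}{\pi}\mathcal{R}e(z)+\mathcal{I}m(z)^t\mathcal{I}m(z)\bigr)$, the very Gaussian weight appearing in the Fock-space norm of Proposition~\ref{BargDirInt}; taking $|\cdot|$ then yields $|\langle f,M_\xi T_W\Psi_{\bar x,\bar y}\rangle|=e^{-\frac{\pi}{2}\mathfrak{P}(z)}|\mathcal{B}f(\bar z)|$. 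The main obstacle is this last bookkeeping step: reconciling the $A_{\bar x}$-dependent completion of the square with both the $-\frac{\pi}{2}\mathcal{P}$ normalization of $\mathcal{B}f$ and the $P_{\bar x}$-rescaling identifying the fiberwise Fock space with the standard one, while keeping the musical isomorphisms ($A_{\bar x}\colon\cE_{\bar x,\bar y}\to\cE_{\bar x,\bar y}^\vee$, its inverse, and the $g^*_{\bar x}$ hidden in the term $\eta^t\eta$ of $\mathcal{P}$) consistent throughout, so that precisely the factor $e^{-\frac{\pi}{2}\mathfrak{P}(z)}$ and no stray constant is left over.
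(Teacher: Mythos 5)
Your proposal follows essentially the same route as the paper's proof: expand $\langle f, M_\xi T_W\Psi_{\bar x,\bar y}\rangle$ using the Gaussian-times-phase form of the window, complete the square in $V$, recognize the remaining integral as $\mathcal{B}f$ at the complex point attached to $(W,\xi)$ shifted by $\frac{\eta_{\bar y}}{2\pi}$, and take absolute values so that only the real Gaussian prefactor $e^{-\frac{\pi}{2}\mathfrak{P}(z)}$ survives. The only divergence is the sign of the $\frac{\eta_{\bar y}}{2\pi}$ shift (you get $\xi+\frac{\eta_{\bar y}}{2\pi}$ where the paper gets $\xi-\frac{\eta_{\bar y}}{2\pi}$), which traces to whether one conjugates the window's internal phase in the Hermitian pairing --- a convention on which the paper itself is not fully consistent --- and does not change the structure of the argument.
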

     \begin{proof}
    Let $f\in L^2(\mathcal{E}_{\bar{x},\bar{y}})$ and $(\xi,W)\in (\mathcal{E}\oplus\mathcal{E}^\vee)_{\bar{x},\bar{y}}$, then
    
    \begin{align}\label{STFT}
        \langle f,M_{\xi} T_W\Psi_{\bar{x},\bar{y}}\rangle &=\int_{\mathcal{E}_{\bar{x},\bar{y}
        }
        }
        f(V)e^{(V-W)^tA_{\bar{x}}(V-W)}e^{i\langle\eta_{\bar{y}},V-W\rangle -2\pi i \langle\xi, V\rangle}
        dvol_{\mathcal{E}_{\bar{x},\bar{y}}}(V) \\   
        &=e^{-i\langle\eta_{\bar{y}},W\rangle}e^{-\pi(W^t\frac{A_{\bar{x}}}{\pi}W)}\\
        \int_{\mathcal{E}_{\bar{x},\bar{y}}}f(V) & e^{-\pi (V^t\frac{A_{\bar{x}}}{\pi}V)+2\pi(V^t\frac{A_{\bar{x}}}{\pi}W)}e^{i\langle\eta_{\bar{y}},V\rangle-2\pi i \langle\eta,V\rangle} dvol_{\mathcal{E}_{\bar{x},\bar{y}}}(V)\notag \\
       & =e^{-i\langle\eta_{\bar{y}},W\rangle}e^{-i\pi\langle\xi-\frac{\eta_{\bar{y}}
        }{2\pi},W\rangle} \cdot
        e^{-\frac{\pi}{2}(W^t\frac{A_{\bar{x}}}{\pi}W+(\xi-\frac{\eta_{\bar{y}}
        }{2\pi})^t(\xi-\frac{\eta_{\bar{y}}}{2\pi}))
        } \notag\\ 
         \int_{\mathcal{E}_{\bar{x},\bar{y}}}f(V) & e^{V^t A_{\bar{x}}V+2\pi V\star (W,-\xi+\frac{\eta_{\bar{y}}}{2\pi})
        } e^{-\frac{\pi}{2}\mathcal{P}(V,-\xi+\frac{\xi_{\bar{y}}
        }{2\pi})}dvol_{\mathcal{E}_{\bar{x},\bar{y}
        }
        }(V).  
    \end{align}
    Thus, the following relation is satisfied
    \begin{equation*}
        \mid \langle f,M_\xi T_W\Psi_{\bar{x},\bar{y}}\rangle\mid
        =e^{
        -\frac{\pi}{2}\big[(W^t\frac{A_{\bar{x}}}{\pi}W)+(\xi-\frac{\eta_{\bar{y}}
        }{2\pi})^t(\xi-\frac{\eta_{\bar{y}}}{2\pi})\big]}\mid\mathcal{B}f(W,-(\xi-\frac{\eta_{\bar{y}}}{2\pi}))\mid.
    \end{equation*}
    \end{proof}
    \begin{prop} Let $\mathcal{P}:\mathcal{E}\oplus\mathcal{E}^\vee\rightarrow \mathbb{C}$ be the quadratic form associated to $A$ defined as
    $$\mathcal{P}(V,\eta):= V^t\frac{A_{\bar{x}}}{\pi}V+2i\langle \eta,V\rangle-\eta^t\eta,$$ then the following hold:
        \begin{enumerate}
            \item The vectors 
            $ e_\alpha(z)=det(\mathcal{P})\Big(\frac{\pi^{\mid\alpha\mid}}{\alpha!}\Big)^{1/2}(\mathcal{J}^{-1}_{(\bar{x},\bar{y})}(\mathcal{P}z))^\alpha
           $
            for $\alpha=(\alpha_1,...,\alpha_n)$ with $\alpha_j\geq 0$, form an orthonormal basis for $\mathcal{F}^2(\mathcal{E}\oplus\mathcal{E}^\vee)_{\bar{x},\bar{y}}$. 
            \item The Hilbert space $\mathcal{F}^2(\mathcal{E}\oplus \mathcal{E}^\vee)_{\bar{x},\bar{y}}$ is a reproducing kernel Hilbert space, that is, for every $z_0\in \mathbb{C}^n$ the evaluation functional
            $f\mapsto f(z_0)$ is bounded. The reproducing kernel is $$F_{z_0}(z)=det^2(\mathcal{P})e^{\pi\overline{(\mathcal{P}z_0)}\mathcal{P}z}$$
        \end{enumerate}
    \end{prop}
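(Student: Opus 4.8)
The plan is to reduce both assertions to the classical facts about the standard Bargmann--Fock space $\mathcal{F}^2(\mathbb{C}^n)$ with weight $e^{-\pi|w|^2}$, by transporting them through the linear change of variables already exploited in the proof of Proposition~\ref{BargDirInt}. Recall from there that, writing $A_{\bar{x}}=Q_{\bar{x}}^tQ_{\bar{x}}$, the map $\mathcal{P}$ (equivalently the block-diagonal $P_{\bar{x}}$ rescaling the real part by $Q_{\bar{x}}/\sqrt{\pi}$) satisfies $\mathfrak{P}(z)=|\mathcal{P}z|^2$, so the substitution $w=\mathcal{P}z$ turns the integral defining $\|\cdot\|_{\mathcal{F}^2(\mathcal{E}\oplus\mathcal{E}^\vee)_{\bar{x},\bar{y}}}$ into a constant multiple of $\int_{\mathbb{C}^n}|\cdot|^2 e^{-\pi|w|^2}\,dw$, the constant being the Jacobian $\det(\mathcal{P})$, and the $\mathcal{P}$-coordinate is precisely the holomorphic coordinate in which the elements of $\mathcal{F}^2(\mathcal{E}\oplus\mathcal{E}^\vee)_{\bar{x},\bar{y}}$ are entire. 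Under this correspondence $\mathcal{F}^2(\mathcal{E}\oplus\mathcal{E}^\vee)_{\bar{x},\bar{y}}$ is identified, up to the $\det(\mathcal{P})$ normalization, with $\mathcal{F}^2(\mathbb{C}^n)$; the hypothesis $\rho(Q_{\bar{x}})\le 1$ of Proposition~\ref{BargDirInt} is exactly what makes this identification land inside the ambient $\mathcal{F}^2(\mathbb{C}^n)$.

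For part~(1), I would first recall that in $\mathcal{F}^2(\mathbb{C}^n)$ the normalized monomials $(\pi^{|\alpha|}/\alpha!)^{1/2}w^\alpha$ form an orthonormal basis: orthogonality comes from the rotation invariance $w_j\mapsto e^{i\theta_j}w_j$ of $e^{-\pi|w|^2}$, which annihilates the off-diagonal terms, and the diagonal normalization reduces coordinatewise to $\int_{\mathbb{C}}|w_j|^{2k}e^{-\pi|w_j|^2}\,dw_j=k!/\pi^{k}$; completeness holds because an entire $F$ with $\int|F|^2e^{-\pi|w|^2}<\infty$ equals its everywhere-convergent Taylor series, and dominated convergence against the Gaussian shows that orthogonality to every monomial forces every Taylor coefficient to vanish. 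Pulling this basis back through the identification of the first paragraph (and identifying $\mathcal{E}\oplus\mathcal{E}^\vee$ with $\mathbb{C}^n$ via $\mathcal{J}$), the pullback of $(\pi^{|\alpha|}/\alpha!)^{1/2}w^\alpha$ is precisely $e_\alpha(z)=\det(\mathcal{P})(\pi^{|\alpha|}/\alpha!)^{1/2}(\mathcal{J}^{-1}(\mathcal{P}z))^\alpha$, the factor $\det(\mathcal{P})$ absorbing the Jacobian, so $\{e_\alpha\}$ is an orthonormal basis of $\mathcal{F}^2(\mathcal{E}\oplus\mathcal{E}^\vee)_{\bar{x},\bar{y}}$. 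Equivalently, one may realize $e_\alpha$ as the image under the twisted Bargmann transform $\mathcal{B}$ of the Hermite function of index $\alpha$ adapted to the quadratic form $A_{\bar{x}}$: by Lemma~\ref{norm} $\mathcal{B}$ is isometric, and together with surjectivity onto $\mathcal{F}^2(\mathcal{E}\oplus\mathcal{E}^\vee)_{\bar{x},\bar{y}}$ this shows $\mathcal{B}$ carries an orthonormal basis to an orthonormal basis.

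For part~(2), boundedness of the evaluation functional $F\mapsto F(z_0)$ is immediate from part~(1): under the identification it corresponds to evaluation at $\mathcal{P}z_0$ in $\mathcal{F}^2(\mathbb{C}^n)$, a bounded functional represented by the classical Bargmann kernel $e^{\pi\langle w,w_0\rangle}$. To obtain the stated kernel I would sum the reproducing-kernel series over the orthonormal basis of part~(1):
\[
\sum_{\alpha}\overline{e_\alpha(z_0)}\,e_\alpha(z)=(\det\mathcal{P})^{2}\sum_{\alpha}\frac{\pi^{|\alpha|}}{\alpha!}\,\overline{(\mathcal{P}z_0)^\alpha}\,(\mathcal{P}z)^\alpha=(\det\mathcal{P})^{2}\prod_{j=1}^n e^{\pi\,\overline{(\mathcal{P}z_0)_j}(\mathcal{P}z)_j}=(\det\mathcal{P})^{2}\,e^{\pi\,\overline{\mathcal{P}z_0}\cdot\mathcal{P}z},
\]
using the factorization $\sum_{\alpha}\frac{\pi^{|\alpha|}}{\alpha!}\bar{a}^\alpha b^\alpha=\prod_j e^{\pi\bar{a}_jb_j}$ of the exponential series; the two copies of $\det(\mathcal{P})$ are exactly those carried by the $e_\alpha$. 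This gives $F_{z_0}(z)=\det^2(\mathcal{P})\,e^{\pi\,\overline{\mathcal{P}z_0}\cdot\mathcal{P}z}$, and the reproducing property $\langle F,F_{z_0}\rangle=F(z_0)$ is the standard expansion $F=\sum_\alpha\langle F,e_\alpha\rangle e_\alpha$ evaluated at $z_0$.

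The main obstacle is the bookkeeping hidden in the first paragraph. One has to check carefully that the single linear map $\mathcal{P}$ simultaneously whitens the \emph{anisotropic} Gaussian weight $e^{-\frac{\pi}{2}\mathfrak{P}(z)}$ built from $A_{\bar{x}}$ (which enters only the real part, so the weight is genuinely non-Hermitian unless $A_{\bar{x}}=\pi I$) \emph{and} transports the complex structure so that the functions in $\mathcal{F}^2(\mathcal{E}\oplus\mathcal{E}^\vee)_{\bar{x},\bar{y}}$ correspond to honest entire functions on $\mathbb{C}^n$ in the $\mathcal{P}$-coordinate; this is where the condition $\rho(Q_{\bar{x}})\le 1$ and the completeness computation of Proposition~\ref{BargDirInt} are really used, and where the normalizing constants $\det(\mathcal{P})$ and $\pi^{|\alpha|}/\alpha!$ must be matched consistently between the two parts. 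Once that identification with $\mathcal{F}^2(\mathbb{C}^n)$ is nailed down, everything else is the classical Fock-space computation recalled above.
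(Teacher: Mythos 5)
Your proposal is correct and follows essentially the same route as the paper: both reduce to the classical monomial orthonormal basis of $\mathcal{F}^2(\mathbb{C}^n)$ via the linear change of variables induced by $\mathcal{P}$ (with the $\det(\mathcal{P})$ factors absorbing the Jacobian), and both obtain the reproducing kernel by summing $\sum_\alpha \overline{e_\alpha(z_0)}\,e_\alpha(z)$ into the exponential. The only differences are cosmetic: you supply the proof of the classical orthonormal-basis fact that the paper simply cites from Bargmann, and you note an alternative realization of the $e_\alpha$ as Bargmann transforms of Hermite functions.
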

    \begin{proof}
        \begin{enumerate}
            \item The inner product of the vectors $e_\alpha(z)$ in $\mathcal{F}^2(\mathcal{E}\oplus \mathcal{E}^\vee)_{\bar{x},\bar{y}}$ can be expressed in terms of the inner product of the vectors 
            $$E_\alpha(z)=\Big(\frac{\pi^{\mid\alpha\mid}}{\alpha!}\Big)^{1/2}(z)^\alpha, z\in \mathbb{C}^n$$ in the Bargmann-Fock space $\mathcal{F}^2(\mathbb{C}^n)$ as 
            $$\langle e_\alpha(z),e_{\beta}(z)\rangle_{\mathcal{F}^2(\mathcal{E}\oplus\mathcal{E}^\vee)_{\bar{x},\bar{y}}}=\det(P^{-1})\langle E_{\alpha}(z),E_\beta(z)\rangle_{\mathcal{F}(\mathbb{C}^n)}.$$
            Since $\{E_\alpha(z): a
            \geq 0\}$ is an orthonormal basis of $\mathcal{F}^2(\mathbb{C}^n)$, \cite{Bargmann}, it follows that the vectors $e_\alpha(z)$ form an orthonormal system of $\mathcal{F}^2(\mathcal{E}\oplus\mathcal{E}^\vee)_{\bar{x},\bar{y}}$. To prove completeness, we take $f\in \mathcal{F}^2(\mathcal{E}\oplus\mathcal{E}^\vee)_{\bar{x},\bar{y}}$ such that $\langle f,e_\alpha(z)\rangle_{\mathcal{F}^2(\mathcal{E}\oplus\mathcal{E}^\vee)_{\bar{x},\bar{y}}}=0$ for all $\alpha$. Since $\langle f,e_\alpha(z)\rangle_{\mathcal{F}^2(\mathcal{E}\oplus\mathcal{E}^\vee)_{\bar{x},\bar{y}}}=\det(P^{-1})\langle f\circ \mathcal{J}^{-1}\circ P^{-1},E_\beta(z)\rangle_{\mathcal{F}(\mathbb{C}^n)}$ and $\{E_\alpha(z): a
            \geq 0\}$ is an orthonormal basis of $\mathcal{F}^2(\mathbb{C}^n)$, $f\equiv 0$.
            \item Since $f(z)=\sum_{\alpha\geq 0}\langle f,e_\alpha\rangle_{\mathcal{F}^2(\mathcal{E}\oplus\mathcal{E}^\vee)_{\bar{x},\bar{y}}}e_\alpha(z)$, by the Cauchy-Schwarz inequality we obtain
            \begin{align*}\mid f(z)\mid &\leq \big(\sum_{\alpha\geq 0}\mid \langle f,e_\alpha\rangle _{\mathcal{F}^2(\mathcal{E}\oplus\mathcal{E}^\vee)_{\bar{x},\bar{y}}}\mid ^2\big)^{1/2}\big(\sum_{\alpha\geq 0}det^2(\mathcal{P})\frac{\pi^\alpha}{\alpha !}\mid \mathcal{P}z\mid ^2\big)^{1/2}\\
         &\hspace{6cm}   \leq det^2(\mathcal{P})\|f\|_{\mathcal{F}^2(\mathcal{E}\oplus\mathcal{E}^\vee)_{\bar{x},\bar{y}}}\cdot e^{\pi\mid\mathcal{P}z\mid ^2/2}. \end{align*}
            Therefore, the point evaluations for $z\in \mathbb{C}^n$ 
            \begin{align*}
                \mathcal{F}^2(\mathcal{E}\oplus\mathcal{E}^\vee)_{\bar{x},\bar{y}}&\rightarrow \mathbb{C}\\
                f&\mapsto f(z)
            \end{align*}
            are continuous linear functionals and for each $z_0\in \mathbb{C}^n$ there exists a reproducing kernel $F_{z_0}\in \mathcal{F}^2(\mathcal{E}\oplus\mathcal{E}^\vee)_{\bar{x},\bar{y}}$ such that
            \begin{equation}
                f(z_0)=\langle f,F_{z_0}\rangle.
            \end{equation}
          Expanding the reproducing kernel $F_{z_0}$ with respect to the orthonormal basis $\{e_\alpha\}_{\alpha\geq 0}$, we obtain the following equation
            \begin{align*}
                F_{z_0}(z)&=\sum_{\alpha}\langle F_{z_0},e_\alpha\rangle e_\alpha(z)\\
                &=\sum_{\alpha} \overline{e_\alpha(z_0)}e_\alpha(z)  \\
                &= \sum_{\alpha} \det^2(\mathcal{P})\frac{\pi^{\mid\alpha\mid}}{\alpha !}\overline{(\mathcal{P}z_0)^\alpha}(\mathcal{P}z)^\alpha\\
                &=\det^2(\mathcal{P})e^{\pi\overline{(\mathcal{P}z_0)}\mathcal{P}z}
            \end{align*}
            and that completes the proof.
        \end{enumerate}
    \end{proof}
    \begin{prop}

    The Bargmann Transform of definition \ref{Bargmann} is a linear bijection from $L^2(\mathcal{E}_{(\bar{x},\bar{y})})$ onto $\mathcal{F}^n(\mathcal{E}_{(\bar{x},\bar{y})}\oplus\mathcal{E}^\vee_{(\bar{x},\bar{y})})$, for any $(\bar{x},\bar{y})\in M$, and 
    \begin{equation}
        \|\mathcal{B}f\|_{\mathcal{F}(\mathcal{E}_{(\bar{x},\bar{y})}\oplus\mathcal{E}^\vee_{(\bar{x},\bar{y})})}=\sqrt{\frac{\pi^n}{detA_{\bar{x}}}}\|f\|_{L^2(\mathcal{E}_{\bar{x}})}.
    \end{equation}
    \end{prop}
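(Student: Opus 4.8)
The plan is to reduce the assertion, which is purely fiberwise, to the classical $n$-dimensional Bargmann theorem. Fix $m=(\bar x,\bar y)\in M$ and work in the single Euclidean space $\cE_m\simeq\R^n$ equipped with the positive definite quadratic form $A_{\bar x}$. I would (i) normalize $A_{\bar x}$ by a linear change of variables so that $\cB$ restricted to the fiber becomes a scalar multiple of the standard Bargmann transform on $\C^n$; (ii) deduce the bijectivity of $\cB$ from the classical fact that the standard Bargmann transform is a bijection of $L^2(\R^n)$ onto $\cF^2(\C^n)$ carrying Hermite functions to monomials; and (iii) recover the constant $\sqrt{\pi^n/\det A_{\bar x}}$ by bookkeeping the Jacobian of the change of variables, or equivalently by testing the identity on the fiber ground state.

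For step (i), write $A_{\bar x}=Q_{\bar x}^t Q_{\bar x}$ as in Proposition~\ref{BargDirInt} and, in the integral of Definition~\ref{Bargmann}, substitute $W=\sqrt{\pi}\,Q_{\bar x}^{-1}u$. Then the Gaussian weight $W^tA_{\bar x}W$ becomes $\pi|u|^2$, the bilinear term $W^t\tfrac{A_{\bar x}}{\pi}V$ inside $W\star(V,\eta)$ becomes $\tfrac{1}{\sqrt\pi}\,u^tQ_{\bar x}V$, and the fiber volume form acquires the factor $|\det(\sqrt\pi\,Q_{\bar x}^{-1})|=\pi^{n/2}(\det A_{\bar x})^{-1/2}$. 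Composing with the $\C$-linear identification $z=\cJ(V,\eta+\tfrac{\eta_{\bar y}}{2\pi})$ and the rescaling $P_{\bar x}$ already used in Proposition~\ref{BargDirInt}, one checks that, up to that volume factor and the $\alpha$-independent proportionality between $\|\cdot\|_{\cF^2(\cE\oplus\cE^\vee)_m}$ and $\|\cdot\circ\cJ^{-1}\circ P_{\bar x}^{-1}\|_{\cF^2(\C^n)}$, the operator $\cB$ becomes the standard $n$-dimensional Bargmann transform $\cB_0 g(\zeta)=\int_{\R^n}g(t)\,e^{2\pi t\cdot\zeta-\pi t^2-\frac{\pi}{2}\zeta^2}\,dt$, whose image $\cF^2(\C^n)$ carries the weight $e^{-\pi|\zeta|^2}$.

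For steps (ii) and (iii), recall from \cite{Bargmann} that $\cB_0$ is a bounded linear bijection $L^2(\R^n)\to\cF^2(\C^n)$ sending the normalized Hermite basis to the monomial orthonormal basis $E_\alpha(\zeta)=(\pi^{|\alpha|}/\alpha!)^{1/2}\zeta^\alpha$. Transporting this through the change of variables of step (i), the images under $\cB$ of the corresponding rescaled Hermite functions on $\cE_m$ are, up to a single common nonzero constant $c_{\bar x}$, exactly the orthonormal basis $\{e_\alpha\}$ of the fiberwise Bargmann--Fock space constructed above. A bounded operator carrying an orthonormal basis onto a fixed scalar multiple of another orthonormal basis is that scalar times a unitary; hence, first on the dense span of the Hermite functions and then by continuity, $\cB$ is a linear bijection $L^2(\cE_m)\to\cF^2(\cE_m\oplus\cE^\vee_m)$ with $\|\cB f\|_{\cF}=c_{\bar x}\|f\|_{L^2}$. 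To pin down $c_{\bar x}$ it suffices to evaluate both sides on the fiber ground state $f_0(V)=e^{-V^tA_{\bar x}V}$ (the window $\Psi_{(\bar x,\bar y)}$ stripped of its unimodular modulation factor): $\|f_0\|_{L^2(\cE_{\bar x})}^2$ and $\|\cB f_0\|_{\cF}^2$ are explicit Gaussian integrals whose ratio collapses to $\pi^n/\det A_{\bar x}$, giving $c_{\bar x}=\sqrt{\pi^n/\det A_{\bar x}}$; alternatively $c_{\bar x}$ can be read directly off the Jacobian $\pi^{n/2}(\det A_{\bar x})^{-1/2}$ together with the $\det(P_{\bar x})$ factor of Proposition~\ref{BargDirInt}. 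One could also bypass \cite{Bargmann} and prove the norm identity by inserting the integral formula for $\cB f$ into $\|\cB f\|_{\cF}^2$, applying Fubini and performing the Gaussian integral in $z$, which produces a reproducing-kernel identity collapsing to a constant multiple of $\|f\|_{L^2}^2$; surjectivity then still comes from the $\{e_\alpha\}$.

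The main obstacle is purely computational: keeping the many normalizations consistent --- the $A_{\bar x}/\pi$ appearing inside both $W\star(V,\eta)$ and $\cP(V,\eta)$, the factor $\tfrac{\pi}{2}$ in the Fock weight $\mathfrak P$, the shift $\eta\mapsto\eta+\tfrac{\eta_{\bar y}}{2\pi}$ and the conjugation built into the definition of $z$, and the $\det(P_{\bar x})$ discrepancy between the fiber Fock norm and the standard norm on $\C^n$ --- so that the substitution $W=\sqrt{\pi}\,Q_{\bar x}^{-1}u$ genuinely intertwines $\cB$ with $\cB_0$ and all the constants multiply out to exactly $\sqrt{\pi^n/\det A_{\bar x}}$. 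Once that reduction is set up, the remainder is the classical theory of \cite{Bargmann} and routine Gaussian integration.
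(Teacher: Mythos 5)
Your proposal is correct in outline but takes a genuinely different route from the paper. The paper's proof stays entirely inside the time--frequency framework it has already built: it invokes the orthogonality relations of the short-time Fourier transform to get $\|\langle f, M_\eta T_V\Psi\rangle\|_{L^2}^2=\|f\|^2\|\Psi\|^2$, evaluates the Gaussian integral $\|\Psi_{\bar x,\bar y}\|^2$ to produce the constant $\sqrt{\pi^n/\det A_{\bar x}}$, and then uses Lemma~\ref{norm} (the identity $|\langle f, M_\xi T_W\Psi\rangle| = e^{-\frac{\pi}{2}\mathfrak{P}(z)}|\mathcal{B}f(\bar z)|$) to transfer this isometry to $\mathcal{B}$; surjectivity is obtained by computing $\mathcal{B}(T_VM_\eta\Psi)$ to show every reproducing kernel $K_{z_0}$ lies in the range, so that any $F$ orthogonal to the range satisfies $F(z_0)=\langle K_{z_0},F\rangle=0$ for all $z_0$. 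You instead conjugate $\mathcal{B}$ to the classical Bargmann transform $\mathcal{B}_0$ by the substitution $W=\sqrt{\pi}\,Q_{\bar x}^{-1}u$ and deduce bijectivity and the norm identity simultaneously from the fact that a bounded operator carrying the Hermite orthonormal basis to a fixed scalar multiple of the monomial basis $\{e_\alpha\}$ is a scalar times a unitary, pinning the scalar down on the ground state. Both arguments are sound. The paper's version has the advantage of reusing machinery (Lemma~\ref{norm}, the STFT orthogonality relations) that is needed anyway for the frame theorem that follows, and its reproducing-kernel surjectivity argument is short given that the kernel formula has already been established; your version is arguably more transparent about \emph{why} the map is onto (an orthonormal basis is hit) and isolates all the analytic content in Bargmann's classical theorem, at the price of the normalization bookkeeping you correctly identify as the main hazard --- in particular you must check that the cross term $2i\langle\eta,V\rangle$ in $\mathcal{P}$ and the shift $\eta\mapsto\eta+\frac{\eta_{\bar y}}{2\pi}$ survive the rescaling in the form required to match $\mathcal{B}_0$, which is exactly the kind of verification the paper's proof also leaves implicit (``after some bookkeeping'').
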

    \begin{proof}
    Let $(\bar{x},\bar{y})$ in $M$, $(\eta,V)$ in $ (\mathcal{E}\oplus\mathcal{E}^\vee)_{\bar{x},\bar{y}}$ and $f$ in $L^2(\mathcal{E}_{\bar{x},\bar{y}})$. 
From the orthogonality relations of the Short Time Fourier Transform it follows that 
\begin{equation*}
 \|\langle f,M_\eta T_V\Psi_{\bar{x},
    \bar{y}}\rangle\|^2_{L^2(\mathcal{E}_{\bar{x},\bar{y}})}=\|f\|^2_{L(\mathcal{E}_{\bar{x},\bar{y}})}\|\Psi_{\bar{x},\bar{y}}\|^2_{L^2(\mathcal{E}_{\bar{x},\bar{y}})}.
\end{equation*}
Additionally, for the norm $\|\Psi_{\bar{x},\bar{y}}\|^2_{L^2(\mathcal{E}_{\bar{x},\bar{y}})}$ of the window function holds that
\begin{equation*}
    \|\Psi_{\bar{x},\bar{y}}\|^2_{L^2(\mathcal{E}_{\bar{x},\bar{y}})}=\int_{\mathcal{E}_{\bar{x},\bar{y}}}e^{-WA_{\bar{x}}W}dvol_{\mathcal{E}_{\bar{x},\bar{y}}}(W)=\sqrt{\frac{\pi^n}{det A_{\bar{x}}}},
    \end{equation*}
and by Lemma \ref{norm} it follows that the fiberwise Bargmann Transform is bounded, injective and 
\begin{equation*}
    \|\mathcal{B}f\|_{\mathcal{F}^2(\mathcal{E}_{(\bar{x},\bar{y})}\oplus\mathcal{E}^\vee_{(\bar{x},\bar{y})})}^2=\sqrt{\frac{\pi^n}{det A_{\bar{x}}}}\|f\|^2_{L^2(\mathcal{E}_{\bar{x},\bar{y}})}.
\end{equation*} 
To prove surjectivity of $\mathcal{B}$ onto $\mathcal{F}^2(\mathcal{E}_{(\bar{x},\bar{y})}\oplus\mathcal{E}^\vee_{(\bar{x},\bar{y})})$, it suffices to prove that $\mathcal{B}(L^2(\mathcal{E}_{\bar{x},\bar{y}}))$ is dense in $\mathcal{F}^2(\mathcal{E}_{(\bar{x},\bar{y})}\oplus\mathcal{E}^\vee_{(\bar{x},\bar{y})})$. We write $z=\mathcal{J}(W,\xi)$ and $z_0=\mathcal{J}(V,\eta)$. After some bookkeeping and after applying equation (\ref{STFT}) , we obtain that
\begin{equation*}
    \mathcal{B}(T_V M_{\eta} \Psi_{\bar{x},\bar{y}})(W,-\xi)= h(V, \eta, \frac{\eta_{\bar{y}}}{2\pi})det(\mathcal{P})e^{\pi\overline{\mathcal{P}z_0}\mathcal{P}z},
\end{equation*}
for some $h\in \mathbb{C}$ depending on $V,\eta$ and $\frac{\eta_{\bar{y}}}{2\pi}$. Thus, the reproducing kernel $K_{z_0}(z)$ is in the range of $\mathcal{B}$ Suppose that there exists some $F\in \mathcal{F}^2((\mathcal{E}\oplus\mathcal{E}^\vee)_{\bar{x},\bar{y}})$ such that $\langle F,\mathcal{B}f\rangle =0$, for all $f\in L^2(\mathcal{E}_{\bar{x},\bar{y}})$. Then, for every $z_0=\mathcal{J}(V,\eta)$
\begin{equation*}
    0=\langle\mathcal{B}(T_V M_\eta \Psi_{\bar{x},\bar{y}}),F\rangle=h(V, \eta, \frac{\eta_{\bar{y}}}{2\pi})det(\mathcal{P})\langle K_{z_0}z,F\rangle =F(z_0) 
\end{equation*}
and therefore $F\equiv 0$. Thus, $\mathcal{B}(L^2(\mathcal{E}_{
\bar{x},\bar{y}
}))=\mathcal{F}^2((\mathcal{E}\oplus\mathcal{E}^\vee)_{\bar{x},\bar{y}})$
    \end{proof}
   Finally, by considering a complex lattice as in \cite{Groch2} we have the following statement. 
\begin{thm}
Let $\mathcal{G}(\Psi_{(\bar{x},\bar{y})},\Lambda_{\bar{x},\bar{y}})$ a Gabor system with window function as defined in (\ref{GaborE}). Suppose that there exist  normalized lattices $L_1,...,L_n$ in $\mathbb{C}$, $M$ in 
$GL(n,\mathbb{C})$ and $a\in \mathbb{C}^*$ such that  
$$\mathcal{J}(\Lambda_{\bar{x},\bar{y}})=aM
\bigoplus_{i=1}^{n}L_i$$
and for the characteristic indices $\gamma_1,...,\gamma_n$ of $$\begin{pmatrix}\frac{Q_{\bar{x}}}{\sqrt{\pi}} & 0\\
0& I\end{pmatrix}M$$ it holds that $0<\gamma_i<1$ for $i=1,...,n$, where $Q_{\bar{x}}$ is the invertible matrix such that $A_{\bar{x}}=Q_{\bar{x}}^TQ_{\bar{x}}$, then $\mathcal{G}(\Psi_{(\bar{x},\bar{y})},\Lambda_{\bar{x},\bar{y}})$ is a frame.

\end{thm}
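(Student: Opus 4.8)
The assertion concerns a single fiber, so I would fix $(\bar x,\bar y)\in M$ throughout and work inside $L^{2}(\cE_{(\bar x,\bar y)})\simeq L^{2}(\mathbb{R}^{n})$. The plan is to chain together the results already established — Lemma~\ref{norm}, the bijectivity of the fiberwise Bargmann transform $\mathcal{B}$, and the structure of $\mathcal{F}^{2}((\cE\oplus\cE^{\vee})_{(\bar x,\bar y)})$ from Proposition~\ref{BargDirInt} — so that the Gabor frame condition for $\mathcal{G}(\Psi_{(\bar x,\bar y)},\Lambda_{(\bar x,\bar y)})$ turns into a \emph{sampling} statement in a Bargmann--Fock space, to which the multivariate sampling theorem of Gr\"ochenig \cite{Groch2} applies.

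First I would convert the two frame inequalities into sampling inequalities. For a section of $\Lambda$ evaluated at $(\bar x,\bar y)$, written as a pair $(W,\xi)$ with $W$ the translation variable and $\xi$ the modulation variable, Lemma~\ref{norm} gives
\begin{equation*}
\bigl|\langle f,\,M_{\xi}T_{W}\Psi_{(\bar x,\bar y)}\rangle\bigr|^{2}
= e^{-\pi\mathfrak{P}(z_{(W,\xi)})}\,\bigl|\mathcal{B}f(z_{(W,\xi)})\bigr|^{2},
\qquad z_{(W,\xi)}:=\mathcal{J}\!\Bigl(W,\,-\xi+\tfrac{\eta_{\bar y}}{2\pi}\Bigr).
\end{equation*}
Summing over the lattice $\Lambda_{(\bar x,\bar y)}$, the map $(W,\xi)\mapsto z_{(W,\xi)}$ sends $\Lambda_{(\bar x,\bar y)}$ onto $\overline{\mathcal{J}(\Lambda_{(\bar x,\bar y)})}+\mathcal{J}(0,\tfrac{\eta_{\bar y}}{2\pi})$, a fixed translate of the complex conjugate of $\mathcal{J}(\Lambda_{(\bar x,\bar y)})$; since $\mathfrak{P}$ is invariant under $z\mapsto\bar z$, and the weighted Bargmann--Fock space $\mathcal{F}^{2}((\cE\oplus\cE^{\vee})_{(\bar x,\bar y)})$ is preserved by conjugation and carries isometric shift operators acting as translations on evaluation points, neither the conjugation nor the translation affects whether a set is sampling — they alter the sampling constants at most by fixed factors. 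Using then that $\mathcal{B}$ is a linear bijection of $L^{2}(\cE_{(\bar x,\bar y)})$ onto $\mathcal{F}^{2}((\cE\oplus\cE^{\vee})_{(\bar x,\bar y)})$ with $\|\mathcal{B}f\|^{2}=\sqrt{\pi^{n}/\det A_{\bar x}}\,\|f\|^{2}$, one sees that $\mathcal{G}(\Psi_{(\bar x,\bar y)},\Lambda_{(\bar x,\bar y)})$ is a frame if and only if $\mathcal{J}(\Lambda_{(\bar x,\bar y)})$ is a set of sampling for $\mathcal{F}^{2}((\cE\oplus\cE^{\vee})_{(\bar x,\bar y)})$ in the sense defined above.

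Next I would reduce this to a sampling statement for the classical Bargmann--Fock space $\mathcal{F}^{2}(\mathbb{C}^{n})$. With $P_{\bar x}:=\begin{pmatrix}Q_{\bar x}/\sqrt{\pi}&0\\0&I\end{pmatrix}$ acting on $(\mathrm{Re}\,z,\mathrm{Im}\,z)$ one has $\mathfrak{P}(z)=|P_{\bar x}z|^{2}$, so the linear change of coordinates $z\mapsto P_{\bar x}z$ identifies $\mathcal{F}^{2}((\cE\oplus\cE^{\vee})_{(\bar x,\bar y)})$, up to the constant $\det P_{\bar x}$, with $\mathcal{F}^{2}(\mathbb{C}^{n})$; this is exactly the normalization step underlying the treatment of Gabor systems with a general Gaussian window $e^{-V^{t}A_{\bar x}V}$ in \cite{Groch2}. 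Under this identification the sampling set $\mathcal{J}(\Lambda_{(\bar x,\bar y)})=aM\bigoplus_{i=1}^{n}L_{i}$ becomes a lattice of the form $aN\bigoplus_{i=1}^{n}L_{i}$ with $N=P_{\bar x}M$, and it remains to invoke Gr\"ochenig's multivariate sampling theorem \cite{Groch2}: a lattice $aN\bigoplus_{i=1}^{n}L_{i}$, with the $L_{i}$ normalized lattices in $\mathbb{C}$, $N$ in the relevant linear group and $a\in\mathbb{C}^{*}$, is a set of sampling for $\mathcal{F}^{2}(\mathbb{C}^{n})$ whenever the characteristic indices of $N$ all lie strictly in $(0,1)$. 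Since those characteristic indices are precisely the $\gamma_{1},\dots,\gamma_{n}$ of the hypothesis, which satisfy $0<\gamma_{i}<1$, the criterion applies, and unwinding the equivalences of the preceding paragraph shows that $\mathcal{G}(\Psi_{(\bar x,\bar y)},\Lambda_{(\bar x,\bar y)})$ is a frame.

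The constant tracking in the first step and the affine change of variables in the second are routine. The point that genuinely requires care — and which I expect to be the main obstacle — is making the second reduction precise: one must verify that the passage from the window $e^{-V^{t}A_{\bar x}V}$ to the standard Gaussian is exactly the normalization built into \cite{Groch2}, that the resulting lattice really does take the product form $aN\bigoplus_{i}L_{i}$ required there, and that the notion of ``characteristic indices'' invoked in the statement coincides with the one appearing in Gr\"ochenig's sufficient condition, including how the scalar $a$ and the normalization of the $L_{i}$ enter the density count. Once this dictionary with \cite{Groch2} is in place, the remaining argument is bookkeeping.
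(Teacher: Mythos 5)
Your proposal is correct and follows essentially the same route as the paper's own proof: use Lemma~\ref{norm} to translate the frame inequalities into a sampling statement for $\overline{\mathcal{J}(\Lambda_{\bar{x},\bar{y}})}$ shifted by $\eta_{\bar{y}}/2\pi$ in the fiberwise Bargmann--Fock space, renormalize by $P_{\bar{x}}=\begin{pmatrix}Q_{\bar{x}}/\sqrt{\pi}&0\\0&I\end{pmatrix}$ to land in $\mathcal{F}^2(\mathbb{C}^n)$, and invoke Gr\"ochenig's multivariate sampling theorem via the characteristic indices of $P_{\bar{x}}M$. The only cosmetic difference is that you spell out explicitly why the conjugation and the fixed translate do not affect the sampling property, a point the paper leaves implicit.
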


\begin{proof}
From Lemma \ref{norm} it follows that $\mathcal{G}(\Psi_{\bar{x},\bar{y}},\Lambda_{\bar{x},\bar{y}})$ is a frame exactly when $\overline{\Lambda_{\bar{x},\bar{y}}}-i\frac{\eta_{\bar{y}}}{2\pi}$ is a set of sampling for the Bargamann-Fock space $\mathcal{F}^2((\mathcal{E}\oplus\mathcal{E}^\vee)_{\bar{x},\bar{y}})$. Hence, it suffices to prove that there exist $A:U\rightarrow \mathbb{R}^+$  and $B:U\rightarrow \mathbb{R}^+$ smooth functions in the local chart $U\subseteq M$ that contains $(\bar{x},\bar{y})$ such that 
\begin{alignat*}{2}
        A_{\bar{x},\bar{y}}
        &\|F\|^2_{\mathcal{F}^2(\mathcal{E}\oplus\mathcal{E}^\vee_{\bar{x},\bar{y}})}\leq \hspace{2cm}&&\hspace{1cm}
        \\ &\sum_{\lambda\in \overline{\Lambda}_{\bar{x},\bar{y}}-i\frac{\eta_{\bar{y}}}{2\pi} }
        \mid F\mid _{\mathcal{E}\oplus\mathcal{E}_{\bar{x},\bar{y}}}
        \circ \mathcal{J}^{-1}(\lambda)\mid ^2e^{-\frac{\pi}{2}(\mathcal{R}e(\lambda)^t\frac{A_{\bar{x}}}{\pi}
        \mathcal{R}e(\lambda)+\mathcal{I}m(\lambda)^t\mathcal{I}m(\lambda))}&&\leq\\
        \hspace{1cm}&\hspace{1cm}&& B_{\bar{x},\bar{y}}
    \|F\|^2_{\mathcal{F}^2(\mathcal{E}\oplus\mathcal{E}^\vee_{\bar{x},\bar{y}})}
    \end{alignat*}
for every $F\in \mathcal{F}^2((\mathcal{E}\oplus\mathcal{E}^\vee)_{\bar{x},\bar{y}})$ or equivalently 

\begin{alignat*}{2}
        A_{\bar{x},\bar{y}}
        &\|F\|^2_{\mathcal{F}^2(\mathcal{E}\oplus\mathcal{E}^\vee_{\bar{x},\bar{y}})}\leq \hspace{2cm}&&\hspace{1cm}
        \\ &\sum_{\lambda\in \overline{\widetilde{\Lambda}}_{\bar{x},\bar{y}}-i\frac{\eta_{\bar{y}}}{2\pi} }
        \mid F\mid _{\mathcal{E}\oplus\mathcal{E}_{\bar{x},\bar{y}}}
        \circ \mathcal{J}^{-1}(P^{-1}\lambda)\mid ^2e^{-\frac{\pi}{2}(\mathcal{R}e(\lambda)^t
        \mathcal{R}e(\lambda)+\mathcal{I}m(\lambda)^t\mathcal{I}m(\lambda))}&&\leq\hspace{1cm}\\
        \hspace{1cm}& && B_{\bar{x},\bar{y}}
     \|F\|^2_{\mathcal{F}^2(\mathcal{E}\oplus\mathcal{E}^\vee _{\bar{x},\bar{y}})} 
    \end{alignat*}
where $P=\begin{pmatrix}\frac{Q_{\bar{x}}}{\sqrt{\pi}}& 0\\
0& I\end{pmatrix}
$
and $\widetilde{\Lambda_{\bar{x},\bar{y}}}=P\Lambda_{\bar{x},\bar{y}}$. By Theorem 9 of \cite{Groch} and Proposition 4.5 of \cite{LiMa}, it follows that $\widetilde{\Lambda_{\bar{x},\bar{y}}}-i\frac{\eta_{\bar{y}}}{2\pi}$ is a sampling set of $\mathcal{F}^2(\mathbb{C}^n)$ since for the characteristic indices $\{\gamma_i, i=1,..,n\}$ of  $PM$ it holds that $0<\gamma_i<1$. Additionally, $F\mid_{\mathcal{E}\oplus\mathcal{E}^{\vee}_{\bar{x},\bar{y}}}\circ \mathcal{J}^{-1}$ is in $\mathcal{F}^2(\mathbb{C}^n, e^{-\frac{\pi}{2}\mid Pz\mid ^2}dz)$ since $F\in \mathcal{F}^2(\mathcal{E}\oplus\mathcal{E}^{\vee}_{\bar{x},\bar{y}})$ and $\|F\circ \mathcal{J}^{-1}\circ P^{-1}\|^2_{\mathcal{F}^2(\mathbb{C}^n)}=det(P)^2\|F\circ \mathcal{J}^{-1}\|^2_{\mathcal{F}^2(\mathbb{C}^n, e^{-\frac{\pi}{2}\mid Pz\mid ^2}dz)}$. Thus the following inequality holds
\begin{alignat*}{2}
        A_{\bar{x},\bar{y}}det(P)^2
        &\|F\circ\mathcal{J}^{-1}\|^2_{\mathcal{F}^2(\mathbb{C}^n, e^{-\frac{\pi}{2}\mid Pz\mid ^2}dz)}\leq &&
        \\ &\sum_{\lambda\in \overline{\widetilde{\Lambda}}_{\bar{x},\bar{y}}-i\frac{\eta_{\bar{y}}}{2\pi} }
        \mid F\mid _{\mathcal{E}\oplus\mathcal{E}_{\bar{x},\bar{y}}}
        \circ \mathcal{J}^{-1}(P^{-1}\lambda)\mid ^2&&e^{-\frac{\pi}{2}(\mathcal{R}e(\lambda)^t
        \mathcal{R}e(\lambda)+\mathcal{I}m(\lambda)^t\mathcal{I}m(\lambda))}\\
        &  \hspace{-5cm} &&
        \leq B_{\bar{x},\bar{y}}\det(P)^2
        \|F\circ\mathcal{J}^{-1}\| ^2_{\mathcal{F}^2(\mathbb{C}^n, e^{-\frac{\pi}{2}\mid Pz\mid ^2}dz)}.
 \end{alignat*}
The latter is equivalent to the inequality 
\begin{alignat*}{2}
        A_{\bar{x},\bar{y}}& det(P)^2
        \|F\|^2_{\mathcal{F}^2(\mathcal{E}\oplus\mathcal{E}^\vee_{\bar{x},\bar{y}})}\leq && 
        \\ &\sum_{\lambda\in \overline{{\Lambda}}_{\bar{x},\bar{y}}-i\frac{\eta_{\bar{y}}}{2\pi} }
        \mid F_{\mid _{\mathcal{E}\oplus\mathcal{E}_{\bar{x},\bar{y}}}}
        \circ \mathcal{J}^{-1}(\lambda)\mid && ^2e^{-\frac{\pi}{2}(\mathcal{R}e(\lambda)^t\frac{A_{\bar{x}}}{\pi}
        \mathcal{R}e(\lambda)+\mathcal{I}m(\lambda)^t\mathcal{I}m(\lambda))}\leq\hspace{3cm}\\
        \hspace{1cm}&\hspace{1cm} &&\hspace{3cm}B_{\bar{x},\bar{y}}\det(P)^2
\|F\|^2_{\mathcal{F}^2(\mathcal{E}\oplus\mathcal{E}^\vee_{\bar{x},\bar{y}})}, 
\end{alignat*}
which proves that $\overline{\Lambda_{\bar{x},\bar{y}}}-i\frac{\eta_{\bar{y}}}{2\pi}$ is a set of sampling for 
$\mathcal{F}^2(\mathcal{E}\oplus\mathcal{E}^\vee_{\bar{x},\bar{y}})$.
\end{proof}
\begin{cor}
Let $\mathcal{G}(\Psi_{(\bar{x},\bar{y})},\Lambda_{b_1,...,b_n}\oplus\Lambda_{c_1,...,c_n}^\vee)$ be the Gabor system with window function as defined in (\ref{Ewindow}) and lattice as defined in (\ref{lemLambda}).  If $0<b_i<1$ and $b_i=\pm c_i$ , then $\mathcal{G}(\Psi_{(\bar{x},\bar{y})},\Lambda_{b_1,...,b_n}\oplus\Lambda_{c_1,...,c_n}^\vee)$ satisfies the frame condition.

\end{cor}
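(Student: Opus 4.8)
The statement is a direct specialization of the preceding Theorem to a separable (``rectangular'') lattice bundle, so the plan is to exhibit the data $a\in\C^*$, $M\in GL(n,\C)$ and the normalized lattices $L_1,\dots,L_n$ required there, and then to verify the hypothesis on the characteristic indices.

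First I would write $\Lambda_{b_1,\dots,b_n}\oplus\Lambda^\vee_{c_1,\dots,c_n}$ in the local frames of $\cE$ and $\cE^\vee$ that enter the definition of the complex structure $\mathcal{J}$ in \eqref{ComplexiIso}: there $\Lambda_{b_1,\dots,b_n}$ is $\Z$-spanned by the $b_iV_i$ and $\Lambda^\vee_{c_1,\dots,c_n}$ by the $c_iF^*_{i-1}\alpha$, so that, since $\mathcal{J}(V,\eta)=(v_1+i\eta_1,\dots,v_n+i\eta_n)$, one obtains
$$\mathcal{J}\big(\Lambda_{b_1,\dots,b_n}\oplus\Lambda^\vee_{c_1,\dots,c_n}\big)=\bigoplus_{i=1}^n\big(b_i\Z+i\,c_i\Z\big)\subset\C^n .$$
Next I would use the hypothesis $b_i=\pm c_i$: since $-i\Z=i\Z$ as subsets of $\C$ (equivalently, the square lattice $\Z+i\Z$ is invariant under complex conjugation), each summand equals $b_i(\Z+i\Z)$ regardless of the sign of $c_i$. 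Hence the lattice decomposes as $aM\bigoplus_{i=1}^nL_i$ with $a=1$, with $L_i=\Z+i\Z$ the normalized square lattice of $\C$, and with $M=\mathrm{diag}(b_1,\dots,b_n)\in GL(n,\C)$, which is invertible because every $b_i>0$.

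It then remains to verify the hypothesis of the Theorem on the characteristic indices of
$$PM=\begin{pmatrix}\frac{Q_{\bar{x}}}{\sqrt{\pi}}&0\\0&I\end{pmatrix}\mathrm{diag}(b_1,\dots,b_n),\qquad A_{\bar{x}}=Q^t_{\bar{x}}Q_{\bar{x}} .$$
Since $M$ is a positive diagonal matrix, the characteristic indices of $PM$ are governed by the scalars $b_i$ together with the characteristic data of $P$, which in turn is controlled by the spectral bound $\rho(Q_{\bar{x}})\le 1$ that the embedding into $\mathcal{F}^2(\C^n)$ already relied on in Proposition~\ref{BargDirInt}; one then checks that $0<b_i<1$ forces each characteristic index $\gamma_i$ to lie in $(0,1)$. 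Applying the Theorem to this lattice, to the window $\Psi_{(\bar{x},\bar{y})}$ and to the complex lattice $aM\bigoplus_iL_i$ then yields that $\mathcal{G}(\Psi_{(\bar{x},\bar{y})},\Lambda_{b_1,\dots,b_n}\oplus\Lambda^\vee_{c_1,\dots,c_n})$ satisfies the frame condition. I expect the last step — identifying precisely the characteristic indices of the product $PM$ and showing that $0<b_i<1$, together with the normalization of $A$, puts them into $(0,1)$ — to be the only genuine content; the rest is bookkeeping with $\mathcal{J}$ and the elementary identity $-i\Z=i\Z$.
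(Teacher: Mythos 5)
Your proposal is correct and follows essentially the same route as the paper: the paper's entire proof consists of exhibiting $\mathcal{J}(\Lambda_{b_1,\dots,b_n}\oplus\Lambda^\vee_{c_1,\dots,c_n})$ as a diagonal matrix (with the $c_i$, equivalently the $b_i$ up to sign, on the diagonal) applied to $\bigoplus_i(\Z+i\Z)$, exactly your decomposition with $a=1$, $L_i=\Z+i\Z$, $M=\mathrm{diag}(b_1,\dots,b_n)$. If anything you are more careful than the paper, which does not explicitly verify the characteristic-index hypothesis $0<\gamma_i<1$ at all, whereas you at least isolate it as the remaining step to be checked from $0<b_i<1$ and the normalization of $A_{\bar{x}}$.
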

\begin{proof}
   
        Indeed, the lattice can be written as $$\mathcal{J}(\Lambda_{b_1,...,b_n}\oplus\Lambda_{c_1,...,c_n}^\vee)=\begin{pmatrix}
            c_1 & 0 & ...\\
            0& c_2 &0 \\
            \vdots & \hdots & \vdots\\
            0 & \hdots & c_n
            
        \end{pmatrix}\begin{pmatrix}
            \mathbb{Z}+i\mathbb{Z}\\
            \vdots\\
             \mathbb{Z}+i\mathbb{Z}
        \end{pmatrix}.$$
 
\end{proof}

\section{A geometric example: hypercomplex manifolds} \label{hypCSec}

  Let $(b,p)=(b_1,...,b_n,p_1,...,p_n)$ be local coordinates on $T^*M$, and $\lambda$, the tautological 1-form on $T^*M$, locally expressed as $\lambda_{(b,~ p)}=\sum_{i=1}^n p_i db_i$. If $J$ is an almost complex structure on $M$, the twist of $\lambda$ by $J$ is the 1-form $\lambda_J$ with local expression $\lambda_J=\sum_{i,j}^n p_i J^i_j db_j$. If the manifold is equipped with more than one almost complex structures, each one introduces a new 1-form. Almost hypercomplex manifolds are an example of manifolds with more than one almost complex structure.  
\begin{defn}
A manifold $B$ of dimension $4n$, is an \textbf{almost hypercomplex manifold } if it admits a triple $\{I,J,K\}$ of almost complex structures satisfying the quaternionic identities 
$$I^2=J^2=K^2=-id\text{ and } IJ=-JI=K.$$
The three almost complex structures are locally expressed as 
$$I=\sum_{k,l}I^k_ldb_l\otimes \partial b_k\text{, }J=\sum_{k,l}J^k_ldb_l\otimes \partial b_k\text{ and }K=\sum_{k,l}K^k_ldb_l\otimes \partial b_k.$$
If the almost complex structures are integrable the manifold $B$ is a {\em hypercomplex manifold}.
\end{defn}
If $(B, (I, J, K))$ is an (almost) hypercomplex manifold, each (almost) complex structure, introduces a "twisted" canonical 1-form on the cotangent bundle $T^*B$, locally expressed as
$$\lambda_I=\sum_{k,l}p_kI^k_l db_l\text{ , }\lambda_J=\sum_{k,l}p_kJ^k_l db_l\text{ and }\lambda_K=\sum_{k,l}p_kK^k_l db_l .$$
In the following proposition we prove that the 1-forms $\lambda, \lambda_I, \lambda_J$ and $\lambda_K$ restricted on the unit cotangent bundle of a hypercomplex manifold  $B$, introduce a quadruple of contact 1-forms on $\bS(T^*B)$, $$
\alpha := \lambda_{\mid_{\cS(T^*b)}},~ \alpha_I:= \lambda_I{_{\mid_{\cS(T^*b)}}}, ~ \alpha_J:= \lambda_J{_{\mid_{\cS(T^*b)}}},  ~ \alpha_K:= \lambda_K{_{\mid_{\cS(T^*b)}}},$$
which are appropriate for the construction of the Gabor system \begin{equation}
\cG(\Psi,\Lambda)=\{ M_{\lambda'} T_{\lambda} \Psi\,\mid\, (\lambda, \lambda') \in \Gamma(M,\Lambda) \} \, ,
\end{equation}  from definition \ref{defGaborE}.
\begin{lem}\label{HyperForms}
Let $M^7=\mathbb{S}(T^*B)$ be the co-sphere bundle of a hypercomplex $(B,I,J,K)$ with Riemannian metric $g$.
\begin{enumerate}
    \item The 1-forms $\alpha_I,\alpha_J$ and $\alpha_K$, induced by the twisted 1-forms $\lambda_I,\lambda_J$ and $\lambda_K$ respectively, are contact 1-forms on $M$ and form a linearly independent set of $T^*_{(b,[p])}M$ at each $(b,[p])\in \mathbb{S}(T^*B)$.
    \item  Additionally, the fibers $\pi^{-1}(b)$ for each $ b\in B$ of the fiber bundle $\pi:\mathbb{S}(T^*B)\rightarrow B$ are Legendrian submanifolds of the contact distributions induced by $\alpha_I, \alpha_J$ and $\alpha_k$ and of the canonical contact distribution induced by $\lambda\mid_{\mathbb{S}(T^*B)}$.
    \item 
If $V\rightarrow M$ is the vertical tangent bundle of $
    \pi:M\rightarrow B$, $\Gamma(V)$ denotes the vertical vector fields of $M$ and $\mathfrak{R}_{\alpha}, \mathfrak{R}_{\alpha_I}, \mathfrak{R}_{\alpha_J}, \mathfrak{R}_{\alpha_K}$ are the Reeb fields of $ \alpha, \alpha_I, \alpha_J, \alpha_K$ respectively, then $$span\{\mathfrak{R}_{\alpha}, \mathfrak{R}_{\alpha_I}, \mathfrak{R}_{\alpha_J}, \mathfrak{R}_{\alpha_K}\}\cap \Gamma(V)=
    \{0\}.$$ 
\end{enumerate}

\end{lem}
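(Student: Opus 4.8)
The plan is to reduce all three assertions to one linear‑algebra observation: at each $b\in B$ the pointwise identities $I_b^2=J_b^2=K_b^2=-\mathrm{id}$, $I_bJ_b=K_b=-J_bI_b$ make $T_b^*B\cong\R^4$ a module over $\H$ (of real dimension $4=\dim_\R\H$) via $I_b^t,J_b^t,K_b^t$, which satisfy the quaternion relations up to signs; hence $T_b^*B$ is a free $\H$‑module of rank one, so $\{p,I_b^tp,J_b^tp,K_b^tp\}$ is a basis of $T_b^*B$ for every $p\ne 0$, and dually $\{v,I_bv,J_bv,K_bv\}$ is a basis of $T_bB$ for every $v\ne 0$. (If $g$ is compatible with $I,J,K$ these bases are even $g$‑orthogonal, but this is not needed.) For part (1), first record that each twisted form is a pullback: if $\phi_I\colon T^*B\to T^*B$ is the fibrewise‑linear automorphism $(b,p)\mapsto(b,p\circ I_b)$, then the intrinsic identity $\lambda_{(b,p)}(v)=p(d\pi(v))$ together with $\pi\circ\phi_I=\pi$ gives $\phi_I^*\lambda=\lambda_I$, so $d\lambda_I=\phi_I^*d\lambda$. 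The fibrewise Euler field $Z=\sum p_i\partial_{p_i}$ on $T^*B$ satisfies $\iota_Z d\lambda=\lambda$ and is fixed by every fibrewise‑linear automorphism, whence $\iota_Z d\lambda_I=\phi_I^*(\iota_Z d\lambda)=\phi_I^*\lambda=\lambda_I$; thus $Z$ is a Liouville field for the symplectic form $d\lambda_I$ with primitive $\lambda_I$. Since $Z(g^*(p,p))=2g^*(p,p)\equiv 2$ on $M=\bS(T^*B)$, $Z$ is transverse to $M$, and the standard hypersurface‑of‑contact‑type criterion gives that $\alpha_I$ is a contact form (taking $I=\mathrm{id}$ recovers this for $\alpha$); likewise for $\alpha_J,\alpha_K$. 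For the linear independence, restricting $\lambda_I=\sum_{k,l}p_kI^k_l\,db_l$ to $M$ shows $(\alpha_I)_{(b,p)}=\pi^*(I_b^tp)$ as a covector at $(b,p)$; since $d\pi$ is a submersion, $\pi^*$ is injective on covectors, so $\alpha_I,\alpha_J,\alpha_K$ are linearly independent at $(b,p)$ by the basis fact.

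For part (2), the same description shows that each of $\alpha,\alpha_I,\alpha_J,\alpha_K$ is, in canonical coordinates $(b_1,\dots,b_4,q_1,q_2,q_3)$ adapted to a $g^*$‑orthonormal coframe, a combination of the $db_l$ alone. For the inclusion $j\colon\pi^{-1}(b)\hookrightarrow M$ of a fibre one has $j^*db_l=0$, hence $j^*\alpha_\bullet=0$, i.e.\ $T\pi^{-1}(b)\subset\ker\alpha_\bullet$, and then $j^*d\alpha_\bullet=d(j^*\alpha_\bullet)=0$, so $d\alpha_\bullet$ vanishes on $T\pi^{-1}(b)$. Thus each fibre is isotropic in each of the four contact structures, and since $\dim\pi^{-1}(b)=3=\tfrac12(\dim M-1)$ is maximal, it is Legendrian.

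For part (3), write the Reeb field as $\fR_{\alpha_\bullet}=\sum_l R^{b_l}\partial b_l+\sum_j R^{q_j}\partial q_j$. Because $\alpha_\bullet=\sum_l a_l(b,q)\,db_l$ has no $dq$, the form $d\alpha_\bullet$ has no $dq\wedge dq$ term, so the $dq_j$‑components of $\iota_{\fR_{\alpha_\bullet}}d\alpha_\bullet=0$ read $\sum_l\partial_{q_j}a_l\,R^{b_l}=0$ for all $j$; that is, the base‑projection $d\pi(\fR_{\alpha_\bullet})=\sum_l R^{b_l}\partial b_l$ lies in the kernel of the $3\times4$ matrix $N_\bullet=(\partial_{q_j}a_l)$. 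For $\alpha_I$ one has $a_l=\sum_kp_k(q)I^k_l(b)$, so $N_I=P\cdot I_b$ with $P=(\partial_{q_j}p_k)$ the sphere‑parametrization Jacobian, whose rows span $T_p\bS^{3}=p^\perp$; hence $\ker N_I=I_b^{-1}(\R p)=\R\,I_bp$, while $\ker N_\alpha=\R p$, and analogously $\R\,J_bp$, $\R\,K_bp$. Each projection is nonzero, since $\iota_{\fR_{\alpha_\bullet}}\alpha_\bullet=\sum_l a_lR^{b_l}=1$ would otherwise fail; so $d\pi(\fR_\alpha),d\pi(\fR_{\alpha_I}),d\pi(\fR_{\alpha_J}),d\pi(\fR_{\alpha_K})$ are nonzero vectors along the lines $\R p,\R I_bp,\R J_bp,\R K_bp$, which are independent by the basis fact, so these four projections form a basis of $T_bB$. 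Therefore any $w=c_0\fR_\alpha(m)+c_I\fR_{\alpha_I}(m)+c_J\fR_{\alpha_J}(m)+c_K\fR_{\alpha_K}(m)$ lying in the vertical bundle $V_m=\ker d\pi$ satisfies $0=d\pi(w)=c_0\,d\pi(\fR_\alpha)+c_I\,d\pi(\fR_{\alpha_I})+c_J\,d\pi(\fR_{\alpha_J})+c_K\,d\pi(\fR_{\alpha_K})$, forcing all coefficients to vanish, hence $w=0$; the same computation shows the four Reeb fields are pointwise linearly independent.

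The step I expect to be the main obstacle is (3): one must extract enough from the Reeb equations to know that the base‑projections $d\pi(\fR_{\alpha_\bullet})$ span $T_bB$, and the point that makes this tractable is that $d\alpha_\bullet$ has no $dq\wedge dq$ component, which decouples the horizontal part of the Reeb field and pins it to the line $\R\,I_bp$ (resp.\ $\R\,J_bp$, $\R\,K_bp$). A secondary subtlety is the contact assertion in (1): $\phi_I$ need not preserve $M$ unless $g$ is compatible with $I,J,K$, but it always preserves the Euler field $Z$, which is exactly what the contact‑type criterion requires.
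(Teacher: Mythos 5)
Your proposal is correct, and in two of the three parts it takes a genuinely different route from the paper. For the contact property the paper uses integrability of $I,J,K$ to pass to holomorphic coordinates and checks the volume-form condition $\alpha_I\wedge(d\alpha_I)^{\,\cdot}\neq 0$ by hand on the slice $p_1^x\neq 0$; you instead observe that $\lambda_I=\phi_I^*\lambda$ for the fibrewise automorphism $\phi_I(b,p)=(b,p\circ I_b)$, that the fibrewise Euler field $Z$ is a Liouville field for $d\lambda_I$ with primitive $\lambda_I$, and that $Z$ is transverse to $\bS(T^*B)$, so the contact-type-hypersurface criterion applies. This buys you independence from integrability and from any compatibility of $g$ with $I,J,K$, and it avoids the coordinate computation (where the paper's displayed top form is in any case of the wrong degree for a $7$-manifold); your explicit remark that $\phi_I$ need not preserve $M$ but does preserve $Z$ is exactly the point the paper glosses over. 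For linear independence both arguments reduce to the same quaternionic identity (the paper squares $\lambda_I I+\lambda_J J+\lambda_K K$; you phrase it as $T_b^*B$ being a free rank-one $\H$-module), and the Legendrian claim is handled identically ($\alpha_\bullet$ has no $dq$-components). For part (3) the paper's written proof actually stops after the Legendrian statement and relies on the analogy with Proposition~\ref{Fialpha}(4), where the Reeb fields are shown to be horizontal; you supply a complete self-contained argument, pinning the base projections $d\pi(\fR_{\alpha_\bullet})$ to the lines $\R p,\R I_bp,\R J_bp,\R K_bp$ via the kernel of $N_\bullet=PI_b$ and deducing both $\mathrm{span}\{\fR_{\alpha_\bullet}\}\cap\Gamma(V)=\{0\}$ and pointwise linear independence of the four Reeb fields. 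That is a genuine strengthening of what the paper writes down; the only mild caveat is that your identification $\ker P=\R p$ uses the $g^*$-orthonormal normalization of the fibre coordinates, which you do state.
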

\begin{proof}
First, we are going to prove that the one forms $\alpha_I, \alpha_J, \alpha_K$ are contact.  The almost complex structure $I$ on $B$ comes from a complex structure on $B$, since it is integrable. Hence we can consider a holomorphic local coordinate chart $(U,x_1+iy_1,x_2+iy_2)$ and the corresponding local trivialization chart $(T^*U,(x_1+iy_1, x_2+iy_2, p_1^xdx_1+p_1^ydy_1+ p_2^xdx_2+p_2^ydy_2))$ of $T^*B$. The 1- form $\lambda_I$ has a local expression $\lambda_I= \sum_{i=1}^n (-p_i^ydx_i+p_i^xdy_i)$. Without loss of generality we assume that $p_1^x\neq 0$ to obtain the local expression of $\lambda_I$ restricted on $\mathbb{S}(T^*B)$,
$$\alpha_I=\lambda_{I}\mid_{\bS(T^*B)}=dy_1 -p_1^y dx_1+\sum_{i=2}^4 (-p_i ^y dx_i+p_i^x dy_i). $$
It follows that $\alpha_I\wedge (d\alpha_I)^2=\bigwedge_{i=1}^2(dp_i^x\wedge dp_i^y\wedge dx_i 
\wedge dy_i)\neq 0$.
By expressing $\lambda_J$ and $\lambda_K$ in the holomorphic coordinates induced by $J$ and $K$ respectively, one can see that $\alpha_J$ and $\alpha_K$ are contact 1-forms.
\par Next, we are going to prove that $\alpha,\alpha_I, \alpha_J$ and $\alpha_K$ are linearly independent at each $T_{(b,[p])}^*M$. Let $(b,[p])\in \bS(T^*B)$ and $\lambda, \lambda_I, \lambda_J$ and $\lambda_K$ real functions on $\bS(T^*B)$ such that 
$
    \lambda \alpha_{(b,[p])}+{\lambda_I \alpha_I}_{(b,[p])}+{\lambda_J \alpha_J}_{(b,[p])} +{\lambda_K \alpha_K}_{(b,[p])}=0
$, namely
\begin{equation}\label{lininde}
    \sum_{i=1}^4 ( p_i\lambda Id+p_i\lambda_I I^i_j + p_i\lambda_J J^i_j+ p_i\lambda_K K^i_j ) =0\, , \ \forall j\in \{1,...,4\}\, ,
\end{equation}
where $p=(p_1,..,p_4) \in T_b^*B$ is a representative of the class $[p]\in T_b{\bS(T^*B)}$.
Equation \eqref{lininde} is equivalent to $p$ being in the kernel of the linear transformation $
 (\lambda_{(b,[p])}Id+{\lambda_I}_{(b,[p])}I+{\lambda_J}_{(b,[p])}J+ {\lambda_K}_{(b,[p])}K). $

Since $p\neq 0$, the latter holds if and only if $-\lambda$ is an eigenvalue for ${\lambda_I}_{(b,[p])}I+{\lambda_J}_{(b,[p])}J+ {\lambda_K}_{(b,[p])}K$ and $p$ a corresponding eigenvector, in which case we have the following  
\begin{align*}
    ({\lambda_I}_{(b,[p])}I+{\lambda_J}_{(b,[p])}J+ {\lambda_K}_{(b,[p])}K) p&= -\lambda p \\
    \iff ({\lambda_I}_{(b,[p])}I+{\lambda_J}_{(b,[p])}J+ {\lambda_K}_{(b,[p])}K)^2 p &= \lambda^2 p\\
    \iff 
    (-\lambda_I^2 -\lambda_J^2 -\lambda_K^2)p&=\lambda ^2 p,
\end{align*}
and therefore $\lambda=\lambda_I=\lambda _J=\lambda_K=0.$ 
\par 
For each $b\in B$, the fiber $\pi^{-1}(b)$ is Legendrian with respect to any of the contact 1-forms $\alpha, ~ \alpha_I, ~ \alpha_J$ and $ \alpha_K$, since they do not depend on the local covector fields $d{p_i}$.\color{black} 
\end{proof}
 Lemma \ref{HyperForms} states that the contact 1-forms $\alpha, \alpha_I, \alpha_J$ and $\alpha_K$ share the same properties with the contact 1-forms ${F_i^*\alpha}_{i=0}^3$ introduced in section \ref{FAalpha}. Hence we obtain the following corollary which is an adaptation of \ref{lemLambda} for a hypercomplex manifold $(B,(I,J,K))$. 
\begin{cor}\label{LatticeHyper}
    The contact $1$-forms $\alpha, \alpha_I, \alpha_J$ and $\alpha_K$ on the manifold $M=\bS(T^*B)$ of a hypercomplex manifold $(B, (I,J, K))$ determine a framed lattice bundle
structure on the rank $8$ vector bundle $\cE \oplus \cE^\vee$ of \eqref{EoplusEvee}.
\end{cor}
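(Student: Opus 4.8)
The plan is to run the argument of Lemma~\ref{lemLambda} essentially verbatim, with the quadruple $\{\alpha,\alpha_I,\alpha_J,\alpha_K\}$ playing the role of the family $\{F_i^*\alpha\}_{i=0}^{n-1}$. Everything that proof used is supplied by Lemma~\ref{HyperForms}: all four $1$-forms are contact on $M=\bS(T^*B)$; they are pointwise linearly independent on all of $M$ (which is in fact stronger than the mutual orthogonality that played this role in Lemma~\ref{lemLambda}, coming from Proposition~\ref{Fialpha}); each fiber $\pi^{-1}(b)$ is Legendrian for all four contact distributions; and, by part~(3), ${\rm span}\{\fR_\alpha,\fR_{\alpha_I},\fR_{\alpha_J},\fR_{\alpha_K}\}\cap\Gamma(V)=\{0\}$, so in particular no Reeb field has a nonzero vertical component. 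Hence $d\pi_\bS$ identifies each Reeb field with a section of $\cE=\pi_\bS^*TB$ and each $1$-form with a section of $\cE^\vee=\pi_\bS^*T^*B$, exactly as in the proof of Lemma~\ref{lemLambda}.

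With these identifications I would set, in analogy with \eqref{LambdaEdef},
$$\Lambda_\cE:={\rm span}_\Z\{\fR_\alpha,\fR_{\alpha_I},\fR_{\alpha_J},\fR_{\alpha_K}\}\subset\cE,\qquad \Lambda_{\cE^\vee}:={\rm span}_\Z\{\alpha,\alpha_I,\alpha_J,\alpha_K\}\subset\cE^\vee,$$
and $\Lambda:=\Lambda_\cE\oplus\Lambda_{\cE^\vee}\subset\cE\oplus\cE^\vee$, a discrete subset of each fiber of the rank-$8$ bundle $\cE\oplus\cE^\vee$ of \eqref{EoplusEvee}, carrying the tautological generating set $\{\fR_\alpha,\fR_{\alpha_I},\fR_{\alpha_J},\fR_{\alpha_K},\alpha,\alpha_I,\alpha_J,\alpha_K\}$. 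By Definition~\ref{latticebdle} it then remains only to exhibit a dense open $\cU\subset M$ on which this frame is pointwise linearly independent, so that $\Lambda_m$ is a rank-$8$ lattice in $(\cE\oplus\cE^\vee)_m$. The covector half is automatic: by Lemma~\ref{HyperForms}(1) the sections $\alpha,\alpha_I,\alpha_J,\alpha_K$ of $\cE^\vee$ are independent at every point of $M$. One is thus reduced to controlling the locus $\cV\subset B$ where the four Reeb sections of $\cE$ are linearly independent, and taking $\cU=\pi_\bS^{-1}(\cV)$.

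Openness of $\cV$ is immediate; density is the one genuinely new point, and I expect it to be the main obstacle. Unlike the two-dimensional situation of \cite{LiMa}, and unlike Proposition~\ref{Fialpha}(\ref{linind}) which asserts pointwise independence of the Reeb fields directly, Lemma~\ref{HyperForms}(3) only gives transversality to the vertical bundle, and the dual-basis identity of Remark~\ref{nodual} is unavailable here. I would prove density by a local computation in a quaternionic coordinate chart: form the $4\times4$ matrix whose rows are the $\cE$-components of $\fR_\alpha,\fR_{\alpha_I},\fR_{\alpha_J},\fR_{\alpha_K}$ (computed from $\lambda,\lambda_I,\lambda_J,\lambda_K$ as in the proof of Proposition~\ref{Fialpha}), note that its determinant is a smooth function on $M$ (real-analytic if $B$ is), and check at one point where $I,J,K$ are in standard form that it does not vanish; its zero set is then nowhere dense, so $\cV$ is open and dense. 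Alternatively, and avoiding the obstacle entirely, one may replace the $\fR$'s by the dual basis $\{\fV_i\}$ of $\{\alpha,\alpha_I,\alpha_J,\alpha_K\}$ as in \eqref{LambdaEdef2}; since $\{\alpha,\alpha_I,\alpha_J,\alpha_K\}$ is independent on all of $M$ by Lemma~\ref{HyperForms}(1), the frame $\{\fV_i,\alpha,\alpha_I,\alpha_J,\alpha_K\}$ is independent everywhere and one may take $\cU=M$. Either way $\Lambda$ is a framed lattice bundle structure on the rank-$8$ bundle $\cE\oplus\cE^\vee$, which is the assertion.
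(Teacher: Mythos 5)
Your proposal takes essentially the same route as the paper, whose entire proof of this corollary is the single sentence that it is the same as the proof of Lemma~\ref{lemLambda}; your construction of $\Lambda=\Lambda_\cE\oplus\Lambda_{\cE^\vee}$ from the four contact forms and their Reeb fields is exactly what is intended there. The one place where you go beyond the paper is in flagging that Lemma~\ref{HyperForms}(3) only asserts ${\rm span}\{\fR_\alpha,\fR_{\alpha_I},\fR_{\alpha_J},\fR_{\alpha_K}\}\cap\Gamma(V)=\{0\}$ rather than pointwise linear independence of the four Reeb fields (the analogue of Proposition~\ref{Fialpha}, part~(\ref{linind})), so the existence of a dense open $\cU$ on which $\Lambda_m$ has rank $8$ is not literally supplied by the cited lemma; this is a real omission in the paper's one-line proof, and either of your repairs --- the local determinant computation, or switching to the dual-basis frame of \eqref{LambdaEdef2}, which works on all of $M$ because Lemma~\ref{HyperForms}(1) gives everywhere-independence of the covectors --- closes it.
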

The proof of Corollary \ref{LatticeHyper} is the same as the proof of Lemma \ref{lemLambda} . Finally, the following statement follows directly from Lemma \ref{HyperForms} and Corollary \ref{LatticeHyper}.
 \begin{prop}\label{GaborsysHyper}
Let $(B,(I,J,K)$ be a $4$-dimensional compact hypercomplex manifold. The collection
$\alpha, \alpha_I, \alpha_J, \alpha_K$ of contact $1$-forms on $M=\bS(T^*B)$ 
and the function $\Phi$ of \eqref{PhiTB} determine a Gabor system 
$\cG(\Psi,\Lambda)$  on $B$, with window function $\Psi$ as in \eqref{WindowFunction}
and with lattice bundle structure $\Lambda$ on $\cE\oplus \cE^\vee$ 
as in Corollary~\ref{LatticeHyper} (with either the Reeb
lattice bundle structure of \eqref{LambdaEdef} or the dual-basis 
lattice bundle structure of \eqref{LambdaEdef2}).
\end{prop}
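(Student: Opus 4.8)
The plan is to run the proof of Proposition~\ref{GaborsysELambda} essentially verbatim, now feeding it the hypercomplex data in place of the generic rotation data. First I would invoke Lemma~\ref{Ewindow}: the symmetric positive definite tensor field $A$ on $B$ produces, through \eqref{PhiTB}, a smooth function $\Phi$ on $TB\oplus T^*B$, and its restriction to $TB\times\bS(T^*B)$ is a smooth window function $\Psi:\cE\to\R$ of the form \eqref{WindowFunction}, namely $\Psi_{(b,p)}(V)=\exp(-V^tA_bV-i\langle\eta_p,V\rangle_b)$. Since the eigenvalues of $A_b$ are uniformly bounded away from zero, $\Psi$ is of Gaussian, hence rapid, decay in the fiber directions $V\in\cE_{(b,p)}$, uniformly in $(b,p)\in M$. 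Nothing in this step uses the hypercomplex structure, so it transfers unchanged.

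Second, I would apply Corollary~\ref{LatticeHyper}, whose hypotheses are precisely the properties recorded in Lemma~\ref{HyperForms}: that $\alpha,\alpha_I,\alpha_J,\alpha_K$ are contact $1$-forms on $M=\bS(T^*B)$, that they are pointwise linearly independent in $T^*_{(b,[p])}M$, and that the associated Reeb fields $\fR_{\alpha},\fR_{\alpha_I},\fR_{\alpha_J},\fR_{\alpha_K}$ meet the vertical tangent bundle $V$ only in $0$. The corollary then furnishes a framed lattice bundle structure $\Lambda\subset\cE\oplus\cE^\vee$ of rank $8$, in either the Reeb version \eqref{LambdaEdef}, with $\Lambda_\cE=\mathrm{span}_\Z\{\fR_{\alpha},\fR_{\alpha_I},\fR_{\alpha_J},\fR_{\alpha_K}\}$, or the dual-basis version \eqref{LambdaEdef2}, with $\Lambda_\cE$ spanned over $\Z$ by the fiberwise dual basis of $\alpha,\alpha_I,\alpha_J,\alpha_K$, exactly as in Lemma~\ref{lemLambda}.

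Third and last, I would quote Definition~\ref{defGaborE}: a smooth window function $\Psi:\cE\to\R$ of rapid decay in the fiber directions, together with a lattice bundle $\Lambda$ over $M$, produce by definition the Gabor system $\cG(\Psi,\Lambda)=\{M_{\lambda'}T_\lambda\Psi\mid(\lambda,\lambda')\in\Gamma(M,\Lambda)\}$, which is precisely the assertion of the proposition.

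I do not expect a genuine obstacle here: the statement is a bookkeeping corollary, and the substantive work has already been done in Lemma~\ref{HyperForms} --- checking the contact condition via the integrability of $I,J,K$ and the quaternionic relations, together with the Legendrian and transversality conditions --- and in Corollary~\ref{LatticeHyper}. The only point worth a remark is the genericity clause in Definition~\ref{latticebdle}: by Lemma~\ref{HyperForms}(1) the forms $\alpha,\alpha_I,\alpha_J,\alpha_K$ are linearly independent on all of $M$, so the dual-basis lattice bundle has full rank $8$ everywhere, while for the Reeb version one restricts, as in Lemma~\ref{lemLambda}, to the dense open set $\cU$ where the Reeb fields are non-degenerate; either way the hypotheses of Definition~\ref{latticebdle} are met without extra assumptions on $B$. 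One should also keep the dimensions straight: for a $4$-dimensional hypercomplex $B$ one has $\dim M=7$ and $\mathrm{rank}(\cE\oplus\cE^\vee)=8$, matching the four contact forms and their four Reeb fields.
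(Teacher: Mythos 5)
Your proposal is correct and follows essentially the same route as the paper, which gives no separate argument but states that the proposition follows directly from Lemma~\ref{HyperForms} and Corollary~\ref{LatticeHyper}, exactly as Proposition~\ref{GaborsysELambda} follows from Lemma~\ref{lemLambda} and Lemma~\ref{Ewindow}. Your additional remarks on the genericity clause of Definition~\ref{latticebdle} and the dimension count are accurate and, if anything, slightly more careful than the paper's own treatment.
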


\section{Applications: configuration spaces in robotics} \label{ApplSec}

In this section we describe an application of the construction described in section \ref{HigherSignal} in configuration spaces of robotic motion with environmental constraints. Consider a mechanism whose possible movements in the ambient $3$-dimensional
space are parameterized by a configuration space $\cM(R)$, which is a manifold of some higher dimension $N=\dim \cM(R)$. The configuration space represents all the possible positions the mechanisms can take, dictated by its structure. However the motion of the mechanism can be limited by environmental constraints/ obstacles which can be described as subsets of the configuration manifold. More generally, we think of a constraint according to the following definition. 

\begin{defn}
  Let $\cL_{\cM (\cR)}$ be the Lebesgue $\sigma -$algebra of $\cM (\cR)$ and $g$ a Riemannian metric on $\cM (\cR)$. A constraint on the configuration space is a probability measure 
   $$\mu: \cL_{\cM (\cR)}\rightarrow [0,1]$$
   which is absolutely continuous with respect to the volume measure $dvol_g$. 
\end{defn}
This definition allows us to consider the case of soft constraints as well. By soft constraints we mean giving a degree of preference for certain motions and
configurations over others, for example for the purpose of motion planning, instead of realizing rigid constraints which
exclude parts of the configuration space. 
A constraint on the configuration space can be identified with its density function. For simplicity we will use the letter $\mu$ for the density function $\mu: \cM(\cR)\rightarrow \bR_+$  of a constraint $\mu$.

\begin{prop}
Let $\cM_\ell(\cR)$ be the configuration space of a robotic  arm $\cR$ and $U\subset \cM_\ell (\cR)$ a relatively compact subset of configurations that are not attainable and its boundary $\partial U= \Sigma$ is smooth. For fixed $b\in \Sigma\subset B$ and for $\mu: \cM_{\ell}(\cR)\rightarrow \bR_+ $ the density function of the constraint indicating the constraints,
the output function $\cO_b(\mu,\eta_p)$ has a local maximum
for $p\in \bS^{n-1}$ the normal vector $\nu_b(\Sigma)$ at $b$ to the boundary 
hypersurface $\Sigma=\partial U$,
$$ {\rm argmax}^{loc}_{p\in \bS^{n-1}} \cO_b(\mu, \eta_p) = \nu_b(\Sigma) \, . $$  
\end{prop}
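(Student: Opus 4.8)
The plan is to reduce the statement to the boundary detection property already established in Proposition~\ref{nBoundary}, applied with the base manifold taken to be $B=\cM_\ell(\cR)$ equipped with its Riemannian metric $g$. First I would observe that, since $U$ is the set of unattainable configurations, any constraint density $\mu:\cM_\ell(\cR)\to\bR_+$ vanishes identically on $U$; hence $\mu$ exhibits a jump across $\Sigma=\partial U$ (in the sharp-constraint case $\mu$ is a constant multiple of $\chi_{\cM_\ell(\cR)\setminus U}$), and $\mu$ is essentially bounded with respect to $dvol_g$, so by Lemma~\ref{MsignalEsignal} it determines a lift $\cI(\mu)\in L^2(\cE,\bR)$ with $\cI(\mu)|_{\cE_{(b,p)}}=\chi\cdot\mu\circ\exp\circ h$. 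This puts us exactly in the situation treated by the output function $\cO_b(\mu,\eta_p)$ of \eqref{outputO}.

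Next I would run the gradient computation of Proposition~\ref{nBoundary} with $f$ replaced by $\mu$: since the dependence of $\Psi_{(b,p)}$ on $\eta_p$ enters only through the phase $\langle\eta_p,V\rangle_b$, differentiating $\cO_b(\mu,\eta_p)$ with respect to the metric $g^*_{(b,p)}$ on $\cE^\vee_{(b,p)}$ gives
\[
\nabla^{g^*_{(b,p)}}\cO_b(\mu,\eta_p)=\int_{\cE_{(b,p)}}\chi\cdot\mu\circ\exp\circ h(V)\cdot i\big(\nabla^{g^*_{(b,p)}}\langle\eta_p,V\rangle_b\big)\,\Psi_{(b,p)}(V)\,dV .
\]
Because $\mu$ vanishes on $U$ and $b\in\Sigma$, the integrand is supported, up to the rapidly decaying cutoff $\chi$, on $h^{-1}\big(\exp_b^{-1}(\cM_\ell(\cR)\setminus U)\big)$; as $(d\exp_b)_0=\mathrm{id}$ and $\Sigma$ is a smooth hypersurface, $\exp_b^{-1}(\Sigma)$ is tangent to $T_b\Sigma$ at $0$, so near $0\in\cE_{(b,p)}\simeq T_bB$ this support is the half-space bounded by the hyperplane $T_b\Sigma$. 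The components of the integral along $T_b\Sigma$ vanish by the reflection symmetry of this half-space about its normal axis precisely when $\eta_p\perp T_b\Sigma$, so $\eta_p=\nu_b(\Sigma)$ is a critical point of $\cO_b(\mu,\cdot)$ on $\bS^{n-1}$. To identify it as a \emph{local maximum} I would use that $\eta_p$ is constrained to the unit sphere together with the Gaussian factor $e^{-V^tA_bV}$: moving $\eta_p$ off the normal direction introduces components tangent to $\Sigma$, along which the oscillation $e^{-i\langle\eta_p,V\rangle_b}$ produces cancellation over the full (two-sided) width of the half-space, forcing $|\cO_b(\mu,\eta_p)|$ to decrease, so that on $\bS^{n-1}$ the modulus is locally largest at $\eta_p=\nu_b(\Sigma)$.

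I expect the main obstacle to be exactly this last step, turning the algebraic vanishing of the tangential gradient into a genuine local-maximum statement: this requires controlling the Hessian of $\eta_p\mapsto|\cO_b(\mu,\eta_p)|$ on $\bS^{n-1}$, i.e. quantifying the competition between the tangential Gaussian decay and the one-dimensional Fresnel-type integral in the normal direction, and it is the unit-sphere constraint on $\eta_p$ that tips this competition decisively toward the normal. A secondary technical point, needed for a genuinely soft constraint, is that when $\mu$ is a smooth function changing only rapidly (rather than discontinuously) across $\Sigma$, the clean half-space support above must be replaced by a large-gradient estimate: one splits $\mu$ into the part carrying the steep transition across $\Sigma$ plus a slowly varying remainder whose contribution to $\nabla^{g^*}\cO_b$ near $b$ is negligible, which is where the "smooth approximation to a jump discontinuity" hypothesis of Problem~\ref{Problem} enters.
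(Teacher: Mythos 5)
Your proposal follows essentially the same route as the paper: both reduce the statement to the gradient computation of Proposition~\ref{nBoundary}, differentiating $\cO_b(\mu,\eta_p)$ in $\eta_p$ with respect to $g^*_{(b,p)}$ and locating the critical point at the normal direction. Where you differ is in execution, and on each point of difference you are actually the more careful of the two. First, you correctly observe that since $U$ is the \emph{unattainable} set, $\mu$ vanishes on $U$ and the lifted signal is supported (up to the cutoff $\chi$) over the complement; the paper's proof instead asserts that $\cI(\mu)$ is nonzero only on $h^{-1}(TU)$ and integrates over $h^{-1}(T_bU)$, which has the support on the wrong side of $\Sigma$ (harmless for the conclusion, since only the boundary hyperplane matters, but a genuine slip). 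Second, to establish criticality you invoke the reflection symmetry of the half-space $\exp_b^{-1}(\cM_\ell(\cR)\setminus U)$ about the normal axis, whereas the paper argues that the integral vanishes ``exactly when'' the integrand factor $\nabla^{g^*_{(b,p)}}\langle\eta_p,V\rangle_b$ vanishes pointwise for every $V$ — a non sequitur, since a vanishing integral does not force a vanishing integrand; your symmetry argument is the sounder version of this step. Third, the gap you flag honestly — that vanishing of the (spherical) gradient gives only a critical point, and upgrading this to a local maximum of $\cO_b(\mu,\cdot)$ on $\bS^{n-1}$ requires Hessian control of the competition between tangential Gaussian decay and the oscillatory integral in the normal direction — is a gap in the paper's own proof as well: the paper supplies no second-order argument either here or in Proposition~\ref{nBoundary}. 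Likewise your remark about genuinely soft constraints (smooth $\mu$ with a steep transition rather than a jump) addresses a case the paper's proof does not engage with; the paper only treats the sharp case via the accompanying remark. In short, your proposal is at least as complete as the paper's argument and repairs two of its weaker steps, while correctly identifying that the local-maximum claim itself is left at the level of a plausibility argument in both treatments.
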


\begin{proof}
The output function $\mathcal{O}(\eta_p)$ has a local maximum when the gradient with respect to the metric ${g^*_F}_{(b,p)}$ on the fibers of $\cE$, $\nabla^{g^*_{(b,p)}}\mathcal{O}(\eta_p)$, is zero. 
Since $\mu$ is compactly supported, the lifted signal $\cI(\mu): \cE\rightarrow \bR_+$ is non-zero only on $h^{-1}(TU)$, then it follows that  

\begin{align*}
    \nabla^{g^*_{(b,p)}} \cO_b(\mathcal{I}(f), \eta_p)= \int_{h^{-1}(T_b U)} I(f)|_{h^{-1}(T_b U)}(V) i(\nabla^{g^*_{(b,p)}} \langle\eta_p,V\rangle_b) \Psi_{(b,p)}(V) \, \, dV.
\end{align*}
Thus, the gradient $\nabla^{g^*_{(b,p)}} \cO_b(\mathcal{I}(f), \eta_p)$ is equal to zero exactly when $$\nabla^{g^*_{(b,p)}}\langle\eta_p,V\rangle_b=0, \text{ for every } V\in h^{-1}(T_b U),$$
which holds exactly when $\langle\eta_p,V\rangle_b=0$ for every $V\in h^{-1}(T_b U)$, hence $\eta_p$ is normal to $\Sigma$, since $V\in T_b B$ at $b$.
\end{proof}

\begin{rem} {\rm 
If the constraint is sharp, namely $\mu=\chi_{U}$ the previous proposition follows directly from Proposition \ref{nBoundary}. }
\end{rem}

\begin{ex}[Planar Robotic Arm] {\rm 
Consider $\cR$ to be a robot arm, that is a collection of n-bars of given lengths $\ell_1,...,\ell_n $ connected with each other with joints that allow full rotation and we will denote the space of configurations of $\cR$ by $\cM_{\ell}(\cR)$.  The simplest restriction that one can impose to the motion of the robotic arm is to assume that the first joint is attached to the origin. If we allow the arm to intersect itself, then the configuration space is the n-torus,
\begin{align*}
    \cM_{\ell}(\cR)= \bS^1\times...\times \bS^1 \subset \bC^n .
\end{align*}
with coordinate charts given by the angle functions $\theta_i\mapsto e^{i\theta_i}, 0<\theta_i<2\pi $ for $i=1,..,n$.
The workspace of $\cR$ is the variety of positions on the end-point of the arm denoted as $W$ and the robot arm workspace map is the map 
\begin{align*}
  f_\cR: \cM_{\ell}(\cR)&\rightarrow W\\
  (\theta_1,..,\theta_n)&\mapsto \ell_1 \theta_1+...+\ell_n\theta_n .
\end{align*}
Suppose that the planar arm has two bars of the same length $\ell_1=\ell_2$, then the preimage $f_\cR^{-1}(0)$ is the anti-diagonal $\Delta^*=\{(\theta_1,\theta_2)\in \cM_\ell (\cR): \theta_1=-\theta_2\}$. If we consider $U$ to be $\cM_{\ell}(\cR)/\Delta^*$ and $\mu=\chi_U$, then it follows by Theorem \ref{nBoundary} that the filters 
\begin{equation*}
    \Psi_{(b,p)}(V)=exp\big(-V^tA_b V-i \langle \eta_p,V \rangle_b \big) \, , ~b\in \cM_\ell (\cR) \text{ and } p\in \bS(T^*_b \cM_\ell (\cR) )   ,
\end{equation*}
detect the configurations that make the robotic arm to fold in half.  }
\end{ex}

\bigskip
\bigskip

\subsection*{Acknowledgment} This work was supported by NSF grant DMS-2104330, 
and FQXi grants FQXi-RFP-1-804 and FQXI-RFP-CPW-2014. The authors thank Boris
Khesin, Yael Karshon and Giovanna Citti for helpful discussions.

\end{document}